\DeclareRobustCommand{\colon}{\nobreak\mskip2mu\mathpunct{}\nonscript
  \mkern-\thinmuskip{\ordinarycolon}\mskip6muplus1mu\relax}
\renewcommand{\eqref}[1]{(\ref{#1})}
\tikzset{%
    myLine/.style={line width=#1*0.89pt},
    myLine/.default=1,
}
\let\Delta\varDelta
\let\Phi\varPhi
\let\Psi\varPsi
\let\Lambda\varLambda
\let\Gamma\varGamma
\let\Pi\varPi
\let\Omega\varOmega
\newcommand{\statement}[1]{\paragraph{#1}\pdfbookmark[1]{#1}{#1}} 
\definecolor{myColor}{HTML}{887733}
\newcommand{\arrow@text}[1]{\scriptstyle\mathrm{#1}}
\newcommand{\labeledRightArrow}[1]{ %
    \mathrel{ 
        \stackon[-3.5pt]{ 
            \xrightarrow{\hphantom{\arrow@text{#1}}} 
        }{ %
            \arrow@text{#1\;} %
        } %
    } %
}
\addcolon\linebreak[2]#1}}
\def\assumptionthm@envname{assumptionthm}
\declaretheorem[
    title=Proposition,%
    parent=section,%
    style=plain,%
]{prop}
\declaretheorem[
    title=Theorem,%
    sibling=prop,%
    style=plain,%
]{theorem}
\declaretheorem[
    title=Lemma,%
    sibling=prop,%
    style=plain,%
]{lem}
\declaretheorem[
    title=Remark,%
    sibling=prop,%
    style=definition,%
]{rmk}
\declaretheorem[
    title=Main Result,%
    numbered=no,%
    style=plain,%
]{mainres}
\def\thmt@headstyle@assumption{%
    \ABBREVIATION\ \LONGNAME%
}
\declaretheoremstyle[
    headfont=\normalfont\bfseries,%
    notefont=\normalfont\bfseries,%
    notebraces={}{},%
    headpunct={},%
    headformat=assumption,%
]{assumption}
\declaretheorem[
    name=Assupmtion,%
    numbered=no,%
    style=assumption,%
    qed=\(\diamond\),%
]{assumptionthm}
\newenvironment{assumption}[2]{%
    \def\LONGNAME{#2}%
    \def\ABBREVIATION{#1}%
    \begin{assumptionthm}[%
        name=\upshape #1,%
    ]%
}{
    \end{assumptionthm}
}
\DeclarePairedDelimiter{\intervaloo}{\lparen}{\rparen}
\DeclarePairedDelimiter{\intervaloc}{\lparen}{\rbrack}
\DeclarePairedDelimiter{\intervalcc}{\lbrack}{\rbrack}
\DeclarePairedDelimiter{\abs}{\lvert}{\rvert}
\DeclarePairedDelimiter{\paren}{\lparen}{\rparen}
\DeclarePairedDelimiter{\norm}{\lVert}{\rVert}
\DeclarePairedDelimiter{\Norm}{\lVert}{\rVert}
\DeclarePairedDelimiter{\commutator}{\lbrack}{\rbrack}
\DeclarePairedDelimiter{\List}{\{}{\}}
\DeclarePairedDelimiter{\Ab}{\llbracket}{\rrbracket}
\DeclarePairedDelimiter{\braket}{\langle}{\rangle}
\DeclarePairedDelimiter{\expectation}{\langle}{\rangle}
\DeclarePairedDelimiter{\bra}{\langle}{\rvert}
\DeclarePairedDelimiter{\ket}{\lvert}{\rangle}
\DeclarePairedDelimiterXPP{\dist}[1]{d}{\lparen}{\rparen}{}{#1}
\newcommand\cmeansym{\mathbb{E}}
\DeclarePairedDelimiterXPP{\cmean}[2]{\cmeansym_{#1}}{\lparen}{\rparen}{}{#2}
\DeclareDocumentCommand{\trace}{s o e{_} m}{
    \IfValueTF{#3}
        {\tr_{#3}}
        {\tr}%
    \IfBooleanTF{#1}
        {\paren*{#4}}
        {
            \IfValueTF{#2}
                {\paren[#2]{#4}}
                {\paren{#4}}%
        }%
}
\providecommand\given{}
\newcommand\SetSymbol[1][]{%
    \nonscript\,#1\vert
    \allowbreak
    \nonscript\,
    \mathopen{}}
\DeclarePairedDelimiterX\Set[1]\{\}{%
    \renewcommand\given{%
        \SetSymbol[\delimsize]}
    \nonscript\,
    #1
    \nonscript\,
}
\DeclarePairedDelimiterXPP{\pdist}[1]{\scale_make_bigger_l:N\delimsize\lparen d}{\lparen}{\rparen}{\scale_make_bigger_r:N\delimsize\rparen}{#1}
\DeclarePairedDelimiterXPP{\pdiam}[1]{\scale_make_bigger_l:N\delimsize\lparen \diam}{\lparen}{\rparen}{\scale_make_bigger_r:N\delimsize\rparen}{#1}
\newcommand{\mathup}[1]{\mathrm{#1}}
\newcommand{\quadtext}[1]{\quad\text{#1}\quad}
\newcommand{\qquadtext}[1]{\quad\quadtext{#1}\quad}
\newcommand{\Alignindent}{\hspace*{2em}&\hspace*{-2em}}
\newcommand{\alignindent}{\hspace{-1em}}
\newcommand{\sumstack}[2][]{\ifstrempty{#1}{\sum_{\substack{#2}}}{\smashoperator[#1]{\sum_{\substack{#2}}}}}
\newcommand{\colonpunct}{\mathpunct{:}}
\DeclareMathOperator{\diam}{diam}
\DeclareMathOperator{\rank}{rank}
\DeclareMathOperator{\support}{supp}
\DeclareMathOperator{\spectrum}{spec}
\let\Gammar\Gamma
\renewcommand{\Gamma}{{\mathbb{Z}^d}}
\newcommand{\epsi}{\varepsilon}
\newcommand{\E}{\mathrm{e}}
\newcommand{\e}{\E}
\newcommand{\I}{\mathrm{i}}
\newcommand{\evol}[2]{\E^{#1}\,#2\,\E^{-#1}}
\newcommand{\evoli}[2]{\E^{-#1}\,#2\,\E^{#1}}
\newcommand{\N}{ \mathbb{N} }
\newcommand{\Z}{ \mathbb{Z} }
\newcommand{\R}{ \mathbb{R} }
\newcommand{\C}{ \mathbb{C} }
\newcommand{\HSpace}{\mathcal{H}}
\newcommand{\calL}{\mathcal{L}}
\newcommand{\calI}{\mathcal{I}}
\newcommand{\calJ}{\mathcal{J}}
\newcommand{\unit}{\mathbf{1}}
\newcommand{\Alg}{\mathcal{A}}
\newcommand{\alg}{\Alg}
\newcommand{\D}{\mathrm{d}}
\DeclareMathOperator{\adjoint}{ad}
\DeclareMathOperator{\tr}{tr}
\newcommand{\interpunct}{\,\cdot\,}
\newcommand{\vts}{v \, \abs{t-s}}
\newcommand{\Wf}{\mathcal{W}}
\newcommand*{\Rt}{\tilde R}
\newcommand{\Cvol}{C_{\mathup{vol}}}
\newcommand{\Cint}{C_{\mathup{int}}}
\newcommand{\Ccod}[1]{\mathcal{C}_{\mathup{vol},#1}}
\newcommand{\gap}{\mathup{gap}}
\newcommand{\Lambdag}{{\Lambda^{\mathup{gap}}}}
\newcommand{\Lambdap}{{\Lambda^{\mathup{pert}}}}
\newcommand{\ee}{^{\varepsilon,\eta}}
\newcommand{\LR}{\mathup{LR}}
\newcommand{\vertiii}[1]{{\left\vert\kern-0.25ex\left\vert\kern-0.25ex\left\vert #1 
        \right\vert\kern-0.25ex\right\vert\kern-0.25ex\right\vert}}
\title{Response theory for locally gapped systems}
\author{
    Joscha Henheik%
    \texorpdfstring{%
        \,\orcidlink{0000-0003-1106-327X}
        \footnote{
        Institute of Science and Technology Austria (IST Austria),
        Am Campus 1,
        3400 Klosterneuburg,
        Austria.
        \mbox{Email}:~joscha.henheik@ist.ac.at}
    }{}%
    \and 
    Tom Wessel%
    \texorpdfstring{%
        \,\orcidlink{0000-0001-7593-0913}
        \footnote{
        Fachbereich Mathematik,
        Universität Tübingen,
        Auf der Morgenstelle 10,
        72076 Tübingen,
        Germany. 
        \mbox{Email}:~tom.wessel@uni-tuebingen.de}
    }{}%
}
\date{October 14, 2024}
\begin{document}

\maketitle

\begin{abstract}
    We introduce a notion of a \emph{local gap} for interacting many-body quantum lattice systems and prove the validity of response theory and Kubo's formula for localized perturbations in such settings.
    On a high level, our result shows that the usual spectral gap condition, concerning the system as a whole, is not a necessary condition for understanding local properties of the system.

    More precisely, we say that an equilibrium state \(\rho_0\) of a Hamiltonian \(H_0\) is locally gapped in \(\Lambdag \subset \Lambda\), whenever the Liouvillian \(- \I \, [H_0, \, \cdot \, ]\) is almost invertible on local observables supported in \(\Lambdag\) when tested in \(\rho_0\).
    To put this into context, we provide other alternative notions of a local gap and discuss their relations.

    The validity of response theory is based on the construction of \emph{non-equilibrium almost stationary states} (NEASSs).
    By controlling locality properties of the NEASS construction, we show that response theory holds to any order, whenever the perturbation \(\epsilon V\) acts in a region which is further than \(\abs{\log \epsilon}\) away from the non-gapped region \(\Lambda \setminus \Lambdag\).
\end{abstract}

\clearpage

\tableofcontents

\clearpage

\section{Introduction}
\label{sec:intro}

Spectral gaps lie at the heart of many areas of physics and mathematics, and their existence has several profound consequences.
Classical mathematical examples include (i)~Cheeger’s inequality~\cite{Cheeger1970, Mohar1989}, relating the spectral gap of the (discrete) Laplacian on a Riemannian manifold (a graph) to geometric properties of the underlying space, (ii)~exponential heat kernel estimates~\cite{BGV2003}, (iii)~classical Poincaré or Nash inequalities~\cite{LL2001}, or (iv)~the logarithmic Sobolev inequality and hypercontractivity for Markov semigroups~\cite{Gross1975a, Gross1975b, BE1985}.

In physics, questions concerning the spectral gap lie at the center of several of the most challenging problems, e.g.~the Berry-Tabor~\cite{BT1977} and Bohigas-Giannoni-Schmit~\cite{BGS1984} conjectures in quantum chaos, Haldane's conjecture on integer valued antiferromagnetic Heisenberg chains~\cite{Haldane1983a, Haldane1983b}, or the Yang-Mills mass gap problem.
The great interest in spectral gaps, far beyond these famous conjectures, is rooted in the fact that its existence has tremendous effects on fundamental properties of the system: It is textbook material in condensed matter theory, that the (non)-existence of a band gap determines the isolating (conducting) behavior of a material, and in superconductivity, the existence of a non-zero solution of the BCS gap equation~\cite{BCS1957} decides whether a system is superconducting. 

Moreover, in quantum many-body physics, the existence of a spectral gap above the ground state eigenvalue has far-reaching consequences for entanglement properties~\cite{NS2006, Hastings2007} and ground state correlations~\cite{HK2006}.
On the other hand, the \emph{closing} of a spectral gap is related to the occurrence of a (topological) quantum phase transition~\cite{BMNS2012, NSY2019}.
Finally, the assumption of a spectral gap played a crucial role in recent proofs of adiabatic theory and linear response for many-body systems (see Section~\ref{sec:response-theory} for a detailed discussion).

However, an overall drawback in all of the above examples is that the desired spectral property is of \emph{global} nature, i.e.~involving the studied system as a whole, and thus seldomly compatible with a notion of \emph{locality} in an underlying physical space.
Therefore, a natural question to ask is:
\begin{quote}
    How can one express that a system is \emph{locally} gapped, and which consequences that one has for globally gapped systems persist?
\end{quote}

In this paper, we study this question in the setting of locally interacting many-body quantum spin lattice systems; see Section~\ref{sec:setup} for precise definitions.
More precisely, in the above spirit of our guiding question, this paper has two main goals:
\begin{itemize}
    \item[(i)] We propose a notion of a \emph{local gap} via an (equivalent) dynamical characterization and exemplary prove that local perturbations of Hamiltonians with a frustration free product ground state satisfy this condition.
        Moreover, we study possible alternative notions of local gaps and their relations among each other (see Sections~\ref{subsec:LDG} and~\ref{sec:localgap}).
    \item[(ii)] As an application to a physically relevant problem, we show that for Hamiltonians with a local gap, response theory approximately holds to any order and thus justify \emph{Kubo's formula} (see Sections~\ref{subsec:discMR} and~\ref{sec:response-theory-main}).
\end{itemize}

There are only few works in the literature studying many-body quantum systems under a non-standard gap condition, i.e.~one differing from the clean separation of eigenvalues: In~\cite{DEF2024}, the authors derive Kubo's formula for two-dimensional disordered systems having only a \emph{mobility gap} (cf.~also the recent paper~\cite{DGHP24} dealing with spectral gaps in presence of disorder).
Together with Teufel, one of us in~\cite[Theorem~4.1]{HT2020bulk}, proved that finite systems, whose analog in the thermodynamic limit has a usual spectral gap, approximately obey an adiabatic theorem.
In another recent work~\cite{YSL24}, the authors developed a theory of metastable states, characterized by the requirement that local operators raise the energy of such a state by a certain minimal amount (their condition is similar to an alternative notion of a local gap given in~\eqref{eq:GAPdecay} below).
Finally, we remark that, in the context of Lie group theory, the notion of a “local gap” has recently been introduced~\cite{BIG2017} and proved itself to have profound consequences~\cite{BIG2017, BI2018}.

Next, in Section~\ref{sec:response-theory}, we discuss the problem of justifying linear (and higher order) response theory and Kubo’s formula based on adiabatic theory.
Afterwards, in Section~\ref{subsec:LDG}, we introduce our \emph{local dynamical gap condition} \nameref{ass:localGAP}.
Finally, in Section~\ref{subsec:discMR}, we discuss our main result on response theory.

\subsection{Response theory in many-body quantum systems}
\label{sec:response-theory}
The purpose of response theory is to express how quantum expectation values change, after a small perturbation is slowly turned on. 
More precisely, one considers an unperturbed Hamiltonian \(H_0\) with equilibrium state (usually a ground state) \(\rho_{0}\) and slowly turns on a small additive perturbation \(\epsi V\). Denoting by \(\rho_\epsi\) the state of the system after \(\epsi V\) has been turned on, one aims to understand, how the expectation value of an observable \(B\) changes, i.e.~determine
\begin{equation}
    \label{eq:introresponse}
    \expectation{B}_{\rho_\epsi} - \expectation{B}_{\rho_0}
    =
    \epsi \, \sigma_B + o(\epsi)
\end{equation}
at least to leading order in the strength \(\epsi\). 

A central piece of response theory is \emph{Kubo's formula}~\cite{Kubo1957}, which provides a simple expression for the so-called linear response coefficient \(\sigma_B\) in~\eqref{eq:introresponse}.
Despite the simplicity and empirical success of Kubo's formula, the problem of justifying it in a very general framework has so far escaped rigorous treatment.
The fundamental difficulty lies in the fact that, in general, the state \(\rho_\epsi\) is no longer an equilibrium state, and therefore determining it is outside the powerful realm of equilibrium statistical mechanics.
This problem has prominently been pointed out by Simon~\cite{Simon1984} in 1984 in his “Fifteen problems in mathematical physics”, containing a rigorous justification of Kubo's formula from first principles as problem (4B).

However, in the particular setting of many-body lattice systems with a spectral gap at zero temperature, it has recently been possible~\cite{BRF2018, MT2019, Teufel2020, HT2020} to actually prove Kubo's formula and justify the applicability of linear response theory to compute the change in expectation values~\eqref{eq:introresponse}.
The more general underlying results establish \emph{generalized super-adiabatic theorems}%
\footnote{%
    \label{ftn:superad}
    This term describes adiabatic theorems for time-dependent Hamiltonians of the form \(H_\epsi(t) = H_0(t) + \epsi V(t)\), where \(H_0(t)\) is assumed to have a spectral gap.
    Now “super-adiabatic” means that for \(\epsi = 0\), there exists a state \(\rho_0^\eta(t)\) close to the instantaneous ground state \(\rho_0(t)\) of \(H_0(t)\), such that the time-evolution generated by \(\eta^{-1}H_0(\interpunct)\) intertwines \(\rho_0^\eta(t_0)\) and \(\rho_0^\eta(t)\) to any order in \(\eta\).
    The term “generalized” means that, even for a gap-closing perturbation \(V\), there exist super-adiabatic non-equilibrium almost-stationary states (NEASSs) \(\Pi^{\epsi, \eta}(t)\), which are intertwined by the time evolution generated by \(\eta^{-1} H_\epsi(\interpunct)\) to any order in \(\epsi\) and \(\eta\).
    We refer to~\cite{MT2019, Teufel2020, HT2020unif, HT2020bulk, HW2022} for details (see also Sections~\ref{subsec:NEASS} and~\ref{subsec:adpert}).
} for short range interacting Hamiltonians, which can be written as a sum of local terms, and are hence called \emph{SLT operators}~\cite{HT2020unif, HT2020bulk, HW2022}.

The recent breakthrough, which paved the way for these results, was achieved by Bachmann, De Roeck, and Fraas~\cite{BRF2018} (see~\cite{MT2019} for an adaptation to fermionic systems), who proved the first adiabatic theorem for extended (but finite) quantum lattice systems.
One key difficulty is that, for macroscopic systems, typical operator norm bounds in adiabatic theory deteriorate due to the \emph{orthogonality catastrophe} and one instead has to formulate the result in a weaker topology by testing against local observables.
On a high level, the main ingredient for their proof are the well-known \emph{Lieb-Robinson bounds} (LRBs)~\cite{LR1972}, which ensure a finite speed of correlation and prevent build-up of long-range entanglement.
These LRBs, enabled to prove that the generator of the \emph{spectral flow}, introduced by \textcite{HW2005}, is in fact an SLT operator and hence maintains good locality properties~\cite{HW2005, BMNS2012} (showing so-called \emph{automorphic equivalence} of ground states).

However, the work~\cite{BRF2018} had the limitation that the spectral gap of \(H_0\) is assumed to remain open after adding the perturbation~\(\epsi V\).
To allow~\(\epsi V\) to close the gap, \textcite{Teufel2020} combined ideas from space-time adiabatic perturbation theory~\cite{PST2003, PST2003b} with locality estimates from~\cite{BRF2018}.
The underlying perturbative scheme is an iterative application of locality preserving Schrieffer-Wolff transformations (a.k.a.~Lie-Schwinger block diagonalization~\cite{FP2020}), which proved to be a powerful approach in several rather recent works in that direction~\cite{BRF2018, Teufel2020, HT2020unif, HT2020bulk, YL23, DGHP24, YSL24}.
In this paper, we carefully exploit locality properties of the operations involved in the perturbative scheme, which allows to deal with \emph{locally} gapped systems (see Section~\ref{subsec:LDG} below).

The rough physical picture underlying~\cite{Teufel2020, PST2003, PST2003b} is that a gap that is locally intact after adding perturbation should be sufficient for adiabatic theory to be valid (cf.~\cite[Figure~1]{HT2020}).
The results from~\cite{BRF2018, MT2019, Teufel2020} for large but finite systems were subsequently extended to the thermodynamic limit~\cite{HT2020unif, HT2020bulk}, building on an extension of the spectral flow techniques to infinite systems by \textcite{MO2020}.
We point out that, contrary to~\cite{HT2020unif}, the papers~\cite{MO2020, HT2020bulk} assumed a spectral gap \emph{only} for the GNS Hamiltonian of the infinite system (a \emph{gap in the bulk}).
More comprehensive reviews of the developments discussed in this section are given in~\cite{HT2020, HW2022}.

In view of the linear response problem and the second of our principal goals, the contribution of this paper is to extend the previous results for uniformly or bulk gapped systems to systems where \(H_0\) is locally gapped.
This important extension allows to rigorously treat systems with impurities of gap-closing edge modes (see the discussions in Sections~\ref{subsec:LDG}--\ref{subsec:discMR} below).
Technically, our contribution is to control operations on SLT operators, which are localized on a subregion of the whole lattice; see Section~\ref{subsec:interactions} and Appendix~\ref{app:technical}.

Lastly, we remark that we only consider finite-dimensional spaces and bounded operators for simplicity of the presentation.

\subsection{A local dynamical gap condition}
\label{subsec:LDG}
All the results on linear response and adiabatic theory mentioned in Section~\ref{sec:response-theory} above, heavily rely on the range of the equilibrium state \(\rho \equiv \rho_0\) being contained in a gapped part of the spectrum of the unperturbed Hamiltonian \(H \equiv H_0\).%
\footnote{%
    For ease of notation and since there will be no perturbation \(\epsi V\) in the current Section~\ref{subsec:LDG}, we will drop the subscript \(0\) here.
}
More precisely, assume
\begin{equation}
    \label{eq:globspecgap}
    \spectrum(H)
    =
    \sigma_1 \, \dot{\cup} \, \sigma_2
    \qquad \text{with}
    \qquad
    \dist{\sigma_1, \sigma_2} \ge g
\end{equation}
for some gap size \(g> 0\).
Then, denoting the spectral projection%
\footnote{%
    We will follow the convention that \emph{(orthogonal) projections} will be denoted by \(P\) (i.e.~satisfying \(P^2 = P\) and \(P^* = P\)), while \emph{states} are denoted by \(\rho\) (i.e.~satisfying \(\rho = \rho^*\) and \(0 \le \rho \le 1\) with \(\tr \rho = 1\)).
    Clearly, if \(P\) is an orthogonal projection, then \(\rho:= P/\dim \rank P\) is a state.
}
associated to \(H\) onto \(\sigma_1\) by \(P\), we have that \(P \rho P = \rho\).

On a technical level, in all of the works~\cite{BRF2018, MT2019, Teufel2020, HT2020unif, HT2020bulk}, the crucial importance of the gap of \(H\) lies in the \emph{local invertibility of the Liouvillian} \(\calL_{H}\Ab{\interpunct} := -\I \, [H, \interpunct]\).
That is, there exists a locality preserving (usually called \emph{quasi-local}) map \(\calI_{H, g} = \calI_{H, g}\Ab{\interpunct}\) on the observable algebra, depending on the SLT Hamiltonian \(H\) and the gap size \(g\), which inverts the Liouvillian in the projection \(\expectation{\interpunct}_{P} = \trace[\big]{P \interpunct}\)
onto the gapped part of \(H\).
More precisely (see Proposition~\ref{prop:GDG}), for all local observables \(A,B\) it holds that
\begin{equation}
    \label{eq:invertintro}
    \expectation[\big]{[\calL_{H} \circ \calI_{H, g} \Ab{A} - A , B ]}_{P}
    =
    0
    .
\end{equation}
Note that such a map cannot be uniquely characterized as a “weak” right inverse of \(\calL_{H}\), and is thus clearly not unique.
However, mostly for concreteness, we will always work with an \emph{explicitly} constructed~\cite{HW2005,BMNS2012} variant (see Remarks~\ref{rmk:weight} and~\ref{rmk:abstract} for relaxed, rather abstract conditions on \(\calI\)), denoted by
\begin{equation}
    \label{eq:invliouintro}
    \calI_{H, g}\Ab{A}
    :=
    \int_\R \D t \, w_g(t) \int_0^t \D s \, \e^{\I Hs}
    \, A \, \e^{-\I Hs}
    ,
\end{equation}
and henceforth called the \emph{quasi-local inverse of the Liouvillian}.
The positive weight function \(w_g \in L^1(\R)\), normalized to \(\int w_g = 1\), is required to have Fourier transform%
\footnote{%
    We use the convention that \(\widehat{f}(p) := (2 \pi)^{-1/2} \int_\R \D x \, \E^{- \I p x} f(x)\) for the Fourier transform.
} \(\widehat{w_g}\) with support
\begin{equation}
    \label{eq:Fouriercpct}
    \support(\widehat{w_g}) \subset [-g,g]
    .
\end{equation}
Moreover, for the explicitly constructed \(w_g\) (see~Lemma~\ref{lem:weightfunctions} in Appendix~\ref{app:inverse-liouvillian}), we additionally have the bound
\begin{equation}
    \label{eq:wbound}
    \abs{w_g(t)} \le C \, \e^{-\abs{t}^q} \quad \text{for all} \quad q<1
    .
\end{equation}
This estimate~\eqref{eq:wbound} together with classical Lieb-Robinson bounds~\cite{LR1972} for the dynamics generated by \(H\) ensure that \(\calI_{H, g}\) acts as a quasi-local operator.
In Appendix~\ref{app:inverse-liouvillian} we will briefly recall the construction of \(\calI_{H,g}\) and report on its properties in more detail.

\subsubsection{Dynamical characterization of a spectral gap} Interestingly, the Hamiltonian \(H\) having a spectral gap is actually \emph{equivalent} to the invertibility of the Liouvillian.
In particular, the spectral property~\eqref{eq:globspecgap} can be \emph{dynamically} characterized via~\eqref{eq:invertintro}.
A short proof of this fact is given in Section~\ref{subsec:dyncharGDG}.

\begin{prop}[Dynamical characterization of a spectral gap]
    \label{prop:GDG}
    Let \(H\) be a self-adjoint operator on a finite dimensional Hilbert space \(\HSpace\).
    Let \(g > 0\), \(w_g \in L^1(\R)\) be positive, normalized to \(\int w_g = 1\) and satisfy~\eqref{eq:Fouriercpct} with \(\widehat{w_g}\vert_{(-g,g)} > 0\).
    Decompose the spectrum of \(H\) as \(\spectrum(H) = \sigma_1 \, \dot{\cup} \, \sigma_2\) and let \(P\) be the spectral projection onto \(\sigma_1\).
    Then, denoting \(\calL_{H} \Ab{\interpunct} = -\I \, [H, \interpunct]\) and \(\calI_{H,g}\) as in~\eqref{eq:invliouintro}, we have that
    \begin{equation}
        \label{eq:dynchar}
        \expectation[\big]{
            [\calL_{H} \circ \calI_{H, g} \Ab{A} - A , B ]
        }_{P}
        =
        0
        \quad \forall A,B \in \mathcal{B}(\HSpace)
        \quad \iff \quad
        \dist{\sigma_1, \sigma_2} \ge g
        .
    \end{equation}
\end{prop}

The goal of this article is to relax the requirement of a globally spectrally gapped Hamiltonian~\(H\) and instead work with a so-called \emph{local dynamical gap condition} \nameref{ass:localGAP}.
This condition roughly asserts that, the Hamiltonian \(H\) behaves as if it had a gap in a spatially localized region, i.e.~that the Liouvillian can (almost) be locally inverted in that region.
A more formal version of \nameref{ass:localGAP} is formulated in Assumption \nameref{ass:localGAP_main}.

\begin{assumption}{(LDG\textsubscript{intro})}{Local dynamical gap condition -- informal version}
    \label{ass:localGAP}
    \noindent
    Let \(H\) be an SLT Hamiltonian and \(\rho\) an equilibrium state of \(H\), i.e.~\([H, \rho] = 0\).
    We say that \(\rho\) is \emph{locally dynamically gapped} of size at least \(g > 0\) in a region \(\Lambdag \subset \Lambda\) if and only if for all observables%
    \footnote{%
        Throughout this paper, \(\Alg_X\) denotes the algebra of observables with support in \(X\).
    }
    \(A \in \alg_X\) and \(B \in \alg_Y\) localized in \(X\subset \Lambda\) and \(Y \subset \Lambda\), it holds that
    \begin{equation}
        \label{eq:local gap}
        \abs[\Big]{\expectation[\big]{\commutator[\big]{\calL_{H} \circ \calI_{H,g}\Ab{A} - A , B}}_{\rho}}
        \leq
        \begin{multlined}[t]
            C \, \norm{A} \, \norm{B} \, \paren[\big]{\diam(X) + \diam(Y)}^\ell
            \\\times
            \exp\paren[\Big]{-\paren[\big]{\dist{X, \Lambda \setminus \Lambdag}+ \dist{Y, \Lambda \setminus \Lambdag}}^q}
        \end{multlined}
    \end{equation}
    for some fixed \(\ell \in \N_0\) and constants \(C\), \(q>0\), independent of the sizes \(\abs{\Lambda}\) and \(\abs{\Lambdag}\).
\end{assumption}

\begin{SCfigure}[1.2]
    \centering
    \begin{tikzpicture}[scale=.5, myLine]
        \newcommand{\pathL}{(-6,6) rectangle (6,-6)}
        \newcommand{\pathLo}{(-5,5) rectangle (5,-5)}
        \newcommand{\posX}{(-2.4,-2.5)}
        \newcommand{\pathX}[1]{plot[smooth cycle, tension=.8]%
            coordinates{+(0:#1+1.5) +(60:#1+1) +(120:#1+1.2) +(180:#1+.8) +(240:#1+1) +(300:#1+1.3)}}
        \newcommand{\pathY}{plot[smooth cycle, tension=0.8] coordinates{+(10:.6) +(95:2) +(175:1.6) +(270:1.8) +(325:2.5)}}

        \newcommand{\drawXYL}[1]{
            \draw \posX{} node {\(X\)} \pathX{0};
            \draw (2,1.5) +(225:.6) node {\(Y\)} \pathY;
            \draw \pathL node[anchor=south east] {#1};
        }

        \begin{scope}
            \begin{scope}[myColor]
                \begin{scope}[even odd rule, myLine=2, fill=myColor!25, draw=myColor]
                    \path[draw, fill] \pathLo (0,0) ;
                \end{scope}
            \end{scope}
            \begin{scope}[myLine=2, even odd rule]
                \clip \pathL \pathLo;
            \end{scope}
            \node[anchor=south east, myColor] at (5,-5) {\(\Lambda^{\mathrm{gap}}\)};
        \end{scope}
        \drawXYL{\(\Lambda\)}
    \end{tikzpicture}
    \caption{
        Illustrated is the local dynamical gap condition \nameref{ass:localGAP} in the case where the system is gapped in the bulk of \(\Lambda\), e.g., due to gap closing edge modes.
        If the observables \(A \in \Alg_X\) and \(B \in \Alg_Y\) are supported well inside \(\Lambda \setminus \Lambdag\), the rhs.~of~\eqref{eq:local gap} is small, i.e.~the Liouvillian is locally almost invertible.
    }
    \label{fig:edge}
\end{SCfigure}

In a nutshell, this means that, within \(\Lambdag\), the Hamiltonian \(H\) approximately behaves as if it were spectrally gapped -- up to an error vanishing (stretched) exponentially fast in the distance to \(\Lambda \setminus \Lambdag\).
On the physical level, one might think of \(\Lambda \setminus \Lambdag\) as some impurity region causing the global spectral gap to close, or the boundary of \(\Lambda\) and hence allowing for gap-closing edge modes (see Figure~\ref{fig:edge}).
Indeed, as we will show in Section~\ref{subsec:examples}, the local gap condition \nameref{ass:localGAP} is satisfied for ground states of locally in \(\Lambda \setminus \Lambdag\) (but arbitrarily strongly) perturbed Hamiltonians of certain quantum spin systems, which have a globally gapped ground state.

\subsubsection{Verifying the local dynamical gap condition}
Despite the supportive examples above, our concrete formulation of a local gap condition \nameref{ass:localGAP} might still seem a bit \emph{ad hoc} at the moment.
Therefore, we will outline several alternative ways to formulate such a condition and discuss their respective features and relations in Section~\ref{sec:localgap}.
In particular, in Proposition~\ref{prop:mechforlocalgap} we show the following (the constants \(q, C, \ell\) have the same meaning as in~\eqref{eq:local gap} but might take different values):
\begin{itemize}
    \item[(1)] Let the SLT Hamiltonian \(H\) (with equilibrium state \(\rho\)) be obtained from a globally gapped SLT Hamiltonian \(H_*\) (with equilibrium state \(\rho_*\)) by an SLT perturbation \(J\) localized in \(\Lambda\setminus \Lambdag\), i.e.~\(H = H_* + J\). Then, if the locally tested difference \(\rho- \rho_*\) is small in trace norm, i.e.
        \begin{equation*}
            \norm{(\rho- \rho_*) \, A}_{\tr} + \norm{A \, (\rho- \rho_*)}_{\tr}
            \le
            C \, \norm{A} \, \diam(X)^\ell \exp\paren[\big]{-\dist{X, \Lambda\setminus \Lambdag}^{q}}
            ,
        \end{equation*}
        then \(\rho\) is locally dynamically gapped (cf.~Proposition~\ref{prop:mechforlocalgap}~(i) and (vi)).
    \item[(2)] In the same setting as in (1), it holds that, whenever there exists a norm-preserving automorphism \(\tau\) on the observable algebra, i.e.~\(\expectation{\interpunct}_\rho = \expectation{\tau \Ab{\interpunct}}_{\rho_*}\), which satisfies
        \begin{equation*}
            \norm[\big]{( \tau - \unit )\Ab{A}}
            \le
            C \, \norm{A} \, \diam(X)^\ell \, \exp\paren[\big]{-\dist{X, \Lambda\setminus \Lambdag}^{q}}
            ,
        \end{equation*}
        then \(\rho\) is locally dynamically gapped (cf.~Proposition~\ref{prop:mechforlocalgap}~(vii)).
    \item[(3)] Let \(H\) be an SLT Hamiltonian and \(\rho = \ket{\psi} \bra{\psi}\) its pure \emph{product ground state}.
        Then, if one has an effective gap well inside \(\Lambdag\) of the form%
        \footnote{%
            For \(\Lambdag = \Lambda\), this condition for all observables, is actually equivalent to the usual spectral gap condition.
        }
        \begin{equation*}
            \I \, \expectation[\big]{A^* \calL\Ab{A}}_\rho
            \ge
            g
            \, \paren[\Big]{
                1- C \diam(X)^\ell \, \exp\paren[\big]{-\dist{X, \Lambda\setminus \Lambdag}^{q}}
            }
            \, \paren[\Big]{
                \expectation{A^* A}_\rho - \abs[\big]{\expectation{A}_\rho}^2
            }
            ,
        \end{equation*}
        for all \(A \in \Alg_X\), then \(\rho\) is locally dynamically gapped (cf.~Proposition~\ref{prop:mechforlocalgap}~(viii)).
\end{itemize}
Items (1) and (2) will be used in Section~\ref{subsec:frustfree}, to show that ground states of perturbations of gapped frustration free Hamiltonians have a local dynamical gap.

\subsection{Discussion of our main result}
\label{subsec:discMR}
We can now formulate an informal version of our main result as a showcase application of our local gap condition \nameref{ass:localGAP} to a physically relevant problem -- the validity of response theory.
In a nutshell, it says that, even after relaxing the usual condition of a global gap to \nameref{ass:localGAP}, we have \emph{response theory to all orders} for a perturbation localized in \(\Lambdap\) -- provided that \(\dist{\Lambdap, \Lambda \setminus \Lambdag}\) is sufficiently large compared to \(\abs{\log(\epsi)}^{1/q}\), where \(\epsi>0\) is the strength of the perturbation and \(q\in \intervaloo{0,1}\) is some small constant (see~\eqref{eq:distespeffective} and Figure~\ref{fig:main}).

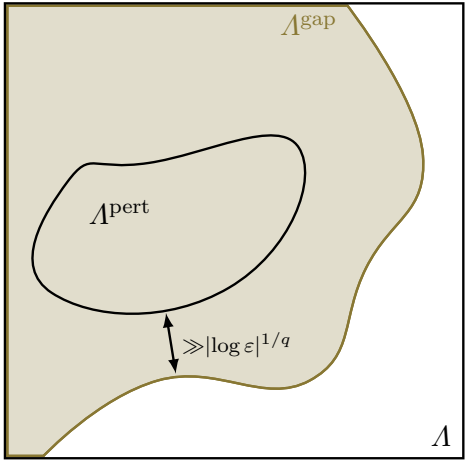
\begin{SCfigure}[1.2][h]
    \centering
    \begin{tikzpicture}[scale=.5, myLine]
        \newcommand{\pathouter}{(-6.2,6.2) rectangle (6.2,-6.2)}
        \newcommand{\pathL}{(-6,6) rectangle (6,-6)}
        \newcommand{\pathLp}{(-6,6) -- plot[smooth, tension=0.8] coordinates{(3,6) (5,2) (3.5,-1) (2,-4) (-2,-4) (-5,-6)} -- (-6,-6) -- cycle}
        \newcommand{\posY}{(-3,0.5)}
        \newcommand{\pathY}{plot[smooth cycle, tension=0.9] coordinates{+(20:5) +(60:1.5) +(150:1.6) +(230:2.8) +(325:3.5)}}

        \begin{scope}[myLine=2, even odd rule, fill=myColor!25, draw=myColor] 
            \clip \pathLp;
            \path[draw, fill] \pathLp (0,0) ;
            \draw \pathLp;
        \end{scope}
        \node[anchor=north east, myColor] at (3,6) {\(\Lambda^{\mathrm{gap}}\)}; 

        \draw \posY node {\(\Lambda^{\mathrm{pert}}\)} \pathY; 

        \begin{scope}[myLine=2, even odd rule] 
            \clip \pathouter \pathL;
            \draw \pathL;
        \end{scope}
        \draw (6,-6) node[anchor=south east] {\(\Lambda\)};

        \draw[latex-latex, shorten <=2pt, shorten >=3pt] \posY{} ++(-65:2.8) -- node[anchor=west] { \(\scriptstyle \gg \abs{\log \epsi}^{1/q}\)} (-1.5,-4);
    \end{tikzpicture}
    \caption{
        Let \(H_0\) be locally dynamically gapped in \(\Lambda \setminus \Lambdag\) and consider a small \(\Lambda^{\mathrm{pert}}\)-localized perturbation \(\epsi V\), which is adiabatically turned on.
        Then, if the distance between \(\Lambda^{\mathrm{pert}}\) and \(\Lambda \setminus \Lambdag\) is large compared to \(\abs{\log \epsi}^{1/q}\), response theory~\eqref{eq:linearresponseintro} holds to any order.
    }
    \label{fig:main}
\end{SCfigure}

More precisely, let \(H_0\) be an SLT Hamiltonian and \(\rho_0\) an equilibrium state of \(H_0\) that is locally dynamically gapped in \(\Lambdag\) (according to Assumption \nameref{ass:localGAP}).
Let \(V\) be a \(\Lambdap\)-localized perturbation by an SLT Hamiltonian (see~\eqref{eq:interaction-norm-localization} and Lemma~\ref{lem:SLTlocal} for details), \(f:\R \to \intervalcc{0,1}\) a smooth switching function with \(f(t) = 0\) for \(t \le -1\) and \(f(t) = 1\) for \(t \ge 0\), and define
\begin{equation}
    \label{eq:fundHamiltonian}
    H_\epsi(t) := H_0 + \epsi f(t) V
    .
\end{equation}
Moreover, let \(\rho^{\epsi, \eta, f}(t)\) be the solution of the time-dependent adiabatic Schrödinger equation
\begin{equation}
    \I \, \eta \frac{\D }{\D t} \rho^{\epsi, \eta, f}(t)
    =
    \commutator[\big]{H_\epsi(t) , \rho^{\epsi, \eta, f}(t)}
\end{equation}
with adiabatic parameter \(\eta \in (0,1]\) and initial datum \(\rho^{\epsi, \eta, f}(t) = \rho_0\) for all \(t \le -1\).
Finally, for an observable \(B \in \Alg_{Y}\) define the response to the perturbation as
\begin{equation}
    \Sigma_B^{\epsi, \eta, f}(t)
    :=
    \expectation{B}_{\rho^{\epsi, \eta, f}(t)} - \expectation{B}_{\rho_0}
    .
\end{equation}

\begin{mainres}[see Theorem~\ref{thm:linear-response}]
    \label{thm:mainres}
    For every \(j \in \N\) there exists a response coefficient \(\sigma_{B,j}\), independent of \(\epsi, \eta\) and \(f\), such that the following holds: There exists a constant \(q\in \intervaloo{0,1}\) and for every \(n,m \in \N\) and \(T > 0\), there exists a constant \(C>0\), independent of \(\epsi\), such that for every \(t \in [0,T]\) we have that
    \begin{equation}
        \begin{split}
            \label{eq:linearresponseintro}
            \sup_{\eta \in \intervalcc[\big]{\epsi^m , \epsi^{\frac{1}{m}}}}\abs*{\Sigma_B^{\epsi, \eta, f}(t) - \sum_{j=1}^{n} \epsi^j \sigma_{B,j}}
            \le
            C \, \vertiii{B} \, \epsi^{n+1} \paren*{1+ \E^{-\dist{\Lambdap, \Lambda \setminus \Lambdag}^{q} - C \log (\epsi)}}\,,
        \end{split}
    \end{equation}
    where \(\vertiii{B}\) measures the norm of \(B\) and its support \(Y\).

    The first order coefficient is given by \emph{Kubo's formula}
    \begin{equation}
        \label{eq:Kubo}
        \sigma_{B,1}
        =
        - \I \, \expectation[\big]{\commutator[\big]{\calI_{H_0,g}\Ab{V} , B}}_{\rho_0}
        .
    \end{equation}
\end{mainres}
When \(\Lambdag = \Lambda\), the equilibrium state \(\rho_0\) of \(H_0\) is \emph{globally} dynamically gapped and the exponential in~\eqref{eq:linearresponseintro} is absent since \(\dist{\Lambdap, \emptyset} : = \infty\).
This special case of our result (when \(\rho_0 = P_0/\dim \rank P_0\) and \(P_0\) projects onto a gapped spectral patch) has already been proven in~\cite{Teufel2020} with extensions to infinite systems in~\cite{HT2020unif,HT2020bulk}; see also Remark~\ref{rmk:mainresult} below.
Moreover, observe that, whenever
\begin{equation}
    \label{eq:distespeffective}
    \dist{\Lambdap , \Lambda \setminus \Lambdag}
    \gg
    \abs{\log(\epsi)}^{1/q}
    ,
\end{equation}
with \(q\) from our main result, the exponential in~\eqref{eq:linearresponseintro} is small compared to~\(1\).
In particular, this is the case, if \(
    \dist{\Lambdap , \Lambda \setminus \Lambdag}
    \gg
    \epsi^{-\delta}
\) for some (arbitrarily small) \(\delta > 0\).
Hence, our main result gives an effective condition on the distance from the perturbation region, \(\Lambdap\), to the non-gapped region, \(\Lambda \setminus \Lambdag\), in comparison to the perturbation strength \(\epsi\), which ensures that response theory to all orders is valid (see Figure~\ref{fig:main}).

It is interesting to compare our main result to the adiabatic theorem~\cite[Theorem 4.1]{HT2020bulk} for finite systems with a gap in the bulk, which -- in some sense -- are \emph{locally gapped} systems.
By routine arguments (see~\cite[Proof of Theorem 4.1]{Teufel2020}),~\cite[Theorem~4.1]{HT2020bulk} yields the analog of~\eqref{eq:linearresponseintro} again with an \emph{additional error term}: Instead of the exponential in~\eqref{eq:linearresponseintro}, one obtains%
\footnote{%
    The bound in~\cite[eq.~(4.1)]{HT2020bulk} essentially means that the unperturbed system on \(\Lambda \equiv \Lambda_k\) is gapped within \(\Lambdag \equiv \Lambda_{\lfloor k(1-o(1)) \rfloor }\), where \(\Lambda_l\) denotes the box of side length \(2l+1\) in \(\Gamma\) centered at zero.
}
\(
    \mathcal{O}\paren[\big]{\paren[\big]{\epsi^C \dist{Y, \Lambda \setminus \Lambdag}}^{-\infty} }
\).
Therefore, in order to have response theory to all orders in this setting, one needs that
\begin{equation}
    \label{eq:distespeffectiveBULK}
    \dist{Y, \Lambda \setminus \Lambdag} \gg \epsi^{-C}
    .
\end{equation}
We point out the following two differences between the conditions in~\eqref{eq:distespeffective} and~\eqref{eq:distespeffectiveBULK}: First, in~\eqref{eq:distespeffective}, the relevant distance is between `where the perturbation \(V\) acts', i.e.~\(\Lambdap\), and `where we do not have a gap', i.e.~\(\Lambda \setminus \Lambdag\).
In contrast to that, the relevant distance in~\eqref{eq:distespeffectiveBULK} is between `where the observable \(B\) tests', i.e.~\(Y\), and `where we do not have a gap', i.e.~\(\Lambda \setminus \Lambdag\).
Second, while in~\eqref{eq:distespeffective} the distance must be much bigger than a power of the \emph{logarithm} of the perturbation strength \(\epsi\), in~\eqref{eq:distespeffectiveBULK}, the distance must be much bigger than a large \emph{inverse power} of \(\epsi\).

\subsection{Structure of the paper}
The rest of this paper is structured as follows.
After introducing the mathematical framework, in particular the underlying space and the concept of locality of SLT operators, in Section~\ref{sec:setup}, we precisely formulate our main result in Theorem~\ref{thm:linear-response} in Section~\ref{sec:response-theory-main} and also give its proof based on the NEASS construction.
In Section~\ref{sec:localgap} we discuss the problem of formulating a local gap condition, formulate different variants and explain their connections, and, moreover, show certain exemplary systems to have a local dynamical gap.
Afterwards, in Section~\ref{sec:proof}, we perform the NEASS construction under Assumption \nameref{ass:localGAP} and prove the necessary inputs for Theorem~\ref{thm:linear-response}.
Additional proofs concerning the formulation of the local gap condition are given in Section~\ref{app:LPPLlocalgap}; several technical lemmata and auxiliary results needed for the arguments in Section~\ref{sec:proof} are deferred to Appendix~\ref{app:technical}.

\section{Mathematical framework}
\label{sec:setup}

In this section, we briefly introduce the (standard) mathematical framework used in the formulation of the adiabatic theorems.
For similar setups see~\cite{Teufel2020,HT2020unif,HT2020bulk,BRF2018}.

\subsection{Spatial structure and algebra of observables}

We consider a quantum spin system on a finite graph~\(\Lambda\) equipped with the graph distance \(d(\interpunct, \interpunct)\).
Let \(B_r(x):= \Set{y \in \Lambda \given d(x,y) \le r}\) be the ball of radius \(r\) centered at \(x \in \Lambda\).
The graph is assumed have dimension (at most) \(d >0\), i.e.~there exists a constant \(\Cvol > 0\) such that
\begin{equation}
    \label{eq:dimension}
    \sup_{x \in \Lambda} \abs{B_r(x)} \le 1 + \Cvol \, r^d
    ,
\end{equation}
where \(\abs{X}\) denotes the number of sites in \(X \subset \Lambda\).
The set of all such graphs \(\Lambda\) is denoted by
\begin{equation}
    \label{eq:Gset}
    \mathcal{G}(d, \Cvol)
    :=
    \Set*{
        \Lambda \text{ finite graph}
        \given
        \sup_{x \in \Lambda} \, \abs{B_r(x)} \le 1 + \Cvol \, r^d
        \text{ for all } r > 0
    }
    .
\end{equation}

To each vertex \(x \in \Lambda\), we associate a single-particle Hilbert space \(\HSpace_x\), which we assume to be of finite dimension, \(\sup_{x \in \Lambda} \dim \HSpace_x < \infty\).
Moreover, for each \(X\subset \Lambda\) let \(\HSpace_X := \otimes_{x \in X} \HSpace_x\) be the many-particle Hilbert space on \(X\) and denote the associated \(C^*\)-algebra of observables by \(\Alg_{X} := \calL(\HSpace_X)\).
Whenever \(X\subset X'\), then \(\Alg_{X}\) is naturally embedded as a subalgebra of \(\Alg_{X'}\) and we set \(\Alg := \Alg_\Lambda\).

Since a very similar construction is common for fermionic lattice systems (see, e.g.,~\cite{NSY2017, HW2022}), all the results almost immediately translate to this setting.

\subsection{Interactions and SLT operators}
\label{subsec:interactions}
An \emph{interaction} is a map
\begin{equation}
    \Phi \colon \Set{X \subset \Lambda} \to \Alg^N \,, \ X \mapsto \Phi(X)
    =
    \Phi(X)^*\in \Alg_X
    .
\end{equation}
With any interaction, one associates a \emph{sum of local terms (SLT) operator} \(A\) via
\begin{equation}
    \label{eq:intSLT}
    A := \sum_{X \subset \Lambda} \Phi(X) \in \Alg
    .
\end{equation}
Note that, while every interaction defines a unique operator, there are multiple interactions realizing the same operator, i.e.~the assignment \(\Phi \mapsto A(\Phi)\) is not invertible.
Note that all interactions and SLT operators are by definition always self-adjoint.

For any \(b > 0\) and \(p \in (0,1]\), we consider the stretched exponential function
\begin{equation}
    \label{eq:stretched-exponential}
    \chi_{b,p}: [0,\infty) \to (0,1] \,, \ x \mapsto \E^{-bx^p}
\end{equation}
as \emph{localization functions} and define the associated \emph{SLT interaction norm}
\begin{equation}
    \label{eq:interaction-norm}
    \norm{\Phi}_{b,p}
    :=
    \sup_{z \in \Lambda} \sumstack{Z\subset \Lambda:\\z\in Z}
    \frac{\norm{\Phi(Z)}}{\chi_{b,p}\pdiam{Z}}
\end{equation}
on interactions.
The quality of the localization for an interaction will be expressed by the finiteness of a norm \(\norm{\Phi}_{b,p}\), \emph{independent} of the (size of the) graph \(\Lambda\).
Operators with this property will often be referred to as \emph{\((b,p)\)-localized SLT operators}.

Moreover, in order to further quantify, how well an interaction is localized in a region \(\Omega \subset \Lambda\), we introduce the \emph{localized SLT interaction norm}
\begin{equation}
    \label{eq:interaction-norm-localization}
    \norm{\Phi}_{b,p; \Omega}
    :=
    \sup_{z \in \Lambda} \sumstack{Z\subset \Lambda:\\z\in Z}
    \frac{\norm{\Phi(Z)}}{\chi_{b,p}\pdiam{Z} \, \chi_{b,p}\pdist{z,\Omega}}
\end{equation}
and refer to operators with the property that \(\norm{\Phi}_{b,p; \Omega} \le C\) as \emph{\((b,p,\Omega)\)-localized SLT operators}.
Whenever it is clear from the context, or irrelevant for the discussion, we will often also omit the arguments \((b,p)\), and simply refer to \emph{\(\Omega\)-localized SLT operators}.
Finally, observe that \(\norm{\Phi}_{b,p; \Lambda} = \norm{\Phi}_{b,p}\).

The following simple lemma, whose proof is given in Appendix~\ref{app:proof-SLTlocal}, relates conceptually easier notions of locality of SLT operators to boundedness of the norm~\eqref{eq:interaction-norm-localization}.

\begin{lem} \label{lem:SLTlocal}
    Let \(A\) be an SLT operator stemming from an interaction \(\Phi\), for which we assume that \(\norm{\Phi}_{b,p} \le C\) for some constant \(C >0\), and let \(\Omega \subset \Lambda\).
    \begin{itemize}
        \item[(i)] Let \(A\) be \emph{strictly \(\Omega\)-localized}, i.e.~\(\Phi(Z) = 0\) whenever \(Z \cap (\Lambda\setminus\Omega) \neq \emptyset\).
            Then it holds that \(\norm{\Phi}_{b,p; \Omega} \le C\)
        \item[(ii)] Let \(A\) be \emph{strongly \(\Omega\)-localized}, i.e.~\(\Phi(Z) = 0\) whenever \(Z \cap \Omega = \emptyset\).
            Then it holds that \(\norm{\Phi}_{b/2,p; \Omega} \le C\).
    \end{itemize}
\end{lem}

\section{Main result: Response theory for locally gapped systems}
\label{sec:result}

In this section, we formulate our main result, the validity of response theory to any order, in Theorem~\ref{thm:linear-response} in Section~\ref{sec:response-theory-main}.
Its proof, based on the NEASS construction, is given in Section~\ref{subsec:NEASS}.

\subsection{Response theory}
\label{sec:response-theory-main}

We recall the assumption \nameref{ass:localGAP} from Section~\ref{sec:intro} of a local gap, now formulated preparatory of our main result, Theorem~\ref{thm:linear-response} below.

\begin{assumption}{(LDG\textsubscript{main})}{Local dynamical gap condition -- formal version}
    \label{ass:localGAP_main}
    \noindent
    We say that an equilibrium state \(\rho_0\) of the SLT-operator \(H_0\), i.e.~with \([H_0, \rho_0] = 0\), is \emph{locally dynamically gapped} of size at least \(g > 0\) in a region \(\Lambdag \subset \Lambda\) with respect to \(C_\gap\), \(b\), \(p > 0\) and \(\ell\in \N_0\), if and only if for all \(X, Y\subset \Lambda\) and \(A \in \alg_X\) and \(B \in \alg_Y\), it holds that
    \begin{equation}
        \label{eq:local gap_main}
        \abs*{\expectation[\big]{\commutator*{\calL_{H_0} \circ \calI_{H_0}\Ab{A} - A , B}}_{\rho_0}}
        \le
        \begin{multlined}[t]
            C_\gap \, \norm{A} \, \norm{B} \, \big[ \diam(X) + \diam(Y)\big]^\ell
            \\\times
            \chi_{b,p}\paren[\big]{\dist{X, \Lambda \setminus \Lambdag} + \dist{Y, \Lambda \setminus \Lambdag}}
            .
        \end{multlined}
        \qedhere
    \end{equation}
\end{assumption}

We can now formulate our main result.
In a nutshell, it says the following: We have validity of \emph{response theory to all orders} under the assumption of \nameref{ass:localGAP_main} for a perturbation localized in \(\Lambdap\) -- provided that the distance to \(\Lambda \setminus \Lambdag\) is sufficiently large compared to \(\abs{\log(\epsi)}^{1/q}\), where \(\epsi>0\) is the strength of the perturbation and \(q>0\) is some small constant.

The proof of Theorem~\ref{thm:linear-response} is given at the end of Section~\ref{subsec:NEASS}.

\begin{theorem}[Response theory to all orders]
    \label{thm:linear-response}
    Fix \(n,m \in \N\) and let \(d\in \N\), \(\Cvol>0\), \(b>0\), \(p\in \intervaloc{0,1}\), \(\Cint>0\) and \(g > 0\), \(C_\gap>0\), \(\ell\in \N_0\), and \(C_{\mathrm{switch}} > 0\).
    Take any \(q \in (0, p)\).
    Then there exist a constant \(C_{n,m}> 0\) (in particular depending on \(n\) and \(m\)) such that for all lattices \(\Lambda\in \mathcal{G}(d,\Cvol)\) (recall~\eqref{eq:Gset}), subsets \(\Lambdap\subset \Lambda\) and SLT-operators \(H_0\) and \(V\), with corresponding interactions that satisfy \(\norm{\Phi_{H_0}}_{b,p}<\Cint\) and \(\norm{\Phi_{V}}_{b,p;\Lambdap}<\Cint\), respectively, the following holds:

    Assume that the equilibrium state \(\rho_0\) of \(H_0\) is locally dynamically gapped in \(\Lambdag\) of size at least \(g\) and with respect to \(C_\gap\), \(b\), \(p\) and \(\ell\) according to Assumption \nameref{ass:localGAP_main}.
    Let \(Y\subset \Lambda\) and \(B\in \alg_Y\).
    Then there exist response coefficients \(\sigma_{B,j}\) in the following sense:
    For \(\varepsilon>0\) and smooth switching function \(f\colon \R \to \intervalcc{0,1}\) satisfying \(f(t) = 0\) for \(t \le -1\), \(f(t) = 1\) for \(t \ge 0\), and \(\norm{f}_{C^{\mathfrak{C}_{n,m}}(\R)}<C_\mathup{switch}\) with \(\mathfrak{C}_{n,m} := \lceil m(n + 1 +(2d+\ell)/p) \rceil\), consider the time-dependent Hamiltonian
    \begin{equation}
        \label{eq:Hamiltonian-switching}
        H_\epsi(t) := H_0 + \epsi \, f(t) \, V
        .
    \end{equation}
    Let \(\rho^{\epsi, \eta, f}(t)\) be the solution of the time-dependent adiabatic Schrödinger equation
    \begin{equation}
        \label{eq:Schrödinger-equation-switching}
        \I \, \eta \, \frac{\D }{\D t} \rho^{\epsi, \eta, f}(t)
        =
        \commutator[\big]{H_\epsi(t) , \rho^{\epsi, \eta, f}(t)}
    \end{equation}
    with adiabatic parameter \(\eta \in (0,1]\) and initial datum \(\rho^{\epsi, \eta, f}(t) = \rho_0\) for all \(t \le -1\).

    Then, the response to the perturbation, \(
        \Sigma_B^{\epsi, \eta, f}(t)
        :=
        \expectation{B}_{\rho^{\epsi, \eta, f}(t)}
        - \expectation{B}_{\rho_0}
    \), satisfies
    \begin{equation}
        \label{eq:linearresponse}
        \sup_{\eta \in \intervalcc[\big]{\epsi^m , \epsi^{1/m}}} \abs*{\Sigma_B^{\epsi, \eta, f}(t) - \sum_{j=1}^{n} \epsi^j \sigma_{B,j}}
        \le
        \begin{multlined}[t]
            C_{n,m} \, \norm{B} \, \diam(Y)^{(3+n)d+ \ell} \paren[\big]{1 + t}^{(2d+\ell)/p + 1} \, \epsi^{n+1}
            \\\times
            \paren*{1 + \E^{-\dist{\Lambdap, \Lambda \setminus \Lambdag}^{q} - (\mathfrak{C}_{n,m} + 1) \log(\epsi)}}\,,
        \end{multlined}
    \end{equation}
    for all \(t\geq 0\).
    The first order coefficient is given by \emph{Kubo's formula}~\eqref{eq:Kubo}.
\end{theorem}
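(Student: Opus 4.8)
The plan is to prove Theorem~\ref{thm:linear-response} by carrying out the \emph{non-equilibrium almost-stationary state} (NEASS) construction under the weakened hypothesis~\nameref{ass:localGAP_main}, following the space-time adiabatic perturbation scheme of~\cite{PST2003,PST2003b,Teufel2020}, but carefully tracking at every step the localization near \(\Lambdap\). Concretely, for the fixed order \(n\) I would build a generator \(S_\epsi^{\eta,f}(t) = \sum_{a,b}\epsi^a\,\eta^b\,S_{a,b}(t)\) iteratively: at each step one is led to a Sylvester-type equation \(\calL_{H_0}\Ab{S_{a,b}(t)} = R_{a,b}(t)\) modulo terms annihilated, when tested in \(\rho_0\), by the ``diagonal'' projection, and one sets \(S_{a,b}(t) := \calI_{H_0}\Ab{R_{a,b}(t)}\), where \(R_{a,b}(t)\) is a universal polynomial in nested commutators of \(H_0\), \(V\), the previously constructed \(S_{a',b'}\), and derivatives \(f^{(k)}(t)\); the \(b\geq1\) pieces always carry at least one factor \(f^{(k)}\) with \(k\geq1\). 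The NEASS is \(\Pi_\epsi^{\eta,f}(t):=\E^{\I S_\epsi^{\eta,f}(t)}\,\rho_0\,\E^{-\I S_\epsi^{\eta,f}(t)}\); since \(f\) and all its derivatives vanish for \(t\le-1\) the generator vanishes there and \(\Pi_\epsi^{\eta,f}(t)=\rho_0\), matching the initial datum, while for \(t\ge0\) one has \(f\equiv1\) and \(f^{(k)}\equiv0\) for \(k\ge1\), so \(S_\epsi^{\eta,f}(t)\) reduces to the \emph{static} Schrieffer--Wolff generator of \(H_0+\epsi V\) and \(\Pi_\epsi^{\eta,f}(t)\) to the corresponding static NEASS, which is \(\eta\)- and \(f\)-independent. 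Because \(V\) is \((b,p;\Lambdap)\)-localized while \(\calL_{H_0}\) and \(\calI_{H_0}\) are locality preserving, every \(S_{a,b}(t)\) is a \((b_{a,b},p_{a,b};\Lambdap)\)-localized SLT operator with parameters degrading in a controlled way at each iteration.

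\textbf{Step~1 (locality calculus).} The first block of work is to make this quantitative. Using the SLT interaction-norm estimates of Section~\ref{subsec:interactions} and Appendix~\ref{app:technical}---composition and commutator bounds for \((b,p;\Omega)\)-localized interactions, and the quasi-locality of \(\calI_{H_0}\), which follows from the weight bound~\eqref{eq:wbound} combined with Lieb--Robinson bounds for the \(H_0\)-dynamics---I would show inductively that \(\norm{\Phi_{S_{a,b}(t)}}_{b_{a,b},p_{a,b};\Lambdap}\) is finite, with \(b_{a,b},p_{a,b}\) depending only on \(b,p,d,\Cvol,\Cint,g\), the (finitely many) iteration steps, and \(\norm{f}_{C^{\mathfrak{C}_{n,m}}(\R)}\). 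The polynomial prefactor \(\diam(Y)^{(3+n)d+\ell}\) and the temporal growth \((1+t)^{(2d+\ell)/p+1}\) in~\eqref{eq:linearresponse} are booked here: they arise from converting stretched-exponential localization into polynomial volume factors when pairing \(S_{a,b}(t)\) against \(B\in\alg_Y\), and from integrating \(f\) and its derivatives over \([-1,t]\).

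\textbf{Step~2 (super-adiabaticity and Duhamel).} By construction, the defect \(\mathcal{D}_\epsi^{\eta,f}(t):=\I\eta\,\frac{\D}{\D t}\Pi_\epsi^{\eta,f}(t)-\commutator{H_\epsi(t),\Pi_\epsi^{\eta,f}(t)}\) splits, as an operator identity, into two contributions: a genuine higher-order remainder of size \(\Or(\epsi^{n+1})\) in localized SLT norm (with the \(\eta\)-powers absorbed into \(\epsi\)-powers via \(\eta\le\epsi^{1/m}\)), and a \emph{local-gap defect} collecting, for every application of \(\calI_{H_0}\) made in the construction, the expression \(\calL_{H_0}\circ\calI_{H_0}\Ab{A}-A\) conjugated by the relevant \(\E^{\I S}\). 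Testing \(\mathcal{D}_\epsi^{\eta,f}(t)\) against \(B\in\alg_Y\) and conjugating the test observable back through the quasi-local map \(\E^{-\I S}(\interpunct)\E^{\I S}\) reduces each such term to a commutator tested in \(\rho_0\), to which Assumption~\nameref{ass:localGAP_main} applies directly: since the relevant \(A\) is localized near \(\Lambdap\), each term contributes at most \(C_\gap\) times a polynomial factor times \(\chi_{b,p}\) evaluated at a distance \(\gtrsim\dist{\Lambdap,\Lambda\setminus\Lambdag}\) (the \(\dist{Y,\Lambda\setminus\Lambdag}\) contribution in~\eqref{eq:local gap_main} being dropped, \(\chi_{b,p}\) being decreasing), and there are only finitely many such terms. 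Tracking through the adiabatic power-counting the \(\eta^{-1}\le\epsi^{-m}\) factors produced by the time-derivatives in the adiabatic Schrödinger equation---which is also what fixes the exponent \(\mathfrak{C}_{n,m}+1\) and the required regularity \(C^{\mathfrak{C}_{n,m}}(\R)\) of \(f\)---one obtains \(\abs{\expectation{B}_{\mathcal{D}_\epsi^{\eta,f}(t)}}\lesssim\epsi^{n+1}(\mathrm{poly})+\epsi^{-(\mathfrak{C}_{n,m}+1)}(\mathrm{poly})\,\E^{-\dist{\Lambdap,\Lambda\setminus\Lambdag}^q}\), with \(q<p\) surviving rather than \(p\) because of the conjugations needed to transport~\nameref{ass:localGAP_main}. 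A Grönwall/Duhamel estimate against the adiabatic propagator of \(\eta^{-1}H_\epsi(\interpunct)\) then upgrades this into the bound~\eqref{eq:linearresponse} for \(\abs*{\expectation{B}_{\rho^{\epsi,\eta,f}(t)}-\expectation{B}_{\Pi_\epsi^{\eta,f}(t)}}\), uniformly for \(\eta\in\intervalcc{\epsi^m,\epsi^{1/m}}\). Finally, for \(t\ge0\) the state \(\Pi_\epsi^{\eta,f}(t)\) is the static NEASS of \(H_0+\epsi V\); expanding \(\expectation{B}_{\Pi_\epsi^{\eta,f}(t)}=\expectation{\E^{-\I S_\epsi}B\,\E^{\I S_\epsi}}_{\rho_0}\) in powers of \(\epsi\) yields \(\epsi\)-, \(\eta\)- and \(f\)-independent coefficients \(\sigma_{B,j}\), and the first iteration step \(S_\epsi=\epsi\,\calI_{H_0}\Ab{V}+\Or(\epsi^2)\) gives \(\sigma_{B,1}=-\I\,\expectation[\big]{\commutator{\calI_{H_0}\Ab{V},B}}_{\rho_0}\), i.e.\ Kubo's formula~\eqref{eq:Kubo}.

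\textbf{Main obstacle.} The hard part will be the locality bookkeeping of Step~1: showing that the nested applications of \(\calI_{H_0}\) in the iterative scheme produce SLT operators that remain \(\Lambdap\)-localized with quantitatively controlled stretched-exponential tails, so that invoking~\nameref{ass:localGAP_main} in Step~2 actually reproduces the factor \(\E^{-\dist{\Lambdap,\Lambda\setminus\Lambdag}^q}\) and nothing weaker. Tightly linked to this is transporting the local-gap hypothesis---stated only for \(\rho_0\)---through the unitaries \(\E^{\I S}\) in the NEASS and through the adiabatic evolution; this is handled by the quasi-locality of those conjugations, but has to be arranged so that only the smaller exponent \(q<p\) survives, matching the statement. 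The power-counting that turns \(\eta\ge\epsi^m\) into the explicit \(\epsi^{-(\mathfrak{C}_{n,m}+1)}\) amplification of the local-gap defect, and that fixes the regularity \(C^{\mathfrak{C}_{n,m}}(\R)\) demanded of \(f\), is routine once the scheme is in place, but must be kept consistent with the chosen order \(n\).
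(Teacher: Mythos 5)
Your proposal is correct and follows essentially the same route as the paper: the paper likewise builds a (time-dependent) NEASS by space-time adiabatic perturbation theory, iteratively applying the quasi-local inverse Liouvillian to \(\Lambdap\)-localized SLT operators, bounds the lower-order defects via the (weakened) local dynamical gap condition tested in \(\rho_0\) after local decomposition of the conjugated/evolved observable, and then combines the adiabatic-switching estimate with the asymptotic expansion of the NEASS (giving Kubo's formula) by a triangle inequality with the construction order raised to \(\mathfrak{C}_{n,m}\). The only difference is cosmetic bookkeeping (your joint \(\epsi^a\eta^b\) expansion versus the paper's \(\epsi^j\) expansion with polynomial-in-\(\eta/\epsi\) coefficients), and the technical obstacles you flag are exactly the ones the paper resolves in Lemma~\ref{lem:LDGweak} and Appendix~\ref{app:technical}.
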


Our result can also be extended to infinite systems.

\begin{rmk}[Extension to infinite systems]
    \label{rmk:mainresult}
    Following the arguments from~\cite{HT2020unif}, it is reasonably straightforward to extend our result to the case of infinite systems.
    More precisely, in order to do so, we need to
    \begin{itemize}
        \item consider \(\Lambda\) to be part of a sequence of graphs exhausting an infinite graph \(\Gammar\), e.g.~\(\Lambda \equiv \Lambda_k := \{-k,\dotsc,k\}^d \subset \Z^d\) and \(\Gammar := \Gamma\);
        \item assume that the interactions associated to \(H_0\) and \(V\) \emph{have a thermodynamic limit} (see~\cite[Definition 2.1]{HT2020unif});
        \item require that the sequence of equilibrium states \(\rho_0 \equiv \rho_0^{\Lambda}\) satisfies the local dynamical gap condition \nameref{ass:localGAP_main} with constants independent of \(\Lambda\) and converges (in the weak\(^*\) sense in the dual to the algebra of quasi-local observables, see~\cite[Section 2.5]{HT2020unif}) as \(\Lambda \nearrow \Gammar\);
        \item and suppose that also the perturbation region \(\Lambdap\) as well as the gapped region \(\Lambdag\) \emph{converge} (in a suitable sense) to some \(\Gammar^{\mathrm{pert}}, \Gammar^\gap \subset \Gammar\), respectively, at least ensuring that \(
                \dist{\Lambdap, \Lambda \setminus \Lambdag} \to \dist{\Gammar^{\mathup{pert}}, \Gammar \setminus \Gammar^\gap}
            \) as \(\Lambda \nearrow \Gammar\).
    \end{itemize}
    Then,~\eqref{eq:linearresponse} also holds for the infinite system but with \(
        \dist{\Gammar^{\mathup{pert}}, \Gammar \setminus \Gammar^\gap}
    \) in the exponential.

    In the special case, where the gapped region exhausts the entire graph, i.e.~\(\Lambdag \nearrow \Gammar\), we find~\eqref{eq:linearresponse} \emph{without} the additional exponential error term.
    This corresponds to \(H_0\) having a \emph{gap in the bulk} (see~\cite{HT2020bulk, HW2022} and also the discussion around~\eqref{eq:GAPdecay} in Section~\ref{sec:localgap}), where in the finite systems \(H_0\) could have gap closing edge modes (see Figure~\ref{fig:edge}).
\end{rmk}

\subsection{Non-equilibrium almost stationary states and proof of Theorem~\ref{thm:linear-response}}
\label{subsec:NEASS}

The main underlying idea of our justification of linear response theory is the construction of so-called \emph{non-equilibrium almost-stationary states (NEASS)}~\cite{MT2019,Teufel2020,HT2020,HT2020unif,HT2020bulk} for the dynamics of the perturbed Hamiltonian
\begin{equation}
    \label{eq:perturbed-Hamiltonian-without-switching}
    H_\epsi := H_0 + \epsi V
    .
\end{equation}
More precisely, for every \(n \in \N\) there exists%
\footnote{%
    \label{ftn:resum}
    Following the \emph{resummation procedure} in~\cite[Appendix E]{HT2020unif}, one could even construct a single (i.e.~\(n\) independent) state \(\Pi^{\epsi}\), which, for every fixed \(n\), has the same properties as \(\Pi_n^{\epsi}\).
    We will, however, refrain from doing so for brevity of the presentation.
}
a state \(\Pi_n^\epsi\), which is obtained from the equilibrium state \(\rho_0\) of \(H_0\) by a unitary transformation with a small SLT generator, in such a way, that it is almost stationary under the dynamics generated by~\eqref{eq:perturbed-Hamiltonian-without-switching}.

Similarly to~\cite[Theorem~3.1]{Teufel2020}, one can obtain Proposition~\ref{prop:NEASS} by following the proof of Proposition~\ref{prop:adswitch}, given in Section~\ref{subsec:adpert}, and combining it with Proposition~\ref{prop:expansionNEASS}.
As the arguments in~\cite[Theorem~3.1]{Teufel2020} are rather standard, we omit the details for brevity.

\begin{prop}[Non-equilibrium almost-stationary states]
    \label{prop:NEASS}
    Fix \(n\in \N\) and let \(d\in \N\), \(\Cvol>0\), \(b>0\), \(p\in \intervaloc{0,1}\), \(\Cint>0\) and \(g > 0\), \(C_\gap>0\), \(\ell\in \N_0\).
    Take any \(q \in (0, p)\).
    Then there exist a constant \(C_{n}> 0\) (in particular depending on \(n\))
    such that for all lattices \(\Lambda\in \mathcal{G}(d,\Cvol)\) (recall~\eqref{eq:Gset}), subsets \(\Lambdap\subset \Lambda\) and SLT-operators \(H_0\) and \(V\), with corresponding interactions that satisfy \(\norm{\Phi_{H_0}}_{b,p}<\Cint\) and \(\norm{\Phi_{V}}_{b,p;\Lambdap}<\Cint\), respectively, the following holds:

    Assume that the equilibrium state \(\rho_0\) of \(H_0\) is locally dynamically gapped in \(\Lambdag\) of size at least \(g > 0\) and with respect to \(C_\gap\), \(b\), \(p\) and \(\ell\) according to Assumption \nameref{ass:localGAP_main}.
    Then, there exists a sequence \((A_\mu)_{\mu \in \N}\) of SLT operators, which are \((1, p', \Lambda^{\mathrm{pert}})\)-localized for any \(p' < p\), such that the state
    \begin{equation}
        \label{eq:NEASS}
        \Pi_n^\epsi
        :=
        \evol{\I \epsi S_n^\epsi}{\rho_0}
        \quad \text{with} \quad
        S_n^\epsi
        :=
        \sum_{\mu= 1}^n \epsi^{\mu-1} A_\mu
    \end{equation}
    is almost-stationary for the dynamics generated by \(H_\epsi = H_0 + \epsi V\) in the following sense: Let \(\rho^\epsi(t)\) be the solution to the Schrödinger equation
    \begin{equation}
        \label{eq:prop-NEASS-Schrödinger-equation}
        \I \frac{\D }{\D t} \rho^\epsi(t)
        =
        \commutator[\big]{H_\epsi , \rho^\epsi(t)}
        \quad \text{with} \quad
        \rho^\epsi(0)
        =
        \Pi_n^\epsi
        .
    \end{equation}

    Under these conditions, for all \(B\in \alg_Y\) with \(Y\subset \Lambda\) and \(t \geq 0\), it holds that
    \begin{equation}
        \label{eq:timeindepNEASS}
        \abs*{
            \expectation{B}_{\rho^\epsi(t)}
            - \expectation{B}_{\Pi_n^\epsi}}
        \le
        \begin{multlined}[t]
            C_{n,m} \, \norm{B} \diam(Y)^{3d+\ell} \, \abs{t} \paren[\big]{1 + \abs{t}^{(2d+\ell)/p}} \, \epsi^{n+1}
            \\\times
            \paren[\Big]{1 + \E^{-\dist{\Lambdap, \Lambda \setminus \Lambdag}^{q} - (n+1) \log(\epsi)}}
            .
        \end{multlined}
    \end{equation}
\end{prop}

We point out that, similarly to~\cite[Theorem~3.1]{Teufel2020}, one can improve the bound in~\eqref{eq:timeindepNEASS} by rescaling every \(t\) by \(\epsi^m\) for some \(m \in \N\), at the cost of increasing the constant in front of \(\log (\epsi)\).

From a technical perspective, our proof of response theory for locally gapped quantum spin systems, Theorem~\ref{thm:linear-response}, rests on the following two propositions, the proof of which shall be given in Section~\ref{sec:proof}.
The first, Proposition~\ref{prop:adswitch}, states that the NEASS constructed in Proposition~\ref{prop:NEASS} is also almost-stationary under the dynamics generated by the time-dependent Hamiltonian~\eqref{eq:Hamiltonian-switching}
whose perturbation \(V\) is turned on by the switching function \(f\) on the adiabatic time scale \(1/\eta\) (see, e.g.,~\cite[Prop.~3.2]{Teufel2020}).
The proof of Proposition~\ref{prop:adswitch} is given in Section~\ref{subsec:adpert}.

\begin{prop}[Adiabatic switching and the NEASS]
    \label{prop:adswitch}
    Under the assumptions of Theorem~\ref{thm:linear-response} (in particular recalling~\eqref{eq:Schrödinger-equation-switching}) it holds that
    \begin{equation}
        \abs*{
            \expectation{B}_{\rho^{\epsi, \eta, f}(t)}
            - \expectation{B}_{\Pi_n^\epsi}
        }
        \le
        \begin{multlined}[t]
            C \, \norm{B} \diam(Y)^{3d+\ell} \paren[\big]{1 + t}^{(2d+\ell)/p+1} \, \frac{\epsi^{n+1} + \eta^{n+1}}{\eta^{(2d+\ell)/p+1}}
            \\\times
            \paren[\bigg]{1 + \E^{-\dist{\Lambdap, \Lambda \setminus \Lambdag}^{q} - (n+1) \log(\epsi)}}
            ,
        \end{multlined}
    \end{equation}
    for \(t \geq 0\), where \(\Pi_n^\epsi\) is the NEASS~\eqref{eq:NEASS} constructed in Proposition~\ref{prop:NEASS}.
\end{prop}

For our application to response theory, it is important to have an explicit expansion of expectation values in the NEASS in powers of \(\epsi\) with coefficients given by expectations in the unperturbed equilibrium state, the linear term constituting the celebrated \emph{Kubo formula}.
This is the content of the following proposition, whose proof works in the exact same way as in~\cite[Theorem 3.3]{MT2019} or~\cite[Proposition 5.2]{Teufel2020} and is hence omitted.

\begin{prop}[Asymptotic expansion of the NEASS]
    \label{prop:expansionNEASS}
    Under the assumptions of Proposition~\ref{prop:NEASS}, there exist linear maps \(\mathcal{K}_j \colon \Alg \to \Alg\), \(j \in \N\), given by nested commutators with the \(A_\mu\)'s in~\eqref{eq:NEASS}, such that for \(n \ge m\) it holds that
    \begin{equation}
        \label{eq:NEASSexpand}
        \abs[\bigg]{
            \expectation{B}_{\Pi_n^\epsi}
            - \sum_{j=1}^{m} \epsi^j \expectation{\mathcal{K}_j\Ab{B}}_{\rho_0}
        }
        \le
        C_{n,m} \, \epsi^{m+1} \, \norm{B} \, \abs{Y}^m
        \, \chi_{b',p}\pdist{\Lambdap,Y}
        ,
    \end{equation}
    with \(b',p\) the parameters of the localization of the \(A_\mu\) given in Proposition~\ref{prop:NEASS}.
    The first two orders of the expansion~\eqref{eq:NEASSexpand} are explicitly given by
    \begin{equation}
        \expectation{\mathcal{K}_0\Ab{B}}_{\rho_0}
        =
        \expectation{B}_{\rho_0}
        \quadtext{and}
        \expectation{\mathcal{K}_1\Ab{B}}_{\rho_0}
        =
        - \I \expectation*{\commutator[\big]{\calI_{H_0,g}\Ab{V} , B}}_{\rho_0}
        ,
    \end{equation}
    where \(\calI_{H_0,g}\Ab{\interpunct}\) is the inverse Liouvillian from~\eqref{eq:invliouintro}.
\end{prop}

We can finally give the proof of Theorem~\ref{thm:linear-response}.

\begin{proof}[Proof of Theorem~\ref{thm:linear-response}] Armed with Propositions~\ref{prop:adswitch}--\ref{prop:expansionNEASS}, the proof of Theorem~\ref{thm:linear-response} follows by a simple application of the triangle inequality when applying Proposition~\ref{prop:adswitch} with \(n \to \tilde{n} := \lceil m(n+1+(2d+\ell)/p)\rceil\) and Proposition~\ref{prop:expansionNEASS} for \(n \to \tilde{n}\) and \(m \to n\); cf.~\cite[Theorem~4.1]{Teufel2020}.
\end{proof}

\section{How to formulate a local gap condition?}
\label{sec:localgap}
Formulating a (i) physically \emph{and} mathematically transparent, (ii) practically applicable, and (iii) reasonably restrictive condition of a \emph{local gap} for a Hamiltonian \(H\) is a non-trivial task.
In this section, we discuss several different possible approaches to do so.
In principle, there are two main ways: Either we compare \(H\) to a globally gapped reference Hamiltonian \(H_*\) (\emph{extrinsic formulation}, see Section~\ref{subsec:extrinsic}), or the condition involves the original Hamiltonian \(H\) alone (\emph{intrinsic formulation}, see Section~\ref{subsec:intrinsic}).
We shall first list all the different variants and afterwards compare their respective features in view of the above listed three requirements and explain their relations in Section~\ref{subsec:compare}.
Finally, in Section~\ref{subsec:examples} we discuss two exemplary systems, which we prove to have a local dynamical gap in the sense of Assumption \nameref{ass:localGAP_main}.

Throughout the entire section (except differently stated explicitly), we will use \(C, \ell, b,p\) as generic constants satisfying \(C >0\), \(\ell \ge 0\), \(b >0\), and \(p \in (0,1]\).
Their precise value might change from line to line and it only depends on the model parameters, i.e.~the interaction norms~\eqref{eq:interaction-norm}--\eqref{eq:interaction-norm-localization} of the involved SLT Hamiltonians, the lattice parameters in~\eqref{eq:Gset}, or the gap size \(g>0\) of a reference Hamiltonian.

\subsection{Extrinsic local gap conditions}
\label{subsec:extrinsic}
In this section, we describe several ways of \emph{extrinsically} expressing that an equilibrium state \(\rho\) (e.g.~the ground state) of a Hamiltonian \(H\) is locally gapped.
We call these formulations of a local gap condition \emph{extrinsic}, since the Hamiltonian \(H\) and equilibrium state \(P\) of interest are compared to another reference Hamiltonian \(H_*\), called the \emph{parent Hamiltonian}, with equilibrium state \(\rho_*\), which is globally gapped.
Throughout this section, for simplicity of the presentation, we will assume that both \(\rho\) and \(\rho_*\) have rank one.

The common core of all the different ways to extrinsically formulate a local gap conditions, is to assume that \(H\) and \(H_*\) differ only locally: That is, \(H_*\) and \(H\) are related as
\begin{equation}
    \label{eq:HHJ}
    H = H_* + J
\end{equation}
where \(J\) is some SLT operator, which is localized in \(\Lambda \setminus \Lambdag\) (e.g., in one of the senses mentioned in Lemma~\ref{lem:SLTlocal}).
This form~\eqref{eq:HHJ} of \(H\) and \(H_*\) differing only locally is then passed on to their equilibrium states, \(\rho\) and \(\rho_*\), respectively.
More precisely, we will assume that local perturbations of \(H_*\) perturb its equilibrium state \(\rho_*\) only locally.
Whenever this holds, one (usually) says that \(H_*\) satisfies the \emph{local perturbations perturb locally (LPPL)} principle, which has been shown to be the case for \emph{ground states} in several contexts~\cite{RS2015,HTW2022,BRDF2021}.

\paragraph{Local gap via \enquote*{classical} LPPL}
The simplest way to formulate an \emph{extrinsic} local gap condition for \(H\) given by~\eqref{eq:HHJ} with ground state \(\rho\), is to assume that the globally parent gapped Hamiltonian \(H_*\) with ground state \(\rho_*\) satisfies a strong form%
\footnote{%
    According to the terminology introduced in~\cite{HTW2022}, calling an LPPL principle \emph{strong}, simply means, that the perturbation \(J\) is \emph{not} required to leave the spectral gap of \(H_*\) open -- contrary to earlier works~\cite{BMNS2012,RS2015} on LPPL\@.
}
of the LPPL principle.
This means that, for \(A \in \Alg_{X}\), we have
\begin{equation}
    \label{eq:LPPL}
    \abs[\big]{\expectation{A}_{\rho} - \expectation{A}_{\rho_*}}
    \le
    C \, \norm{A} \, \diam(X)^\ell \, \chi_{b,p}\pdist{X, \Lambda \setminus \Lambdag}
    .
\end{equation}
An estimate of the form~\eqref{eq:LPPL} has been shown to hold for weakly interacting spin systems in~\cite{HTW2022} (based on ideas of~\cite{Yarotsky2005}) and -- in a similar way -- in~\cite{BRDF2021}.
Moreover, in~\cite[Theorem~4.1]{HT2020bulk}, such type of assumption was used to formulate an adiabatic theorem for large but finite systems with a gap in the bulk.
In this case (cf.~\cite[eq.~(4.1)]{HT2020bulk}), the additional error term (compared to the case of having a global gap) in the response theory expansion~\eqref{eq:linearresponseintro} becomes \(
    \mathcal{O}\paren*{\paren[\big]{\epsi^{C} \, \dist{X, \Lambda \setminus \Lambdag}}^{-\infty}}
\).
Moreover, using that \(H_*\) and \(J\) are SLT operators,~\eqref{eq:LPPL} implies that \nameref{ass:localGAP_main} holds \emph{but} with the argument of \(\chi_{b,p}\) in~\eqref{eq:local gap_main} being the \emph{minimum} of the distances (or -- up to changing constants -- the distance of the \emph{union} of the two supports to \(\Lambda \setminus \Lambdag\)) instead of their sum (i.e.~their maximum).
Hence, the decay in \(\dist{\Lambdap, \Lambda\setminus \Lambdag}\), emerging in the course of proving Proposition~\ref{prop:adswitch} in Section~\ref{sec:proof}, will eventually be lost.

In conclusion, although this way~\eqref{eq:LPPL} of saying that \(H\) has a local gap, is quite simple, the resulting error terms are considerably bad and, moreover, it does not quite allow for tracking \(\dist{\Lambdap, \Lambda\setminus \Lambdag}\)-decays.

\paragraph{Trace norm LPPL}
One could even further strengthen the LPPL assumption~\eqref{eq:LPPL} to a \emph{trace norm LPPL}.
This means, that \(\rho\) and \(\rho_*\) are not only close in the weak\(^*\) sense~\eqref{eq:LPPL}, but we have
\begin{equation}
    \label{eq:LPPLtrace}
    \norm{(\rho-\rho_*) \, A}_{\mathup{tr}}
    + \norm{A \, (\rho-\rho_*) }_{\mathup{tr}}
    \le
    C \norm{A} \, \diam(X)^\ell \, \chi_{b,p}\paren[\big]{\dist{X, \Lambda \setminus \Lambdag}}
    ,
\end{equation}
where \(\norm{\interpunct}_{\mathup{tr}} := \trace{\abs{\interpunct}}\) denotes the trace norm.
We remark, that surely~\eqref{eq:LPPLtrace} implies~\eqref{eq:LPPL}.
A somewhat weakened version of~\eqref{eq:LPPLtrace} is to ask that only the trace norm of the \emph{commutator} \([\rho-\rho_*, A]\) is small like in~\eqref{eq:LPPLtrace}, i.e.\
\begin{equation}
    \label{eq:LPPLtracecomm}
    \norm{[\rho-\rho_*, A]}_{\mathup{tr}}
    \le
    C \norm{A} \, \diam(X)^\ell \, \chi_{b,p}\paren[\big]{\dist{X, \Lambda \setminus \Lambdag}}
    ,
\end{equation}
As we show in Proposition~\ref{prop:mechforlocalgap} in Section~\ref{subsec:compare} (see also Figure~\ref{fig:impl}), both these ways,~\eqref{eq:LPPLtrace} and~\eqref{eq:LPPLtracecomm}, of saying that \(H\) has a local gap are sufficient to conditions for \(H\) having a local gap in the sense of Assumption~\nameref{ass:localGAP_main}.%
\footnote{%
    This fact is used for proving \nameref{ass:localGAP_main} (which is done in Section~\ref{app:LPPLlocalgap}) for one of the examples discussed in Section~\ref{subsec:examples} below.
}
However, it is a quite restrictive condition that \(\rho\) and \(\rho_*\) are close in such strong topology as~\eqref{eq:LPPLtrace} or~\eqref{eq:LPPLtracecomm}.

\paragraph{Intertwining automorphism}
Another, compared to~\eqref{eq:LPPL} strengthened, way of saying that \(\rho\) and \(\rho_*\) are close to each other, is to assume the following: There exists a intertwining norm-preserving \(*\)-automorphism \(\tau\) on \(\Alg\), i.e.~\(\expectation{\interpunct}_\rho = \expectation{\tau \Ab{\interpunct}}_{\rho_*}\), which satisfies
\begin{equation}
    \label{eq:AUTO}
    \norm[\big]{( \tau - \unit )\Ab{A}}
    \le
    C \, \norm{A} \, \diam(X)^\ell \, \chi_{b,p}\pdist{X,\Lambda \setminus \Lambdag}
    ,
\end{equation}
where \(\unit\) is the identity map.
Similarly to~\eqref{eq:LPPLtrace}, we surely have that~\eqref{eq:AUTO} implies~\eqref{eq:LPPL}.
Moreover,~\eqref{eq:AUTO} is a sufficient condition for having \nameref{ass:localGAP_main} (see Proposition~\ref{prop:mechforlocalgap} and Figure~\ref{fig:impl} below), as will be used for proving that both of the examples studied in Section~\ref{subsec:examples} below satisfy \nameref{ass:localGAP_main} (see Section~\ref{app:LPPLlocalgap}).

\subsection{Intrinsic local gap conditions}
\label{subsec:intrinsic}
In this section, we describe two ways of \emph{intrinsically} expressing that an equilibrium state \(\rho\) of a Hamiltonian \(H\) is locally gapped.
In contrast to the previous extrinsic formulation, they are called \emph{intrinsic}, as they do not refer to another (parent) Hamiltonian.

\paragraph{Local spectral gap conditions \emph{only} for ground states}
In case of \(\rho\) being the (unique) \emph{ground state}, a very natural way to connect the notion of locality with spectral analysis is to require that a variational condition characterizing the spectral gap is tested only locally.
Generally speaking, assume that \(\psi_0\) is the unique ground state with eigenvalue \(E_0\) of a local Hamiltonian \(H\) in some underlying physical space \(\Lambda\).
Then, a spectral gap above \(E_0\) of size (at least) \(g > 0\) is characterized by
\begin{equation}
    \label{eq:Gapchar}
    \inf_{\psi \perp \psi_0} \frac{\langle \psi, (H-E_0) \psi\rangle}{\langle \psi, \psi \rangle}
    \ge
    g
    .
\end{equation}
If the minimization in~\eqref{eq:Gapchar} is restricted a smaller set of \(\psi\)'s, i.e.~those which are localized to a region, say, \(\Lambdag \subset \Lambda\) in some appropriate sense, one could say that \(H\) is locally gapped in \(\Lambdag\).
Alternatively, for every fixed \(\psi\), the gap size \(g\) could be assumed to be non-constant but dependent on the distance of the support of \(\psi\) to the region \(\Lambdag\).

For quantum spin systems on the graph \(\Lambda\), it can easily be checked that the analog of~\eqref{eq:Gapchar} for an SLT Hamiltonian \(H\) with unique ground state \(\rho\) is
\begin{equation}
    \label{eq:spectralgap}
    \I \, \expectation{A^* \calL\Ab{A}}_\rho
    \ge
    g \, \paren[\Big]{
        \expectation{A^* A}_\rho - \abs[\big]{\expectation{A}_\rho}^2
    }
    ,
\end{equation}
for all observables \(A \in \Alg\), where \(\calL \Ab{\interpunct} := -\I \, [H, \interpunct]\) denotes the Liouvillian.
We now give two options to turn~\eqref{eq:spectralgap} into a local gap condition.

As a first option, one could require that the \emph{gap size \(g\) decays} as the support \(X\) of the observable \(A \in \Alg_X\) approaches the complement of \(\Lambdag\), e.g.~as
\begin{equation}
    \label{eq:GAPdecay}
    \I \, \expectation{A^* \calL\Ab{A}}_\rho
    \ge
    g
    \, \paren*{
        1- C \diam(X)^\ell \, \chi_{b,p}\pdist{X, \Lambda \setminus \Lambdag}
    }
    \, \paren[\Big]{
        \expectation{A^* A}_\rho - \abs[\big]{\expectation{A}_\rho}^2
    }
    .
\end{equation}
In case that \(\rho = \ket{\psi} \bra{\psi}\) with \(\psi\) being a \emph{product state}, variants of Proposition 14 and Lemma 15 in~\cite{YSL24} can be used to show the following: The gap decay condition~\eqref{eq:GAPdecay} implies a slightly weakened version (see~\eqref{eq:LDGweak_discuss} below) of our local dynamical gap condition; see Proposition~\ref{prop:mechforlocalgap}~(viii).

We further remark that, if one is interested in \emph{taking the thermodynamic limit}, \(\Lambda \nearrow \Gammar\) for some infinite graph \(\Gammar\), and \(\Lambdag \nearrow \Gammar\) in this limit (e.g.~in the scenario of gap-closing edge modes), then~\eqref{eq:GAPdecay} yields a \emph{gap in the bulk} of the naturally associated infinite system (see~\cite[Remark~4]{HW2022}).%
\footnote{%
    More precisely, this also requires that an interaction \(\Phi = \Phi_H\), associated to the SLT Hamiltonian \(H\), \emph{has a thermodynamic limit} in a suitable sense (see~\cite[Definition 2.1]{HT2020unif},~\cite[Definition~3.1]{HT2020bulk}, and~\cite[Definitions~2~and~5]{HW2022}).
    In this case, the linear functional \(A \mapsto \trace{PA}\) converges in weak\(^*\) sense to a \emph{state} on the \(C^*\)-algebra of quasi-local observables on \(\Gammar\).
    A \emph{gap in the bulk} then means, that the naturally associated GNS Hamiltonian of the infinite system has a spectral gap above zero.
}

Another, compared to~\eqref{eq:GAPdecay} weaker, option is to include a separate \emph{additive} error term on the lhs.~of~\eqref{eq:spectralgap}, e.g.~as
\begin{equation}
    \label{eq:COERdefect}
    \I \, \expectation{A^* \calL\Ab{A}}_\rho
    \ge
    g \, \paren[\Big]{
        \expectation{A^* A}_\rho - \abs[\big]{\expectation{A}_\rho}^2
    }
    - C \, \norm{A}^2 \, \diam(X)^\ell \, \chi_{b,p} \paren[\big]{\dist{X, \Lambda \setminus \Lambdag}}
    ,
\end{equation}
which we call \emph{defective coercivity} for the following reason.

The global gap characterization can equivalently be rewritten as
\begin{equation*}
    \I \, \expectation{\tilde{A}^* \calL\Ab{\tilde{A}}}_\rho
    \ge
    g \, \expectation{\tilde{A}^* \tilde{A}}_\rho
    \quad \text{with} \quad
    \tilde{A}
    :=
    A - \expectation{A}_\rho
    ,
\end{equation*}
which means that on \(\Alg^\perp:= \Set{\rho}^\perp = \Set{ \tilde{A} \given A \in \Alg} \subset \Alg\) the bounded sesquilinear form
\begin{equation*}
    \mathcal{B}\colon \Alg^\perp \times \Alg^\perp \to \C \,, (A,B) \mapsto \I \, \expectation{A^* \calL\Ab{B}}_P
\end{equation*}
is \emph{coercive} with respect to the semi-norm \(\vertiii{A}:= \sqrt{\expectation{A^* A}_\rho}\) on \(\Alg^\perp\).
Hence,~\eqref{eq:COERdefect} expresses some defect in the original coercivity of~\eqref{eq:spectralgap}.
We remark that, at least morally, one could use the Lax-Milgram theorem to deduce existence of an inverse of the Liouvillian \(\calL\) given such a coercivity estimate.
However, the problem is that the inverse obtained in this way does not necessarily have any nice locality properties, which are crucially used for practical purposes.

Finally, we point out that the defective coercivity~\eqref{eq:COERdefect} is implied by the `classical' strong LPPL~\eqref{eq:LPPL}.
More precisely, assume there exists a parent Hamiltonian \(H_*\) which is related to \(H\) like in~\eqref{eq:HHJ}.
Then, if the globally gapped ground state \(\rho_*\) of \(H_*\) satisfies the strong form~\eqref{eq:LPPL} of LPPL, then~\eqref{eq:COERdefect} holds -- modulo adjusting the constants \(b\), \(p\), \(\ell\), and \(C\); see Proposition~\ref{prop:mechforlocalgap} and Figure~\ref{fig:impl} below.

\paragraph{Local dynamical gap condition}
The -- at least in view of our application -- most important property of an SLT Hamiltonian with a gapped part of its spectrum, is that (the offdiagonal part of) its Liouvillian can be \emph{locally} inverted; see, e.g.,~\eqref{eq:invertintro} in Section~\ref{subsec:LDG} and also~\cite[Appendix C]{Teufel2020}.
In these applications, roughly said, the local invertibility of the Liouvillian guarantees that the effect of perturbations remains local.
More precisely, it allowed to prove automorphic equivalence of gapped ground states~\cite{BMNS2012, MO2020} and adiabatic theorems in cases where the perturbation is not allowed to close the gap~\cite{BRF2018,MT2019} and later also for gap closing perturbations~\cite{Teufel2020,HT2020,HT2020unif,HT2020bulk}.

As already noted in Section~\ref{subsec:LDG}, the local invertibility of the Liouvillian is in fact \emph{equivalent} to the Hamiltonian having a gapped part of its spectrum (see Proposition~\ref{prop:GDG}), which yields a \emph{dynamical} characterization of a \emph{spectral} property.
A local version of this feature is formulated in our \emph{local dynamical gap condition} in Assumptions \nameref{ass:localGAP} and \nameref{ass:localGAP_main}: Let \(\calI \equiv \calI_{H, g}\) denote the inverse Liouvillian from~\eqref{eq:invliouintro} with gap size \(g > 0\).
Then, an equilibrium state \(\rho\) (\emph{not} necessarily the ground state!) of \(H\) is locally dynamically gapped, if for all observables \(A \in \alg_X\) and \(B \in \alg_Y\) localized in \(X\subset \Lambda\) and \(Y \subset \Lambda\), it holds that
\begin{equation}
    \label{eq:LDGdiscuss}
    \abs*{\expectation[\big]{\commutator*{\calL \circ \calI\Ab{A} - A , B}}_{\rho}}
    \le
    \begin{multlined}[t]
        C \norm{A} \, \norm{B} \, \big[\diam(X)+ \diam(Y)\big]^\ell
        \\\times
        \chi_{b,p}\paren*{\dist{X, \Lambda \setminus \Lambdag}+ \dist{Y, \Lambda \setminus \Lambdag}} \,.
    \end{multlined}
\end{equation}
In the special case of \(\ell = 0\), by taking a supremum over all observables \(B\) with \(\norm{B} \le 1\) in~\eqref{eq:LDGdiscuss}, we find that -- for \(\ell = 0\) -- our local dynamical gap condition~\eqref{eq:LDGdiscuss} is actually \emph{equivalent} to
\begin{equation*}
    \norm[\big]{[\calL\circ \calI\Ab{A} - A, \rho]}_{\tr}
    \le
    C \norm{A} \, \chi_{b,p}\paren*{\dist{X, \Lambda \setminus \Lambdag}}
    .
\end{equation*}
Here, we additionally used cyclicity of the trace together with \(\sup_{\norm{D} \le 1} \abs{\trace{CD}} = \norm{C}_{\tr}\).

Moreover, we point out that, while the rhs.~of~\eqref{eq:LDGdiscuss} is obviously symmetric in \(A\) and \(B\), the lhs.~is as well.
This follows by rewriting (recall~\eqref{eq:invliouintro})
\begin{equation}
    \label{eq:Jdef}
    \calL_H \circ \calI_{H,g} \Ab{A} - A
    =
    \int_{\R} \D t \, w_g(t) \, \evol{\I H t}{A}
    =:
    \calJ_{H,g}\Ab{A}
    =
    \calJ\Ab{A}
\end{equation}
and using \([H,\rho] = 0\) together with the symmetry \(w_g(t) = w_g(-t)\).

A slightly weakened (asymmetric) version of~\eqref{eq:LDGdiscuss} would be to require that for all \(X\subset \Lambda\) satisfying \(\diam(X) \le \dist{X, \Lambda\setminus\Lambdag}^\beta\) (for some \(\beta > 0\)) and observables \(A \in \alg_X\), and \(Y \subset \Lambda\) and observables \(B \in \alg_Y\), it holds that
\begin{equation}
    \label{eq:LDGweak_discuss}
    \abs*{\expectation[\big]{\commutator*{\calL \circ \calI\Ab{A} - A , B}}_{\rho}}
    \le
    C \, \norm{A} \, \norm{B}
    \, \diam(Y)^\ell
    \, \chi_{b,p}\paren[\big]{\dist{X, \Lambda \setminus \Lambdag} }
    .
\end{equation}
This version will be introduced as the \emph{weakened local gap condition} \nameref{ass:localGAPweak} in Section~\ref{subsec:LDGweak} below.
Such an assumption will in fact be sufficient for proving our main result in Theorem~\ref{thm:linear-response}.

Finally, we remark that another symmetric (in \(A\) and \(B\)) bound on the lhs.~of~\eqref{eq:LDGdiscuss} would be to replace the sum of the distances in the exponent in~\eqref{eq:LDGdiscuss} by \(\dist{X \cup Y, \Lambda \setminus \Lambdag}\) (i.e.~take the {minimum} of the distances to \(\Lambda\setminus \Lambdag\) instead of their maximum).
Using quasi-locality estimates for \(\calJ\Ab{\interpunct}\) defined in~\eqref{eq:Jdef} (see, e.g.,~\cite[Lemma 5.1]{NSY2019}), this bound could actually be proven as a consequence of the LPPL principle~\eqref{eq:LPPL}, provided that there is a globally spectrally gapped parent Hamiltonian \(H_*\) for \(H\).
Therefore, similarly to the paragraph below~\eqref{eq:LPPL}, using this assumption, the decay in \(\dist{\Lambdap, \Lambda\setminus \Lambdag}\), emerging in the course of proving Proposition~\ref{prop:adswitch} in Section~\ref{sec:proof}, would eventually be lost.

\subsection{Summary and comparison}
\label{subsec:compare}

In the previous two Sections~\ref{subsec:extrinsic}--\ref{subsec:intrinsic} we described several different ways of expressing that a Hamiltonian is locally gapped, distinguishing between \emph{extrinsic} (Section~\ref{subsec:extrinsic}) and \emph{intrinsic} (Section~\ref{subsec:intrinsic}) formulations.

While the extrinsic conditions are easy to formulate, they rely -- by definition -- on a reference (parent) Hamiltonian with a global gap satisfying some form of LPPL principle (cf.~\eqref{eq:LPPL}--\eqref{eq:AUTO}).
Since for a system of interest, it is not guaranteed to have such a well-understood parent Hamiltonian available, it is conceptually more desirable to formulate a local gap condition in an intrinsic way.
Or, in other words, saying that a system is (or behaves as if it were) \emph{locally} gapped should not only make sense relative to another \emph{globally} gapped system.

In the intrinsic category, we formulated two local \emph{spectral} gap conditions~\eqref{eq:GAPdecay}--\eqref{eq:COERdefect}, which, however, are (i) only meaningful for (non-degenerate) ground states and (ii) although mathematically clean, hardly applicable in physical problems in a direct way (apart from the connection in Proposition~\ref{prop:mechforlocalgap}~(viii)).
These two issues are then resolved by our local dynamical gap condition (LDG) in~\eqref{eq:LDGdiscuss} and its weakened version in~\eqref{eq:LDGweak_discuss}.

\begin{figure}[h]
    \begin{tikzcd}[column sep=small]
        & \boxed{\text{LPPL in} \ \eqref{eq:LPPL}}_{\, \mathrm{ext}} \ar[r, Rightarrow, "\mathrm{(iv)}"]
        & \quad \boxed{\text{def.~coerc.~in} \ \eqref{eq:COERdefect}}_{\, \mathrm{int}}
        \\
        \boxed{\norm{\interpunct}_{\tr} \ \text{in} \ \eqref{eq:LPPLtrace}}_{\, \mathrm{ext}} \ar[ur, Rightarrow, "\mathrm{(i)}"] \ar[dr, Rightarrow, "\mathrm{(ii)}"] \hspace{-4mm}
        & \boxed{(\tau - \unit) \ \text{in} \ \eqref{eq:AUTO}}_{\, \mathrm{ext}} \ar[u, Rightarrow, "\mathrm{(iii)}"]\ar[dr, Rightarrow, "\mathrm{(vii)}"]
        & \quad \boxed{\text{gap-dec.~in} \ \eqref{eq:GAPdecay}}_{\, \mathrm{int}} \ar[u,Rightarrow, "\mathrm{(v)}"] \ar[r, Rightarrow, "\mathrm{(viii)}"] & \boxed{\text{LDG\(_{\mathrm{weak}}\) in} \ \eqref{eq:LDGdiscuss}}_{\, \mathrm{int}}
        \\
        & \boxed{\norm{[\interpunct, \interpunct ]}_{\tr} \ \text{in} \ \eqref{eq:LPPLtracecomm}}_{\, \mathrm{ext}} \ar[r, Rightarrow, "\mathrm{(vi)}"]
        & \quad \boxed{\text{LDG in} \ \eqref{eq:LDGdiscuss}}_{\, \mathrm{int}} \ar[ur, Rightarrow, "\mathrm{(ix)}"]& \\
    \end{tikzcd}
    \caption{
        Implications among the various local gap conditions from Sections~\ref{subsec:extrinsic}--\ref{subsec:intrinsic}.
        The numbering refers to the precise statements in Proposition~\ref{prop:mechforlocalgap}.
    }
    \label{fig:impl}
\end{figure}
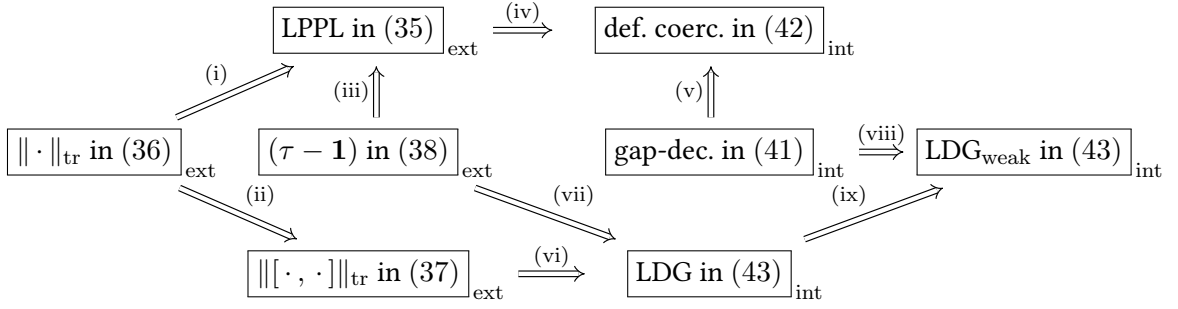

The following proposition formulates the implications shown in Figure~\ref{fig:impl} precisely.

\begin{prop}[Relations among the local gap conditions]
    \label{prop:mechforlocalgap}
    Fix \(g> 0\), \(d \in \N\), \(\Cvol> 0\), \(\tilde{b} > 0\), \(\tilde{p} \in (0,1]\), \(\Cint > 0\).
    Let \(\Lambda \in \mathcal{G}(d, \Cvol)\), \(\Lambdag \subset \Lambda\) and let \(H_*\) and \(J\) be SLT-operators with corresponding interactions satisfying \(\norm{\Phi_{H_*}}_{\tilde{b},\tilde{p}} < \Cint\) and \(\norm{\Phi_J}_{\tilde{b},\tilde{p};\Lambda \setminus \Lambdag} < \Cint\).
    Let \(\rho_*\) be an equilibrium state of \(H_*\) and \(\rho\) and equilibrium state of \(H := H_* + J\).

    Then the following implications hold (modulo adjusting the constants \(C,\ell, b\), and \(p\) in a way which only depends on \(g,d, \Cvol, \tilde{b}, \tilde{p}\), and \(\Cint\)):

    \begin{itemize}
        \item[(i)] Trace norm LPPL implies commutator trace norm LPPL, i.e.~we have \(\eqref{eq:LPPLtrace} \, \implies \,~\eqref{eq:LPPLtracecomm}\).
        \item[(ii)] Trace norm LPPL implies the usual LPPL, i.e.~we have \(\eqref{eq:LPPLtrace} \, \implies \,~\eqref{eq:LPPL}\).
        \item[(iii)] An intertwining automorphism implies the usual LPPL, i.e.~we have \(\eqref{eq:AUTO} \, \implies \,~\eqref{eq:LPPL}\).
        \item[(iv)] Additionally assuming that \(\rho_*\) is the unique ground states of \(H_*\) with a spectral gap of size at least \(g > 0\) above its ground state (see~\eqref{eq:spectralgap}), the usual LPPL implies defective coercivity, i.e.~\(\eqref{eq:LPPL} \, \implies \,~\eqref{eq:COERdefect}\).
        \item[(v)] A decaying gap size implies defective coercivity, i.e.~\(\eqref{eq:GAPdecay} \, \implies \,~\eqref{eq:COERdefect}\).
        \item[(vi)] Additionally assuming that \(\rho_*\) has a LDG~\eqref{eq:LDGdiscuss} (e.g.~if \(\rho_*\) is a normalized projection onto a gapped spectral patch of \(H_*\); see Proposition~\ref{prop:GDG}),
            the commutator trace norm LPPL implies LDG for \(\rho\), i.e.~\(\eqref{eq:LPPLtracecomm} \, \implies \,~\eqref{eq:LDGdiscuss}\).
        \item[(vii)] Additionally assuming that \(
                \expectation{[\calL_{H_*} \circ \calI_{H_*,g}\Ab{A} - A,B]}_{\rho_*}
                =
                0
            \) for all \(A,B \in \Alg\) (e.g.~if \(\rho_*\) is a normalized projection onto a gapped spectral patch of \(H_*\); see Proposition~\ref{prop:GDG}), an intertwining automorphism implies LDG, i.e.~\(\eqref{eq:AUTO} \, \implies \,~\eqref{eq:LDGdiscuss}\).
        \item[(viii)] Assume that \(\rho = \ket{\psi} \bra{\psi}\) and \(\psi\) is a product state.
            Then, a decaying gap size implies a weak LDG, i.e.~\(\eqref{eq:GAPdecay} \, \implies \, ~ \eqref{eq:LDGweak_discuss}\).
        \item[(ix)] A local dynamical gap implies a weakened local dynamical gap, i.e.~\(\eqref{eq:LDGdiscuss} \, \implies \, ~ \eqref{eq:LDGweak_discuss}\).
    \end{itemize}

\end{prop}

\subsection{Two exemplary systems with a local dynamical gap}
\label{subsec:examples}
In this section, we will discuss two exemplary systems, which we will show to satisfy the local gap condition~\eqref{eq:local gap} by means of Proposition~\ref{prop:mechforlocalgap}~(vi) and~(vii).
The first example in Section~\ref{subsec:Ising} is concerned with a (local perturbation of a) classical Ising model.
It is contained in a whole class of examples considered in Section~\ref{subsec:frustfree}, which are studied based on~\cite{BRDF2021}.
We nevertheless discuss it separately, as proving it to satisfy the local dynamical gap condition \nameref{ass:localGAP_main} is elementary, in particular not relying on~\cite{BRDF2021}.
The actual proofs for the two examples are given in Section~\ref{subsec:proofexample}.

\subsubsection{Perturbations of the classical Ising model with weak interaction}
\label{subsec:Ising}

As the first example we consider the classical Ising model on \(\Lambda \subset \Z^d\)
\begin{equation}
    \label{eq:classIsing}
    H_*
    =
    \sum_{x \in \Lambda} \sigma_x^3 \, + \, \frac{1}{2}\sum_{x, y \in \Lambda} \lambda(x-y) \, \sigma_x^3 \sigma_y^3
    ,
\end{equation}
where \(\sigma_x^i\) is the \(i^{\mathrm{th}}\) Pauli matrix \(\sigma_i\) acting only on the spin on site \(x \in \Lambda\).
More precisely,
\begin{subequations} \label{eq:Pauli}
    \begin{equation}
        \sigma_x^i
        =
        \unit \otimes \dotsm \otimes \unit \otimes\underbrace{ \sigma^i}_{\text{site} \ x} \otimes \unit \otimes \dotsm \otimes \unit \in \mathcal{B}\paren[\big]{\otimes_{x \in \Lambda} \HSpace_x}
        ,
    \end{equation}
    where, as usual,
    \begin{equation}
        \sigma^1 :=
        \begin{pmatrix}
            0 & 1 \\ 1 & 0
        \end{pmatrix}
        \,, \quad \sigma^2 :=
        \begin{pmatrix}
            0 & \I \\ - \I & 0
        \end{pmatrix}
        \,, \quad \sigma^3 :=
        \begin{pmatrix}
            1 & 0 \\0 &- 1
        \end{pmatrix}
        .
    \end{equation}
\end{subequations}
The Hamiltonian \(H_*\) features a magnetic field of unit strength in \(3\)-direction and (small) symmetric coupling function \(\lambda: \Gamma \to \R\) of finite range, i.e.~\(\norm{\lambda}_1 := \sum_x \abs{\lambda(x)} < 2\) and there exists some \(R > 0\) such that \(\lambda(x) = 0\) for \(\abs{x} > R\).
In particular, for any \(p\in (0,1]\) and \(b>0\), there exists a constant \(C_* > 0\), such that \(\norm{\Phi_{H_*}}_{b,p} \le C_*\), uniformly in \(\Lambda\), where \(\Phi_{H_*}\) is the canonical interaction realizing \(H_*\).

As we show in Section~\ref{subsubsec:Isingproof}, \textbf{every ground state \(\rho\) for every perturbation of the form \(H = H_* + J\) has a local dynamical gap} in the sense of Assumption \nameref{ass:localGAP_main}.
Here, \(J\) is a \emph{strictly} \(\Lambda \setminus \Lambdag\)-localized SLT Hamiltonian, i.e.~there exists a constant \(C_J > 0\) such that for some \(p\in (0,1]\) and \(b>0\), we have \(\norm{\Phi_J}_{b,p} \le C_J\) for some interaction \(\Phi_J\) realizing \(J\), for which also \(\Phi_J(X) = 0\) whenever \(X \cap \Lambdag \neq \emptyset\) (recall Lemma~\ref{lem:SLTlocal}~(i)).

The example~\eqref{eq:classIsing} can immediately be generalized to an arbitrary graph \(\Lambda \in \mathcal{G}(d, \Cvol)\).
Moreover, our above assertions remain valid for any Hamiltonian with gapped on-site terms and sufficiently weak mutually commuting finite range interactions that can be simultaneously diagonalized with the on-site terms (i.e.~one has a \emph{classical} system).

\subsubsection{Perturbations of frustration free product states}
\label{subsec:frustfree}
As the second basic example, we consider an SLT Hamiltonian \(H_*\) of the form
\begin{equation}
    \label{eq:frustfree}
    H_* = \sum_{Z \subset \Lambda} \Phi(Z)
    ,
\end{equation}
for which there exist \(p\in (0,1]\) and \(b>0\) and a constant \(C_* > 0\) such that \(\norm{\Phi}_{b,p} \le C_*\).
Moreover \(\Lambda \in \mathcal{G}(d, \Cvol)\) in~\eqref{eq:frustfree} is a finite graph as described in Section~\ref{sec:setup}.
Apart from the locality of the interaction \(\Phi\), we will impose the following further conditions (cf.~\cite{BRDF2021}).

\begin{assumption}{(A1)}{Frustration-free ground state}
    \label{ass:frustfree}
    All terms in the SLT Hamiltonian~\eqref{eq:frustfree} are non-negative, i.e.~\(\Phi(Z) \ge 0\) for all \(Z \subset \Lambda\).
    There exists a unique (up to a phase) normalized vector \(\psi_* \in \HSpace\) such that \(\psi_* \in \ker \Phi(Z)\) for all \(Z \subset \Lambda\).
    The corresponding ground state (projection) is denoted by \(\rho_* = \ket{\psi_*} \bra{\psi_*}\).
\end{assumption}

We note that the frustration free assumption depends on the explicit way~\eqref{eq:frustfree} the Hamiltonian is written, i.e.~on the interaction \(\Phi\).

\begin{assumption}{(A2)}{Product property and regularity}
    \label{ass:regular}
    The vector \(\ket{\psi_*} \in \HSpace\) factorizes as \(\ket{\psi_*} = \otimes_{z \in \Lambda} \ket{\psi_{*,z}}\) and for every \(\Omega \subset \Lambda\), the unique ground state vector of
    \begin{equation}
        \label{eq:frustfreerestricted}
        H_*\big\vert_{\Omega} = \sum_{Z \subset \Omega} \Phi(Z)
    \end{equation}
    is given by \(\ket{\psi_*\vert_{\Omega}} = \otimes_{z \in \Omega} \ket{\psi_{*,z}}\).
\end{assumption}

The latter condition can be thought of as a strong variant of the common \emph{local topological quantum order (LTQO)}.
Moreover, it is possible to relax both, the product property of \(\ket{\psi_*}\) and the strong LTQO condition, in the following way: Instead of the product property, we could only assume that \(\ket{\psi_*}\) (possibly upon adjoining an auxiliary state after doubling the Hilbert space, cf.~Assumption 4 in~\cite{BRDF2021}) is unitarily conjugate -- with SLT-generator -- to a product state; instead of the strong LTQO property, we could only assume that, upon adjoining a suitable state \(
    \ket{\psi_*\vert_{\Lambda \setminus\Omega}} \in \otimes_{z \in \Lambda \setminus \Omega} \HSpace_x
\), the unique ground state \(\ket{\psi_*\vert_{\Omega}}\) of~\eqref{eq:frustfreerestricted} is unitarily conjugate -- with SLT-generator localized at the boundary of \(\Omega\) -- to \(\ket{\psi_*}\), i.e.~\(
    \ket{\psi_*\vert_{\Omega}} \otimes \ket{\psi_*\vert_{\Lambda \setminus\Omega}}
    =
    \E^{\I F} \ket{\psi_*}
\), where \(F\) is an SLT operator localized at the boundary of \(\Omega\).
However, we refrain from going into this further generalization of our exemplary system~\eqref{eq:frustfree} for simplicity.

The final assumption on~\eqref{eq:frustfree} concerns the spectral gaps
\begin{equation*}
    \gamma
    :=
    \inf \paren[\big]{\spectrum(H_*) \setminus \{0\}}
    \qquad \text{and} \qquad
    \gamma(\Omega)
    :=
    \inf \paren[\big]{\spectrum(H_*\big\vert_{\Omega}) \setminus \{0\}}
\end{equation*}
of \(H_*\) and its restrictions \(H_*\big\vert_{\Omega}\) to some \(\Omega \subset \Lambda\), respectively.

\begin{assumption}{(A3)}{Gap condition}
    \label{ass:gap}
    The SLT Hamiltonian \(H_*\) from~\eqref{eq:frustfree} has a spectral gap, i.e.~\(\gamma > 0\).
    Moreover, the gap of restrictions~\eqref{eq:frustfreerestricted} of \(H_*\) to \emph{balls} \(\Omega = B_r(x)= \Set{y \in \Lambda \given d(x,y) \le r}\) in \(\Lambda\) shrink at most polynomially with the radius, i.e.~there exist \(C_\gamma, d_\gamma > 0\) such that for every \(x \in \Lambda\) it holds that
    \begin{equation}
        \frac{1}{\gamma(B_r(x))} \le C_\gamma r^{d_\gamma}
        .
    \end{equation}
\end{assumption}

An exemplary system satisfying all of the above assumptions (up to a constant energy shift) is given by the Heisenberg XXZ model for small enough nearest neighbor interactions \(\abs{\lambda_1}\), \(\abs{\lambda_3}\) (depending on the dimension \(d\) and the constant \(\Cvol\), cf.~\eqref{eq:dimension} and~\eqref{eq:Gset}).
The corresponding Hamiltonian is given by
\begin{equation*}
    H_*
    =
    \sum_{x\in \Lambda} \sigma_x^3 + \sum_{(x,y) \in E(\Lambda)} \lambda_1 \, {\sigma}^1_x {\sigma}^1_y + \lambda_1 \, {\sigma}^2_x {\sigma}^2_y + \lambda_3 \, {\sigma}^3_x {\sigma}^3_y
    ,
\end{equation*}
where \(E(\Lambda)\) denote the edges of \(\Lambda\) and we recall the notations~\eqref{eq:Pauli}.

Back to the general setting, as we show in Section~\ref{subsec:frustfreeproof} building on~\cite{BRDF2021}, \textbf{every ground state \(\rho\) for every perturbation of the form \(H = H_* + J\) has a local dynamical gap} in a slightly weakened sense of \nameref{ass:localGAP_main}; see Assumption \nameref{ass:localGAPweak} in Section~\ref{sec:proof}, which is, however, sufficient for obtaining our main result, Theorem~\ref{thm:linear-response}.
Here, \(J\) is a \emph{strongly} \(\Lambda \setminus \Lambdag\)-localized SLT Hamiltonian, which means that there exists a constant \(C_J > 0\) such that for some \(p\in (0,1]\) and \(b>0\), we have \(\norm{\Phi_J}_{b,p} \le C_J\) for some interaction \(\Phi_J\) realizing \(J\), for which also \(\Phi_J(Z) = 0\) whenever \(Z \cap \Lambdag \neq \emptyset\) (recall Lemma~\ref{lem:SLTlocal}~(ii)).

\section{Construction of the NEASS\@: Proofs of Propositions~\ref{prop:NEASS}--\ref{prop:expansionNEASS}}
\label{sec:proof}

The fundamental conceptual idea behind the proof of Proposition~\ref{prop:adswitch} is a perturbative scheme, which was called \emph{space-time adiabatic perturbation theory} in~\cite{PST2003,PST2003b}.
Before going into this expansion in Section~\ref{subsec:adpert}, we show that a weakened version of our local dynamical gap condition carries over to SLT operators (see Lemma~\ref{lem:LDGweak} in Section~\ref{subsec:LDGweak}).
The main technical input for carrying out the space-time adiabatic perturbation scheme, is to show that all the operations involved in the expansion preserve localization of SLT operators as required for Lemma~\ref{lem:LDGweak} to apply.
This is the content of several auxiliary technical results in Appendix~\ref{app:technical}.

\subsection{Weakened version of the local dynamical gap condition for SLT operators}
\label{subsec:LDGweak}
Throughout the proof, we will work under the following weakened version of the original local dynamical gap condition (see Assumptions \nameref{ass:localGAP} and \nameref{ass:localGAP_main}).

\begin{assumption}{(LDG\textsubscript{weak})}{Local dynamical gap condition -- weakened version}
    \label{ass:localGAPweak}
    \noindent
    We say that an equilibrium state \(\rho_0\) of an SLT operator \(H_0\), i.e.~with \([H_0, \rho_0] = 0\), is \emph{weakly locally dynamically gapped} of size at least \(g > 0\) in a region \(\Lambdag \subset \Lambda\) with respect to \(C_\gap\), \(b\), \(p\), \(\beta > 0\) and \(\ell\in \N_0\), if and only if the following holds: For all \(X\subset \Lambda\) satisfying \(\diam(X) \le \dist{X, \Lambda\setminus\Lambdag}^\beta\) and observables \(A \in \alg_X\), and \(Y \subset \Lambda\) and observables \(B \in \alg_Y\), it holds that
    \begin{equation}
        \label{eq:local gap weak}
        \abs*{\expectation[\big]{\commutator*{\calL_{H_0} \circ \calI_{H_0,g}\Ab{A} - A , B}}_{\rho_0}}
        \le
        C_\gap \, \norm{A} \, \norm{B}
        \, \diam(Y)^\ell
        \, \chi_{b,p}\paren[\big]{\dist{X, \Lambda \setminus \Lambdag} }
        .
    \end{equation}
\end{assumption}

As mentioned above, we start with the following basic lemma which will heavily be used in our proof.
It says that the local dynamical gap condition naturally carries over to SLT operators.

\begin{lem} \label{lem:LDGweak}
    Let \(b > 0\), \(p \in (0,1)\) and \(H_0\) be a \(\chi_{b, p}\)-SLT operator.
    Assume that the equilibrium state \(\rho_0\) of \(H_0\) satisfies the weakened local gap condition, Assumption \nameref{ass:localGAPweak} above, with gap size at least \(g > 0\) and with respect to \(C_{\mathrm{gap}}\), \(b\), \(p\), \(\beta\) and \(\ell\).

    Then, there exists a constant \(C\), such that for any \((b,p,\Omega)\)-localized SLT operator \(A\) and observable \(B \in \Alg_Y\), we have that
    \begin{equation}
        \label{eq:lem-LDGweak}
        \abs*{\expectation[\big]{\commutator*{\calL_{H_0} \circ \calI_{H_0,g}\Ab{A} - A , B}}_{\rho_0}}
        \le
        C \, \diam(Y)^{\ell + d} \, \norm{B} \, \norm{\Phi_A}_{b,p; \Omega} \, \chi_{b/2,p \min\{\beta, 1\}}\paren[\big]{\dist{\Omega, \Lambda \setminus \Lambdag}}
        .
    \end{equation}
\end{lem}

The proof of Lemma~\ref{lem:LDGweak} is presented in Appendix~\ref{app:proof-LDGweak}.
The principal idea is to write \(A = \sum_{Z \subset \Lambda} \Phi_A(Z)\) and then estimate only the contribution of `small' \(Z\)'s (i.e.~those with \(\diam(Z) \le \dist{Z, \Lambda \setminus \Lambdag}^\beta\)) by~\eqref{eq:local gap weak}.
Large \(Z\)'s (i.e.~those with \(\diam(Z) > \dist{Z, \Lambda \setminus \Lambdag}^\beta\)) are treated using Lieb-Robinson bounds and the smallness of \(\norm{\Phi_A(Z)}\) (by definition~\eqref{eq:interaction-norm-localization}).

\subsection{The adiabatic perturbation scheme}
\label{subsec:adpert}

For the proof of Proposition~\ref{prop:adswitch} we use the same strategy as in \textcite{Teufel2020}.
However, since we only have a local gap, the lower order terms do not vanish exactly, but can be bounded using Assumption~\nameref{ass:localGAPweak}.

The statements in Propositions~\ref{prop:adswitch}, will be deduced from a time-dependent NEASS, which is part of the next Theorem.
In contrast to the previous works, it will not include a time-dependent unperturbed Hamiltonian \(H_0(t)\), because there is no spectral flow available.
That means, we construct a time-dependent NEASS \(\Pi_n^\epsi(t)\) specifically for the switching Hamiltonian given in~\eqref{eq:Hamiltonian-switching}.
For times \(t \geq 0\) it will turn out to be time-independent.

In order to formulate the result, we introduce time-dependent interactions
\begin{equation*}
    \Phi \colon I \times \{X \subset \Lambda\} \to \Alg^N, \quad (t,X) \mapsto \Phi(t,X) = \Phi(t,X)^* \in \Alg_X
\end{equation*}
for \(I \subset \R\).
We will assume that \(t \mapsto \Phi(t,X)\) is smooth for every \(X \subset \Lambda\) and we denote the term-wise time derivatives by \(\Phi^{(k)}\), i.e.~\(\Phi^{(k)}(t,X) = \frac{\D^k}{\D t^k} \Phi(t,X)\) for every \(X \subset \Lambda\).
Moreover, we identify \(\Phi(t,X) = \paren[\big]{\Phi(t)}(X)\), such that for every fixed \(t \in I\), \(\Phi(t)\) can be viewed as a time-independent interaction.
The notion of SLT operators naturally translates to the time-dependent setting.

\begin{theorem}[Time-dependent NEASS]
    \label{thm:time-dependent-NEASS}
    Fix \(n\in \N\) and let \(d\in \N\), \(\Cvol>0\), \(b>0\), \(p\in \intervaloc{0,1}\), \(\Cint>0\) and \(g > 0\), \(C_\gap>0\), \(\beta>0\), \(\ell\in \N_0\).
    Take any \(q \in (0, p \min\{1, \beta\})\).
    Then there exist a constant \(C_{n}> 0\) (in particular depending on \(n\))
    such that for all lattices \(\Lambda\in \mathcal{G}(d,\Cvol)\) (recall~\eqref{eq:Gset}), subsets \(\Lambdap\subset \Lambda\), intervals \(I\subset \R\) and SLT-operators \(H_0\) and \(V(t)\), with corresponding interactions that satisfy \(\norm{\Phi_{H_0}}_{b,p}<\Cint\) and \(\sup_{t\in I} \norm{\Phi^{(k)}_{V}(t)}_{b,p;\Lambdap}<\Cint\) for all \(k \le n\), respectively, the following holds:

    Assume that the equilibrium state \(\rho_0\) of \(H_0\) is locally dynamically gapped in \(\Lambdag\) of size at least \(g > 0\) and with respect to \(C_\gap\), \(b\), \(p\) and \(\ell\) according to Assumption \nameref{ass:localGAPweak}.
    And let
    \begin{equation}
        \label{eq:proof-Hamiltonian-switching}
        H_\epsi(t) := H_0 + \epsi \, V(t)
    \end{equation}
    be the perturbed Hamiltonian.

    Then, there exists a sequence \((A_\mu)_{\mu \in \N}\) of time-dependent SLT operators, which are \((1,p',\Lambda^{\mathrm{pert}})\)-localized for any \(p' < p\), such that the state
    \begin{equation}
        \label{eq:NEASS-time-dependent}
        \Pi_n^{\epsi,\eta}(t)
        :=
        \evol{\I S_n^{\epsi,\eta}(t)}{\rho_0}
        \quadtext{with}
        S_n^{\epsi,\eta}(t)
        :=
        \sum_{\mu= 1}^n \epsi^\mu A_\mu(t)
        ,
    \end{equation}
    is almost-stationary for the dynamics generated by \(H_\epsi(t)\) in the following sense: Let \(t_0\in \R\) and let \(\rho^{\epsi, \eta, f}(t)\) be the solution of the time-dependent adiabatic Schrödinger equation
    \begin{equation}
        \label{eq:time-dependent-NEASS-Schrödinger-equation-switching}
        \I\eta \frac{\D }{\D t} \rho^{\epsi, \eta, f}(t)
        =
        \commutator[\big]{H_\epsi(t) , \rho^{\epsi, \eta, f}(t)}
        \quadtext{with}
        \rho^{\epsi, \eta, f}(t_0) = \Pi_n^{\epsi,\eta}(t_0)
    \end{equation}
    with adiabatic parameter \(\eta \in \intervaloc{0,1}\).

    Under these conditions, for all \(B\in \alg_Y\) with \(Y\subset \Lambda\) and \(t \in \R\), it holds that
    \begin{equation}
        \label{eq:time-dependent-NEASS-bound}
        \abs[\big]{
            \expectation{B}_{\rho\ee_{t_0}(t)}
            - \expectation{B}_{\Pi_n\ee(t)}
        }
        \le
        \begin{multlined}[t]
            C_{n}
            \, \norm{B}
            \, \diam(Y)^{3d+\ell}
            \, \paren[\bigg]{1+\frac{\eta^{n+1}}{\epsi^{n+1}}} \, \epsi^{n+1}
            \\\times\, \frac{\abs{t-t_0}}{\eta}
            \, \paren[\bigg]{1 + \frac{v\,\abs{t-t_0}}{\eta}}^{(\ell+2d)/p}
            \\\times \paren[\Big]{1 +
                \e^{-\dist{\Lambdap,\Lambda\setminus\Lambdag}^q - (n+1) \log(\epsi)}
            }
            .
        \end{multlined}
    \end{equation}
    Moreover, the operators \(A_\mu(t)\) at time \(t\) depend only on \(V\) and its first \(\mu\) derivatives at time \(t\).
\end{theorem}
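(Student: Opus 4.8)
The plan is to run the space--time adiabatic perturbation scheme of \cite{Teufel2020,PST2003,PST2003b}, but to carry a \(\Lambdap\)-localization bookkeeping through every step. Writing \(U_n^{\epsi,\eta}(t) := \e^{\I S_n^{\epsi,\eta}(t)}\) and conjugating the adiabatic Schrödinger flow~\eqref{eq:time-dependent-NEASS-Schrödinger-equation-switching} by it, the almost-stationarity of \(\Pi_n^{\epsi,\eta}(t) = \evol{\I S_n^{\epsi,\eta}(t)}{\rho_0}\) amounts to making the commutator with \(\rho_0\) of the conjugated ``space--time Hamiltonian'' \(\e^{-\I S_n^{\epsi,\eta}(t)}\paren[\big]{H_\epsi(t) - \I\eta\,\partial_t}\e^{\I S_n^{\epsi,\eta}(t)}\) small when tested against local observables. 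Expanding in powers of \(\epsi\) (using \(H_\epsi = H_0 + \epsi V\) and \(S_n^{\epsi,\eta} = \Or(\epsi)\)), the coefficient of \(\epsi^\mu\) takes the form \(-\calL_{H_0}\Ab{A_\mu} + \mathcal{F}_\mu\), where \(\mathcal{F}_\mu\) is a universal nested-commutator expression in \(H_0\), \(V\) (and its first \(\mu\) time derivatives) and \(A_1,\dots,A_{\mu-1}\). One then \emph{defines} \(A_\mu := \calI_{H_0,g}\Ab{\mathcal{F}_\mu}\); since \(\calL_{H_0}\circ\calI_{H_0,g}\Ab{\mathcal{F}_\mu} - \mathcal{F}_\mu = \calJ_{H_0,g}\Ab{\mathcal{F}_\mu}\) (recall~\eqref{eq:Jdef}), the order-\(\epsi^\mu\) remainder is precisely of the type controlled by the weakened local dynamical gap condition \nameref{ass:localGAPweak}, hence stretched-exponentially small in \(\dist{\Lambdap,\Lambda\setminus\Lambdag}\) after testing against \(B\). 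This is the single point where the local-gap hypothesis enters; in the globally gapped case this remainder vanishes identically by Proposition~\ref{prop:GDG}, which is the only structural difference from \cite{Teufel2020}.

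The main work --- and the genuinely new input over \cite{Teufel2020} --- is to show that each \(A_\mu(t)\) is a \((1,p',\Lambdap)\)-localized SLT operator for every \(p' < p\), with interaction norm uniform in \(\Lambda\) and \(\Lambdag\) and growing at most polynomially in \(t\) through \(\norm{\Phi^{(k)}_V(t)}_{b,p;\Lambdap}\), \(k\le\mu\). This is proven by induction on \(\mu\). For \(\mu=1\), \(A_1 = \calI_{H_0,g}\Ab{V}\) is \(\Lambdap\)-localized because \(V\) is and because \(\calI_{H_0,g}\), although built from the Heisenberg dynamics of the \emph{full} \(H_0\), preserves localization up to tails controlled by the weight bound~\eqref{eq:wbound} together with the Lieb--Robinson bound for \(\e^{\I H_0 t}\). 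At the inductive step, \(\mathcal{F}_\mu\) is a finite sum of nested commutators of the already-\(\Lambdap\)-localized \(A_1,\dots,A_{\mu-1}\) (and their time derivatives) with the \(\Lambdap\)-localized \(V\) and the \(\Lambda\)-localized \(H_0\), and one checks --- using the bookkeeping lemmas for products, commutators and interaction-picture evolution of localized SLT operators collected in Appendix~\ref{app:technical}, followed by one more application of \(\calI_{H_0,g}\) --- that the result is again a \(\Lambdap\)-localized SLT operator, at the (harmless, since finitely iterated) cost of slightly degrading the decay exponent \(p\). I expect this induction to be the main obstacle: all of \(\calI_{H_0,g}\), the Heisenberg dynamics and the commutators are only \emph{quasi}-local and are taken relative to the full-lattice operator \(H_0\), so keeping the localization anchored to the fixed region \(\Lambdap\) while preserving uniformity in \(\abs{\Lambda}\) and \(\abs{\Lambdag}\) is what forces the delicate estimates of the Appendix.

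With the \(A_\mu(t)\) in hand, it remains to estimate the adiabatic residual \(\mathcal{R}_n^{\epsi,\eta}(t) := \I\,\frac{\D}{\D t}\Pi_n^{\epsi,\eta}(t) - \eta^{-1}\commutator[\big]{H_\epsi(t),\Pi_n^{\epsi,\eta}(t)}\). Conjugating by \(U_n^{\epsi,\eta}(t)\), the quantity \(\eta\,\mathcal{R}_n^{\epsi,\eta}\) decomposes into the truncation error of the \(\epsi\)- and \(\eta\)-expansion --- of order \(\epsi^{n+1}+\eta^{n+1}\) times a uniformly bounded SLT operator --- and the ``diagonal remainders'' \(\calJ_{H_0,g}\Ab{\mathcal{F}_\mu}\), \(\mu\le n\), which by the SLT form of the local gap condition, Lemma~\ref{lem:LDGweak}, contribute after testing against \(B\) only the stretched-exponential factor with exponent \(-\dist{\Lambdap,\Lambda\setminus\Lambdag}^{q}\), for any \(q < p\min\{1,\beta\}\). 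One then closes via Duhamel's formula: with \(\mathcal{V}_{t,s}^{\epsi,\eta}\) denoting the perturbed, \(\eta^{-1}\)-rescaled Heisenberg evolution generated by \(H_\epsi(\interpunct)\),
\[
    \expectation{B}_{\rho^{\epsi,\eta,f}(t)} - \expectation{B}_{\Pi_n^{\epsi,\eta}(t)}
    =
    \int_{t_0}^{t} \expectation[\big]{\commutator[\big]{\mathcal{R}_n^{\epsi,\eta}(s),\ \mathcal{V}_{t,s}^{\epsi,\eta}\Ab{B}}}_{\rho^{\epsi,\eta,f}(s)}\,\D s ,
\]
and bounding the support of \(\mathcal{V}_{t,s}^{\epsi,\eta}\Ab{B}\) by a Lieb--Robinson bound for \(H_\epsi\) (diameter \(\lesssim \diam(Y) + v\,\abs{t-s}/\eta\)) turns the per-time residual estimate into the claimed factor \(\tfrac{\abs{t-t_0}}{\eta}\,\paren[\big]{1 + v\abs{t-t_0}/\eta}^{(\ell+2d)/p}\) together with the stated power of \(\diam(Y)\). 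Finally, the fact that \(A_\mu(t)\) depends only on \(V,\dot V,\dots,V^{(\mu)}\) at time \(t\) (so that \(\Pi_n^{\epsi,\eta}\) is time-independent once \(V\) is constant), and that \(\sigma_{B,1}\) equals Kubo's formula~\eqref{eq:Kubo}, are read off directly from the recursion, exactly as in \cite{MT2019,Teufel2020}.
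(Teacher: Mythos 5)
Your proposal follows essentially the same route as the paper's proof: conjugate by \(\e^{\I S_n^{\epsi,\eta}(t)}\), expand the resulting generator in powers of \(\epsi\) and \(\eta\), choose \(A_\mu = \calI_{H_0,g}\Ab{\Rt_\mu}\) iteratively so that the order-\(\mu\) defects become \(\calJ_{H_0,g}\Ab{\Rt_\mu}\) controlled by the SLT form of the weakened local gap condition (Lemma~\ref{lem:LDGweak}), propagate the \(\Lambdap\)-localization through the appendix lemmas on commutators, conjugations and the inverse Liouvillian, and close with a Duhamel-type time integral using Lieb--Robinson control of the evolved observable. The one correction: in your displayed Duhamel identity the residual must be tested in the fixed state \(\rho_0\) -- i.e.\ one differentiates \(s \mapsto \expectation[\big]{\evoli{\I S_n^{\epsi,\eta}(s)}{U^{\epsi,\eta}_{s,t}\,B\,U^{\epsi,\eta}_{t,s}}}_{\rho_0}\) as in~\eqref{eq:proof-expansion-fundamental-theorem-of-calculus} -- not in \(\rho^{\epsi,\eta,f}(s)\), since Assumption \nameref{ass:localGAPweak} concerns \(\rho_0\) only (consistent with your own earlier description of where the local gap enters).
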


Before we prove Theorem~\ref{thm:time-dependent-NEASS}, let us deduce the results from Section~\ref{sec:result}, which will follow by taking $\beta = 1$ in Theorem~\ref{thm:time-dependent-NEASS}.

\begin{proof}[Proof of Proposition~\ref{prop:NEASS}]
    We choose \(V(t) = V\), which implies that also all \(A_\mu\) are time-independent (i.e.~the time-dependent interaction is constant).
    Moreover, since there is no adiabatic timescale in~\eqref{eq:prop-NEASS-Schrödinger-equation}, we choose \(\eta=1\).
    To obtain the correct scaling, we inspect the proof of Theorem~\ref{thm:time-dependent-NEASS}:
    In equation~\eqref{eq:Q_n-before-expansion}, we expand in powers of \(\epsi\) and \(R_j^{\epsi,\eta}\) are polynomials in \(\eta/\epsi\).
    Here, since \(\dot S_n^{\epsi,\eta} = 0\), these polynomials are just constants and there is no \(\eta\) in~\eqref{eq:Q_n-before-expansion} or any of the later expressions.
    Hence, also the norm estimates \(\norm{\Phi_{\tilde R_j^{\epsi,\eta}}(s)} < C \, (1+\eta^j/\epsi^j)\) used in the end of the proof simplify to \(\norm{\Phi_{\tilde R_j^{\epsi}}} < C\) uniformly in \(\epsi\).
    Hence, \((1+\eta^{n+1}/\epsi^{n+1})\) is replaced by~\(1\) in~\eqref{eq:time-dependent-NEASS-bound}.
    All other \(\eta\) in~\eqref{eq:time-dependent-NEASS-bound} come from the adiabatic timescale and are thus replaced by \(1\).
\end{proof}

\begin{proof}[Proof of Proposition~\ref{prop:adswitch}]
    We choose \(V(t) = f(t) \, V\) and since \(V(t)\) constant for \(t \leq 0\), we obtain \(\Pi_n^{\epsi,\eta}(t) = \Pi_n^\epsi\) for all \(t \geq 0\).
    To compare with the solution of~\eqref{eq:Schrödinger-equation-switching}, we choose \(t_0=-1\) such that \(\Pi_n^{\epsi,\eta}(t_0) = \rho_0\).
    Then \(\abs{t-t_0}=1+t\) and~\eqref{eq:time-dependent-NEASS-bound} gives the statement.
\end{proof}

\begin{proof}[Proof of Proposition~\ref{prop:expansionNEASS}]
    To prove the asymptotic expansion, we first expand~\eqref{eq:NEASS} and obtain
    \begin{equation*}
        \trace[\big]{\Pi_n^\epsi \, B}
        =
        \trace[\big]{\rho_0 \, \e^{\epsi \calL_{S_n^\epsi}}\Ab{B}}
        =
        \sum_{k=0}^m \frac{1}{k!} \, \expectation[\big]{\calL^k_{S_n^\epsi}\Ab{B}}_{\rho_0}
        + \frac{1}{(m+1)!}
        \, \trace[\big]{\rho_0 \, \e^{\tilde\epsi \calL_{S_n^\epsi}}\circ\calL^{m+1}_{S_n^\epsi}\Ab{B}}
    \end{equation*}
    for some \(\tilde\epsi\in \intervalcc{0,\epsi}\).
    Since the \(A_\mu\) and thus also the \(S_n^\epsi\) are \(\Lambdap\)-localized, we use Lemma~\ref{lem:commutator with local observable} to bound the remainder by
    \begin{equation*}
        \frac{1}{(m+1)!}
        \, \norm[\big]{\calL^{m+1}_{S_n^\epsi}\Ab{B}}
        \leq
        C
        \, \norm{B}
        \, \abs{Y}^{m+1}
        \, \chi_{b',p}\pdist{Y,\Lambdap}
        \, \norm{S_n^\epsi}_{b',p,\Lambdap}^{m+1}
        ,
    \end{equation*}
    where \(\norm{S_n^\epsi}_{b',p;\Lambdap} \leq \epsi \, \tilde{C}\).
    It is clear from the proof of Theorem~\ref{thm:time-dependent-NEASS}, that \(\tilde{C}\) depends only on \(n, d,\Cvol, b, p, \Cint\), and \(g\).
    We now expand \(S_n^\epsi\) in the first term and group the terms according to the powers in \(\epsi\).
    The zero order term clearly is \(\expectation{B}_{\rho_0}\).
    In first order, we obtain \(
        \epsi \, \expectation{\calL_{A_1}}_{\rho_0}
        =
        - \epsi \,\I \, \expectation*{\commutator[\big]{\calI_{H_0,g}\Ab{V} , B}}_{\rho_0}
    \) as can be read of from~\eqref{eq:proof-remainder-order-1}.
    All \(\mathcal{K}_j\) for \(j \leq m\) are constructed in this way.
    In the end, some higher order terms are left.
    They all come with multi-commutators of \(\Lambdap\)-localized \(A_\mu\) with \(B\) and can be bounded as the remainder above using Lemma~\ref{lem:commutator with local observable}.
\end{proof}

We now prove the time-dependent NEASS from Theorem~\ref{thm:time-dependent-NEASS}.

\begin{proof}[Proof of Theorem~\ref{thm:time-dependent-NEASS}]
    For the proof we first assume the form of~\eqref{eq:NEASS-time-dependent} and then iteratively choose \(A_\mu(t)\) so that the statement holds.
    Therefore, let \(U_{t,t_0}^{\epsi,\eta}\) be the solution of
    \begin{equation}
        \label{eq:Schrödinger-equation-full-evolution-propagater}
        \I \, \eta \, \frac{\D}{\D t} \, U_{t,t_0}^{\epsi,\eta}
        =
        H_\epsi(t) \, U_{t,t_0}^{\epsi,\eta}
        \qquadtext{with}
        U_{t_0,t_0}^{\epsi,\eta}
        =
        \unit
        \quadtext{for all}
        t,t_0 \in I
        ,
    \end{equation}
    with \(H_\epsi\) given in~\eqref{eq:Hamiltonian-switching}.
    Then, \(
        \rho\ee_{t_0}(t)
        :=
        U\ee_{t,t_0} \, \Pi_n\ee(t_0) \, U\ee_{t_0,t}
    \) is the solution of~\eqref{eq:time-dependent-NEASS-Schrödinger-equation-switching}.
    To obtain~\eqref{eq:time-dependent-NEASS-bound} we use the fundamental theorem of calculus, and get
    \begin{equation}
        \label{eq:proof-expansion-fundamental-theorem-of-calculus}
        \expectation{B}_{\rho\ee_{t_0}(t)}
        - \expectation{B}_{\Pi_n\ee(t)}
        =
        - \int_{t_0}^t \D s \, \frac{\D}{\D s} \, \expectation[\Big]{\evoli{\I S_n\ee(s)}{U\ee_{s,t}\,B\,U\ee_{t,s}}}_{\rho_0}
        .
    \end{equation}
    By product rule and Duhamel’s formula, the derivative evaluates as
    \begin{equation*}
        \frac{\D}{\D s} \, \evoli{\I S_n\ee(s)}{U\ee_{s,t}\,B\,U\ee_{t,s}}
        =
        -\frac{\I}{\eta} \, \commutator[\Big]{
            Q\ee_n(s) ,
            \evoli{\I S_n\ee(s)}{U\ee_{s,t}\,B\,U\ee_{t,s}}
        }
        ,
    \end{equation*}
    where
    \begin{align}
        \nonumber
        Q\ee_n(s)
        &=
        \eta \, \int_0^1 \D \lambda \, \evoli{\I \lambda S\ee_n(s)}{\dot S\ee_n(s)}
        + \evoli{\I S_n\ee(s)}{\paren[\big]{H_0 + \epsi \, V(s)}}
        \\&=
        \label{eq:Q_n-before-expansion}
        \eta \, \int_0^1 \D \lambda \, \e^{\lambda \calL_{S\ee_n(s)}} \Ab[\big]{\dot S\ee_n(s)}
        + \e^{\calL_{S_n\ee(s)}}\Ab[\big]{H_0 + \epsi \, V(s)}
        \\&=
        \nonumber
        H_0
        + \sum_{j=1}^n \epsi^j \, R_j\ee(s)
        + \epsi^{n+1} \, R_{n+1}\ee(s)
        .
    \end{align}
    In the last line we expanded in powers of \(\epsi\) and \(\eta\) such that \(R_j\ee(s)\) are polynomials in \(\eta/\epsi\) of order \(j\) with \(\epsi\)- and \(\eta\)-independent SLT operators as coefficients.
    In this way, the joint power of \(\epsi\) and \(\eta\) in front of the SLT operators collected in \(R_j\ee\) is \(j\).
    By Taylor formula with mean-value form of the remainder, there exist \(\theta \in \intervalcc{0,1}\) such that
    \begin{equation}
        \label{eq:proof-expansion-expansion-of-derivative-of-full-evolution}
        \e^{\calL_{S\ee_n(s)}} \Ab[\big]{H_0 + \epsi \, V(s)}
        =
        \begin{aligned}[t]
            &\sum_{k=0}^n \frac{1}{k!} \, \calL_{S\ee_n(s)}^k \Ab[\big]{H_0 + \epsi \, V(s)}
            \\&+
            \frac{1}{(n+1)!} \, \e^{\theta \calL_{S\ee_n(s)}} \circ \calL_{S\ee_n(s)}^{n+1} \Ab[\big]{H_0 + \epsi \, V(s)}
            .
        \end{aligned}
    \end{equation}
    Similarly, for the first term in~\eqref{eq:Q_n-before-expansion} we expand the integrand using the integral form of the remainder and obtain
    \begin{align}
        \nonumber
        \Alignindent
        \eta \, \int_0^1 \D \lambda \, \e^{\lambda \calL_{S\ee_n(s)}} \Ab[\big]{\dot S\ee_n(s)}
        \\&=
        \nonumber
        \begin{aligned}[t]
            &
            \eta \, \sum_{k=0}^{n-2} \frac{\int_0^1 \D \lambda \, \lambda^k }{k!} \, \calL_{S\ee_n(s)}^k\Ab[\big]{\dot S_n\ee(s)}
            \\&+
            \eta
            \, \int_0^1 \D \lambda
            \, \int_0^\lambda \D \mu
            \, \frac{(\lambda-\mu)^{n-2}}{(n-2)!}
            \, \e^{\mu \calL_{S\ee_n(s)}} \circ \calL_{S\ee_n(s)}^{n-1}
            \Ab[\big]{\dot S\ee_n(s)}
            ,
        \end{aligned}
        \\&=
        \label{eq:proof-expansion-expansion-of-derivative-of-NEASS-automorphism}
        \begin{aligned}[t]
            &
            \eta \, \sum_{k=0}^{n-2} \frac{1}{(k+1)!} \, \calL_{S\ee_n(s)}^k\Ab[\big]{\dot S_n\ee(s)}
            \\&+
            \eta
            \, \int_0^1 \D \mu
            \, \frac{(1-\mu)^{n-1}}{(n-1)!}
            \, \e^{\mu \calL_{S\ee_n(s)}} \circ \calL_{S\ee_n(s)}^{n-1}
            \Ab[\big]{\dot S\ee_n(s)}
            ,
        \end{aligned}
    \end{align}
    From this expansion and~\eqref{eq:NEASS-time-dependent}, we can read of
    \begin{align}
        \label{eq:proof-remainder-order-1}
        R_1\ee(s)
        &=
        - \calL_{H_0}\Ab{A_1\ee(s)}
        + V(s)
        ,
        \\
        R_2\ee(s)
        &=
        - \calL_{H_0}\Ab{A_2\ee(s)}
        + \calL_{A_1\ee(s)}\Ab{V(s)}
        + \tfrac{\eta}{\epsi} \dot A_1\ee(s)
        ,
    \shortintertext{and more generally}
        R_j\ee(s)
        &=
        - \calL_{H_0}\Ab{A_j\ee(s)} + \Rt_j\ee(s)
        ,
    \end{align}
    where the \(\Rt_j\ee(s)\) are sums of iterated commutators of the operators \(A_i\ee(s)\) and \(\dot A_i\ee(s)\) for \(i<j\leq n\) and \(V(s)\).
    We can now iteratively choose
    \begin{equation*}
        A_j\ee(s) = \calI_{H_0,g}\Ab[\big]{\Rt_j\ee(s)}
        .
    \end{equation*}
    Clearly, for all \(p'<p\), it holds that \(\Rt_1\ee(s) = V(s)\) is a \((b,p',\Lambdap)\)-localized SLT operator.
    Hence, by Lemma~\ref{lem:inverse-liouvillian-preserves-SLT}, \(A_1\ee(s)\) is \((b',p',\Lambdap)\)-localized for any \(b'<b\).
    This step only works for \(p'<p\), because Lemma~\ref{lem:inverse-liouvillian-preserves-SLT} requires a slightly better localization of the Hamiltonian~\(H_0\) compared to the argument of the inverse Liouvillian.
    Finally, by Lemma~\ref{lem:multi-commutators-of-SLT-are-SLT} also \(\calL_{H_0}\Ab{A_1\ee(s)}\) and thus \(R_1\ee(s)\) are \((b',p',\Lambdap)\)-localized SLT operator for any slightly smaller~\(b'\).
    The same arguments hold for the higher orders \(R_j\ee(s)\) with \(j \leq n\) as well, at each step lowering \(b'\).
    We point out that clearly the smallest~\(b'\) can be chosen independently of~\(n\) by choosing all intermediate~\(b'\) in an \(n\)-dependent equidistant way.

    With this expansion it is also immediate, that the \(A_\mu(s)\) only depend on \(V\) and its derivatives at time \(s\).

    Putting everything back together and denoting \(\tau_{t,s} \Ab{B} := U\ee_{s,t}\,B\,U\ee_{t,s}\), we find
    \begin{align}
        \label{eq:proof-time-dependent-NEASS-bound}
        \Alignindent
        \abs[\Big]{
            \expectation{B}_{\rho\ee_{t_0}(t)}
            - \expectation{B}_{\Pi_n\ee(t)}
        }
        \\&\leq
        \nonumber
        \frac{\abs{t-t_0}}{\eta}
        \sup_{s\in \intervalcc{t_0,t}}
        \paren*{
            \begin{aligned}
                &
                \, \sum_{j=1}^{n} \epsi^j \, \abs[\bigg]{
                    \expectation[\Big]{
                        \commutator[\Big]{
                            \calL_{H_0} \circ \calI_{H_0,g}\Ab[\big]{\Rt_j\ee(s)}
                            - \Rt_j\ee(s)
                            ,
                            \e^{\calL_{S_n\ee(s)}} \circ \tau_{t,s}\Ab{B}
                        }
                    }_{\rho_0}
                }
                \\&+
                \epsi^{n+1}
                \, \abs[\bigg]{
                    \expectation[\Big]{
                        \commutator[\Big]{
                            R_{n+1}\ee(s)
                            ,
                            \e^{\calL_{S_n\ee(s)}} \circ \tau_{t,s}\Ab{B}
                        }
                    }_{\rho_0}
                }
                .
            \end{aligned}
        }
    \end{align}
    The remainder \(R_{n+1}\ee(s)\) collects all the remaining terms which are (a) the higher order terms from the first lines and (b) the remainder terms from the second lines of~\eqref{eq:proof-expansion-expansion-of-derivative-of-full-evolution} and~\eqref{eq:proof-expansion-expansion-of-derivative-of-NEASS-automorphism}.
    The former are local by the previous arguments.
    The latter additionally include an evolution by the local operator \(S_n\ee(s)\) and are local by Lemma~\ref{lem:conjugation-with-unitaries-evolved-SLT-is-SLT}.

    To apply the local gap assumption in the form given in Lemma~\ref{lem:LDGweak} to the lower order terms, we need to decompose the second entry of the commutator into strictly local operators.
    Therefore, we use the same decomposition as in the proof of Lemma~\ref{lem:commutator-SLT-operator-evolved-observable}
    \begin{equation*}
        \tau_{t,s}\Ab{B}=\sum_{k=0}^\infty \Delta_k
        ,
    \end{equation*}
    where
    \begin{equation*}
        \begin{aligned}
            \Delta_0 &\in \alg_{Y_{\delta}} \subset \alg_{Y_{\delta^{1/p}}},
            & \norm{\Delta_0} &\leq \norm{B},
            \\
            \text{and for \(k\in \N\)}\qquad
            \Delta_k &\in \alg_{Y_{(\delta+k)^{1/p}}},
            & \norm{\Delta_k} &\leq C_\LR \, \E \, \norm{B} \, \abs{Y} \, \E^{-b k}
            .
        \end{aligned}
    \end{equation*}
    For better readability we abbreviate \(\delta=1+\vts/\eta\) here.
    The extra \(1/\eta\) is due to the scaling in~\eqref{eq:Schrödinger-equation-full-evolution-propagater}.
    For the outer automorphism, we can use the decomposition from the proof of Lemma~\ref{lem:conjugation-with-unitaries-evolved-SLT-is-SLT}, just replace \(O=\Delta_k\) there, to obtain, for each \(k\), the decomposition
    \begin{equation*}
        \e^{\calL_{S_n\ee(s)}}\Ab{\Delta_k} = \sum_{l=0}^\infty \Delta_{k,l}
    \end{equation*}
    with
    \begin{equation*}
        \begin{aligned}
            \Delta_{k,0} &\in \alg_{Y_{(\delta+k)^{1/p}}},
            & \norm{\Delta_{k,0}}
            &\leq
            \norm{\Delta_k},
            \quad\text{and}
            \\
            \Delta_{k,l} &\in \alg_{Y_{l+(\delta+k)^{1/p}}},
            & \norm{\Delta_{k,l}}
            &\leq
            C_\LR \, \norm{\Delta_k} \, \abs[\big]{Y_{(\delta+k)^{1/p}}} \, \paren[\big]{\e^{c \, \norm{\Phi_S}_{d,q}} - 1} \, \chi_{d',j}(l)
        \end{aligned}
    \end{equation*}
    for \(l \in \N\).
    Thus, in total we have
    \begin{equation*}
        \e^{\calL_{S_n\ee(s)}} \circ \tau_{t,s}\Ab{B}
        =
        \sum_{k,l=0}^\infty \Delta_{k,l}
        ,
    \end{equation*}
    where the sum is actually finite on finite lattices as discussed in Lemmata~\ref{lem:commutator-SLT-operator-evolved-observable} and~\ref{lem:conjugation-with-unitaries-evolved-SLT-is-SLT}.
    Then we use the triangle inequality and Lemma~\ref{lem:LDGweak} to apply~\eqref{eq:lem-LDGweak} and bound
    \begin{align*}
        \Alignindent
        \abs[\Big]{
            \expectation[\Big]{
                \commutator[\Big]{
                    \calL_{H_0} \circ \calI_{H_0,g}\Ab[\big]{\Rt_j\ee(s)}
                    - \Rt_j\ee(s)
                    ,
                    \e^{\calL_{S_n\ee(s)}} \circ \tau_{t,s}\Ab{B}
                }
            }_{\rho_0}
        }
        \\&\leq
        C
        \, \norm{\Phi_{\Rt_j\ee}(s)}_{b',p'; \Lambdap}
        \, \sum_{k,l=0}^\infty
        \diam(Y_{l+(\delta+k)^{1/p}})^{\ell + d}
        \, \norm{\Delta_{k,l}}
        \, \chi_{b'/2,q'}\pdist{\Lambdap, \Lambda \setminus \Lambdag}
        \\&\leq
        \begin{aligned}[t]
            &
            C
            \, \norm{\Phi_{\Rt_j\ee}(s)}_{b',p'; \Lambdap}
            \, \sum_{k=0}^\infty
            \diam(Y_{(\delta+k)^{1/p}})^{\ell + 2d}
            \, \norm{\Delta_{k}}
            \\&\times \paren[\bigg]{
                1
                + \paren[\big]{\e^{c \, \norm{\Phi_S}_{d,q}} - 1}
                \sum_{l=1}^\infty l^{\ell+d} \, \chi_{d',j}(l)
            }
            \, \chi_{b'/2,q'}\pdist{\Lambdap, \Lambda \setminus \Lambdag}
        \end{aligned}
        \\&\leq
        \begin{aligned}[t]
            &
            C
            \, \norm{\Phi_{\Rt_j\ee}(s)}_{b',p'; \Lambdap}
            \, \e^{c \, \norm{\Phi_S}_{d,q}}
            \, \diam(Y)^{\ell + 3d}
            \, \norm{B}
            \, \delta^{(\ell+2d)/p}
            \\&\times \sum_{k=0}^\infty
            \, k^{(\ell+2d)/p}
            \, \e^{-bk}
            \, \chi_{b'/2,q'}\pdist{\Lambdap, \Lambda \setminus \Lambdag}
        \end{aligned}
        \\&\leq
        C
        \, \norm{\Phi_{\Rt_j\ee}(s)}_{b',p'; \Lambdap}
        \, \e^{c \, \norm{\Phi_S}_{d,q}}
        \, \diam(Y)^{\ell + 3d}
        \, \norm{B}
        \, \delta^{(\ell+2d)/p}
        \, \chi_{b'/2,q'}\pdist{\Lambdap, \Lambda \setminus \Lambdag}
        ,
    \end{align*}
    where we abbreviated \(q' := p' \, \min\List{\beta,1}\) (according to Lemma~\ref{lem:LDGweak}) and~\(b'\) is the smallest of the~\(b'\) such that all \(\Phi_{\Rt_j\ee}(s)\) are \((b',p',\Lambdap)\)-localized.
    Finally, for any \(q < q'\) we bound
    \begin{equation}
        \label{eq:backto1}
        \chi_{b'/2,q'}\pdist{\Lambdap, \Lambda \setminus \Lambdag}
        \le
        C \, \chi_{1,q}\pdist{\Lambdap, \Lambda \setminus \Lambdag}
    \end{equation}
    for some constant \(C > 0\) depending only on \(b'\), \(q'\), and~\(q\).
    Since \(p'<p\) was arbitrary, this~\(q\) can be chosen arbitrarily in the interval \((0, p \min\{\beta, 1\})\).
    \normalcolor

    With the same local decomposition argument as above and using Lemma~\ref{lem:commutator with local observable} we can also bound the remainder
    \begin{align*}
        \Alignindent
        \abs[\bigg]{
            \expectation[\Big]{
                \commutator[\Big]{
                    R_{n+1}\ee(s)
                    ,
                    \e^{\calL_{S_n\ee(s)}} \circ \tau_{t,s}\Ab{B}
                }
            }_{\rho_0}
        }
        \\&\leq
        C
        \, \norm{\Phi_{\Rt_{n+1}\ee}(s)}_{b',p'; \Lambdap}
        \, \e^{c \, \norm{\Phi_S}_{d,q}}
        \, \diam(Y)^{3d}
        \, \norm{B}
        \, \delta^{2d/p}
        .
    \end{align*}
    We recall, that like the \(R_j\ee(s)\), also the \(\tilde{R}_j\ee(s)\) are polynomials of order \(j\) in \(\eta/\epsi\) with \(\Lambdap\)-localized SLT operators as coefficients.
    Hence, the interaction norms in the above equations can be bounded by a constant depending on the interaction norms of \(H\) and \(V(t)\) times \(1+(\eta/\epsi)^j\).
    With this observation, we can insert these bounds into~\eqref{eq:proof-time-dependent-NEASS-bound} and conclude~\eqref{eq:time-dependent-NEASS-bound}.

    To reduce the number of constants in the formulation of the statement, we actually do the proof for \((p+p')/2\) instead of \(p'\).
    In the end we then estimate \(\chi_{b',(p+p')/2} \leq C \, \chi_{1,p'}\) for the locality of the operators \(A_\mu(t)\), similarly to~\eqref{eq:backto1}.
\end{proof}

\section{Local gap conditions: Proofs for Proposition~\ref{prop:GDG} and Section~\ref{sec:localgap}}
\label{app:LPPLlocalgap}

This section collects several proofs concerning the local dynamical gap condition \nameref{ass:localGAP} (or \nameref{ass:localGAP_main} for the formal version), which were skipped in earlier sections.
More precisely, we will prove the dynamical characterization of a spectral gap in Proposition~\ref{prop:GDG} and the relations among the various local gap conditions formulated in Proposition~\ref{prop:mechforlocalgap} in Section~\ref{subsec:compare}.
Finally, we show that the examples from Section~\ref{subsec:examples} satisfy \nameref{ass:localGAP_main} by means of Proposition~\ref{prop:mechforlocalgap}.

\subsection{Dynamical characterization of a spectral gap: Proof of Proposition~\ref{prop:GDG}}
\label{subsec:dyncharGDG}
Deriving the lhs.~from the rhs.~is standard material; see, e.g.,~\cite[Lemma~6.8 and Proposition~6.9]{NSY2019}.
Therefore, dropping the subscripts \(H\) and \(g\) for ease of notation, suppose that for all \(A,B \in \mathcal{B}(\HSpace)\) we have (recall~\eqref{eq:Jdef})
\begin{equation*}
    0
    =
    \expectation[\big]{\commutator[\big]{\calL \circ \calI\Ab{A} - A, B}}_P
    =
    \expectation[\big]{\commutator[\big]{\calJ\Ab{A},B}}_{P}
    =
    - \trace[\big]{\commutator[\big]{\calJ\Ab{A}, P} \, B}
    .
\end{equation*}
Since \(B\) is arbitrary, this means \([\calJ\Ab{A}, P] = 0\).
Moreover, inserting the spectral decomposition \(H = \sum_{n} E_n \, P_n\), this can be written as
\begin{equation}
    \label{eq:GDGcontradict}
    0
    =
    \commutator[\big]{\calJ\Ab{A}, P}
    =
    \sqrt{2 \pi} \, \paren*{
        \sumstack[r]{E_n \in \sigma_2 \\ E_m \in \sigma_1}
        \widehat{w}(E_m - E_n) \, P_n \, A \, P_m
        -
        \sumstack[lr]{E_n \in \sigma_1 \\ E_m \in \sigma_2}
        \widehat{w}(E_m - E_n) \, P_n \, A \, P_m
    }
    .
\end{equation}

For \emph{contradiction}, we now assume that \(\dist{\sigma_1, \sigma_2} < g\).
Then, since \(\widehat{w}\vert_{(-g,g)} > 0\), one can easily construct an observable \(A\), which violates~\eqref{eq:GDGcontradict}, e.g.~\(A = P_{n_*} P_{m_*}\) with \(E_{n_*} \in \sigma\), \(E_{m_*} \in \sigma_2\) satisfying \(\abs{E_{n_*} - E_{m_*}} = \dist{\sigma_1, \sigma_2} < g\).

\subsection{Relations among local gap conditions: Proof of Proposition~\ref{prop:mechforlocalgap}}
\label{subsec:proofmechforlocalgap}
We prove the seven implications gathered in Proposition~\ref{prop:mechforlocalgap} one by one.
Unless differently stated, we will use the constants \(C, \ell, b\) and \(p\) from the formulation of Proposition~\ref{prop:mechforlocalgap} generically, i.e.~their precise value might change from line to line.
Some technical details are kept brief in this section, more detailed proofs using similar arguments are given in Appendix~\ref{app:technical}.

\medskip

\noindent \underline{Proofs of (i)--(iii):} All of these are obvious, by application of the estimates
\begin{align*}
    \norm{[\rho-\rho_*, A]}_{\mathup{tr}}
    &\le
    \norm{(\rho-\rho_*) \, A}_{\mathup{tr}} + \norm{A \, (\rho-\rho_*) }_{\mathup{tr}}
    ,
    \\
    \abs*{\trace*{(\rho-\rho_*)A}}
    &\le
    \norm{(\rho-\rho_*) A}_{\mathup{tr}}\,,
\shortintertext{and}
    \abs*{\trace*{\rho_* \paren[\big]{\tau - \unit}\Ab{A}}}
    &\le
    \norm[\big]{\paren[\big]{\tau - \unit}\Ab{A}}
    ,
\end{align*}
for (i), (ii), and (iii), respectively.
\\[2mm]
\noindent \underline{Proof of (iv):} By assumption, it holds that
\begin{equation}
    \label{eq:LPPLtoCOERdef}
    \I \, \expectation{A^* \calL_{H_*}\Ab{A}}_{\rho_*}
    \ge
    g \, \left[\expectation{A^* A}_{\rho_*} - \abs[\big]{\expectation{A}_{\rho_*}}^2\right]
\end{equation}
for all observables \(A \in \Alg\), where \(\calL_{H_*} \Ab{\interpunct} := -\I \, [H_*, \interpunct]\) denotes the Liouvillian of \(H_*\).
The idea is now to replace \(H_* \to H\) and \(\rho_* \to \rho\) in~\eqref{eq:LPPLtoCOERdef} and estimate the resulting error in such a way that we arrive at~\eqref{eq:COERdefect}.

First, by application of~\eqref{eq:LPPL}, we replace \([\expectation{A^* A}_{\rho_*} - \abs{\expectation{A}_{\rho_*}}^2]\) by \([\expectation{A^* A}_{\rho} - \abs{\expectation{A}_{\rho}}^2]\) on the rhs.~of~\eqref{eq:LPPLtoCOERdef} at the cost of an error bounded by \(
    C \, \norm{A}^2 \diam(X)^\ell \, \chi_{b,p}\paren[\big]{\dist{X, \Lambda \setminus \Lambdag}}
\).

For the lhs.~of~\eqref{eq:LPPLtoCOERdef}, we estimate
\begin{equation}
    \label{eq:LPPLtoCOERdef2}
    \begin{aligned}
        \Alignindent
        \abs*{\expectation{A^* \calL_{H_*}\Ab{A}}_{\rho_*} - \expectation{A^* \calL_{H}\Ab{A}}_{\rho}}
        \\&\le
        \abs*{\expectation[\big]{A^* \paren[\big]{\calL_{H_*}\Ab{A} - \calL_H\Ab{A}}}_{\rho_*}} + \abs*{\trace[\big]{(\rho_* - \rho) A^* \calL_{H}\Ab{A}}}
    \end{aligned}
\end{equation}
by means of the triangle inequality.
The first term on the rhs.~of~\eqref{eq:LPPLtoCOERdef2} can now be bounded as (recall that \(H = H_* + J\), \(J\) is \(\Lambda \setminus \Lambdag\)-localized, and \(A \in \Alg_X\))
\begin{equation}
    \label{eq:LPPLCOERterm1}
    \abs*{\expectation[\big]{A^* \paren[\big]{\calL_{H_*}\Ab{A} - \calL_H\Ab{A}}}_{\rho_*}}
    \le
    2 \norm{\Phi_J}_{b,p;\Lambda \setminus \Lambdag} \, \norm{A}^2 \diam(X)^d \chi_{b,p}\pdist{X, \Lambda \setminus \Lambdag}
\end{equation}
by application of~\eqref{eq:lem-commutator-SLT-with-observable} in Lemma~\ref{lem:commutator with local observable}.

For the second term on the rhs.~of~\eqref{eq:LPPLtoCOERdef2}, we write \(
    \calL_H\Ab{A}
    =
    \mathbb{E}_{X_n} \calL_H\Ab{A} + (\unit - \mathbb{E}_{X_n} ) \calL_H\Ab{A}
\) for some \(n\) to be chosen below, where \(X_n := \Set{x \in \Lambda \given \dist{x,X} \le n}\) denotes the \(n\)-fattening of the set \(X \subset \Lambda\).
We now estimate the two terms separately.
For the first term, we employ~\eqref{eq:LPPL} to bound
\begin{equation}
    \label{eq:LPPLCOERterm21}
    \abs*{\trace[\big]{(\rho_* - \rho) A^*\mathbb{E}_{X_n} \calL_H\Ab{A}}}
    \le
    C \norm{A}^2 \paren[\big]{\diam(X) + n}^\ell \chi_{b,p}\paren[\big]{\dist{X_n, \Lambda\setminus \Lambdag}}
\end{equation}
where we used that \(\mathbb{E}_{X_n} \calL_H\Ab{A} \in \Alg_{X_n}\) (by definition) and \(\norm{\mathbb{E}_{X_n} \calL_H\Ab{A}} \le C \abs{X} \, \norm{A}\) (by~\eqref{eq:lem-commutator-SLT-with-observable} from Lemma~\ref{lem:commutator with local observable} and Lemma~\ref{lem:conditional-expectation}~(c)).
For the second term, we use~\eqref{eq:comm-SLT-two-obs} from Lemma~\ref{lem:commutator with local observable} and Lemma~\ref{lem:conditional-expectation}~(e) for estimating the difference \(\norm{(\unit - \mathbb{E}_{X_n} ) \calL_H\Ab{A}}\) to get
\begin{equation}
    \label{eq:LPPLCOERterm22}
    \abs*{\trace[\big]{(\rho_* - \rho) A^*(\unit - \mathbb{E}_{X_n} ) \calL_H\Ab{A}}}
    \le
    C \norm{A}^2 \, \diam(X)^\ell \, \chi_{b,p}(n)
\end{equation}
Using \(
    \dist{X_n, \Lambda\setminus \Lambdag}
    \ge
    \dist{X, \Lambda \setminus \Lambdag} - n
\) for~\eqref{eq:LPPLCOERterm21}, we can pick \(n = \dist{X, \Lambda \setminus \Lambdag}/2\), say, to estimate
\begin{equation*}
    {\eqref{eq:LPPLCOERterm21}} + {\eqref{eq:LPPLCOERterm22}}
    \le
    C \norm{A}^2 \, \diam(X)^\ell \, \chi_{b/2,p}\pdist{X, \Lambda\setminus \Lambdag}
    .
\end{equation*}

Combining this with~\eqref{eq:LPPLCOERterm1}, we estimate~\eqref{eq:LPPLtoCOERdef2} by \(
    C \, \norm{A}^2 \diam(X)^\ell \, \chi_{b,p}\paren[\big]{\dist{X, \Lambda \setminus \Lambdag}}
\) and we thus arrive at~\eqref{eq:COERdefect}.
\\[2mm]
\noindent \underline{Proof of (v):}
This is obvious, because \(
    \abs[\big]{\expectation{A^* A}_{\rho} - \abs[\big]{\expectation{A}_{\rho}}^2}
    \le
    2 \norm{A}^2
\).
\\[2mm]
\noindent \underline{Proof of (vi):} By assumption~\eqref{eq:LDGdiscuss}, it holds that, for all observables \(A \in \alg_X\) and \(B \in \alg_Y\) localized in \(X\subset \Lambda\) and \(Y \subset \Lambda\), it holds that (recall~\eqref{eq:Jdef} for the definition of \(\calJ\))
\begin{equation}
    \label{eq:commLPPLtoLDG}
    \abs*{\expectation[\big]{\commutator*{\calJ_{H_*}\Ab{A} , B}}_{\rho_*}}
    \le
    \begin{multlined}[t]
        C \norm{A} \, \norm{B} \, \big[\diam(X)+ \diam(Y)\big]^\ell
        \\\times
        \chi_{b,p}\paren*{\dist{X, \Lambda \setminus \Lambdag}+ \dist{Y, \Lambda \setminus \Lambdag}} \,.
    \end{multlined}
\end{equation}
Similarly to the proof of (iv), the idea is now to replace \(H_* \to H\) and \(\rho_* \to \rho\) on the lhs.~\eqref{eq:commLPPLtoLDG} at the price of an error that is bounded in terms of the rhs.~of~\eqref{eq:commLPPLtoLDG}.
In order to do so, we will heavily exploit the symmetry of~\eqref{eq:commLPPLtoLDG} in \(A\) and \(B\) (recall the discussion around~\eqref{eq:Jdef}).
That is, we will prove the bound first only with \(\dist{X, \Lambda \setminus \Lambdag}\) in the argument of \(\chi_{b,p}\) and later obtain their sum (like on the rhs.~of~\eqref{eq:commLPPLtoLDG}) by symmetry (modulo changing \(b \to b/2\)).

To begin with, by the triangle inequality, we have
\begin{equation}
    \label{eq:commLPPLtoLDG2}
    \begin{aligned}
        \Alignindent
        \abs*{\expectation[\big]{\commutator*{\calJ_{H_*}\Ab{A} , B}}_{\rho_*} - \expectation[\big]{\commutator*{\calJ_{H}\Ab{A} , B}}_{\rho}}
        \\&\le
        \abs*{\expectation[\big]{\commutator*{\paren[\big]{\calJ_{H_*} - \calJ_{H}}\Ab{A} , B}}_{\rho} } + \abs*{\trace[\big]{(\rho - \rho_*) \commutator[\big]{B , \calJ_{H_*}\Ab{A}}}} \,.
    \end{aligned}
\end{equation}

For the first term on the rhs.~of~\eqref{eq:commLPPLtoLDG2}, we estimate
\begin{equation}
    \label{eq:invlioudiff}
    \norm*{\paren[\big]{\calJ_{H_*} - \calJ_{H}}\Ab{A}}
    \le
    \int_{\R} \D t \, w_g(t) \, \norm*{\E^{\I t H_*} A \E^{-\I t H_*} - \E^{\I t H} A \E^{-\I t H}}
    .
\end{equation}
Recalling \(H = H_* + J\), the difference between the two time evolutions can be written as
\begin{equation*}
    \E^{\I t H_*} \, A \, \E^{-\I t H_*} - \E^{\I t H} \, A \, \E^{-\I t H}
    =
    - \I \int_0^t \D s \, \E^{\I t H} \, \commutator[\big]{J , \E^{\I s H_*} \, A \, \E^{-\I s H_*}} \, \E^{-\I t H}
    .
\end{equation*}
We thus find that
\begin{equation}
    \label{eq:invlioudiff2}
    \begin{split}
        {\eqref{eq:invlioudiff}}
         & \le
        \int_{\R} \D t \, w_g(t) \, \abs{t} \, \sup_{s\in \intervalcc{0,t}} \norm[\big]{\commutator[\big]{J , \E^{\I s H_*} \, A \, \E^{-\I s H_*}}}
        \\&\le
        C
        \, \Norm{A}
        \, \diam(X)^{2d}
        \begin{aligned}[t]
            \paren[\bigg]{
                &\chi_{b,p}\pdist{X, \Lambda\setminus \Lambdag} \int_{I} \D t \, w_g(t) \, \abs{t} \, \paren[\big]{1 + \abs{t}}^{d/p}
                \\&+ \int_{\R\setminus I} \D t \, w_g(t) \, \abs{t} \, \paren[\big]{1 + \abs{t}}^{d/p}
            }
        \end{aligned}
        \\&\le
        C
        \, \Norm{A}
        \, \diam(X)^{2d}
        \, \chi_{b,p}\pdist{X, \Lambda\setminus \Lambdag}
        ,
    \end{split}
\end{equation}
where we denoted \(
    I
    :=
    \Set[\big]{t \in \R \given \abs{t} \le \paren[\big]{\dist{X, \Lambda\setminus \Lambdag}/(2v)}^p/2}
\).
Here, \(v\) is the Lieb-Robinson velocity from Lemma~\ref{lem:commutator-SLT-operator-evolved-observable}, which we employed in the second step.
In the final step, we used the stretched exponential decay of \(w_g\) (see~\eqref{eq:wbound} and Lemma~\ref{lem:weightfunctions} later) and possibly adjusted the constants \(C,b\), and \(p\).

The second term on the rhs.~of~\eqref{eq:commLPPLtoLDG2} can be estimated by means of~\eqref{eq:LPPLtracecomm}, since
\begin{equation*}
    \abs*{\trace[\big]{(\rho - \rho_*) \commutator[\big]{B , \calJ_{H_*}\Ab{A}}}}
    \le
    \norm{B} \, \norm[\big]{\commutator[\big]{\rho - \rho_* , \calJ_{H_*}\Ab{A}}}_{\tr}
    .
\end{equation*}
For \(\calJ_{H_*}\Ab{A}\) we now apply the local decomposition technique, analogously to the arguments around~\eqref{eq:LPPLCOERterm21}--\eqref{eq:LPPLCOERterm22}.
More precisely, taking \(A \in \Alg_X\), we now write \(
    \calJ_{H_*}\Ab{A}
    =
    \mathbb{E}_{X_n} \calJ_{H_*}\Ab{A} + (\unit - \mathbb{E}_{X_n} ) \calJ_{H_*}\Ab{A}
\) for some \(n\) to be chosen below, where \(X_n := \Set{ x \in \Lambda \given \dist{x,X} \le n }\) denotes the \(n\)-fattening of the set \(X \subset \Lambda\).
We now estimate the two terms separately.
For the first term, we employ~\eqref{eq:LPPLtracecomm} to bound
\begin{equation}
    \label{eq:commLPPLtoLDG2a}
    \norm{\commutator[\big]{\rho - \rho_* , \mathbb{E}_{X_n}\calJ_{H_*}\Ab{A}}}_{\tr}
    \le
    C \, \norm{A} \, \paren[\big]{\diam(X) + n}^\ell \, \chi_{b,p}\paren[\big]{\dist{X_n, \Lambda\setminus \Lambdag}}
\end{equation}
where we used that \(\mathbb{E}_{X_n} \calJ_{H_*}\Ab{A} \in \Alg_{X_n}\) (by definition) and \(\norm{\mathbb{E}_{X_n} \calJ_{H_*}\Ab{A}} \le \norm{A}\).
For the second term, we simply use Lemma~\ref{lem:conditional-expectation} together with~\eqref{eq:invliouQL} applied with \(\calI \to \calJ\) for estimating the difference \(\norm{(\unit - \mathbb{E}_{X_n} ) \calJ_{H_*}\Ab{A}}\) to get
\begin{equation}
    \label{eq:commLPPLtoLDG2b}
    \begin{aligned}
        \Alignindent
        \norm[\big]{\commutator[\big]{\rho - \rho_* , (\unit - \mathbb{E}_{X_n})\calJ_{H_*}\Ab{A}}}_{\tr}
        \\&\le
        2\norm{\rho - \rho_*}_{\tr} \, \norm{(\unit - \mathbb{E}_{X_n})\calJ_{H_*}\Ab{A}}
        \le
        C \norm{A} \, \diam(X)^\ell \, \chi_{b,p}(n) \,.
    \end{aligned}
\end{equation}
Using \(
    \dist{X_n, \Lambda\setminus \Lambdag}
    \ge
    \dist{X, \Lambda \setminus \Lambdag} - n
\) for~\eqref{eq:commLPPLtoLDG2a}, we can pick \(n = \dist{X, \Lambda \setminus \Lambdag}/2\), say, to estimate
\begin{equation*}
    {\eqref{eq:commLPPLtoLDG2a}} + {\eqref{eq:commLPPLtoLDG2b}}
    \le
    C \norm{A} \, \diam(X)^\ell \, \chi_{b/2,p}\pdist{X, \Lambda\setminus \Lambdag}
    .
\end{equation*}
Finally, as mentioned above, interchanging the roles of \(A\) and \(B\) (by symmetry of the original expression; recall the discussion around~\eqref{eq:Jdef}), we thus arrive at a bound on \(\abs[\big]{\expectation[\big]{\commutator*{\calJ_{H}\Ab{A} , B}}_{\rho}}\) of the form given by the rhs.~of~\eqref{eq:commLPPLtoLDG} (modulo changing \(b \to b/2\)).
Therefore, combining~\eqref{eq:commLPPLtoLDG} with~\eqref{eq:commLPPLtoLDG2}, and~\eqref{eq:invlioudiff}--\eqref{eq:invlioudiff2} as well as~\eqref{eq:commLPPLtoLDG2a}--\eqref{eq:commLPPLtoLDG2b}, we conclude the desired.
\\[2mm]
\noindent \underline{Proof of (vii):} Instead of~\eqref{eq:commLPPLtoLDG}, we start with (by assumption)
\begin{equation}
    \label{eq:exactinvert}
    \expectation[\big]{\commutator*{\calJ_{H_*}\Ab{A} , B}}_{\rho_*}
    =
    0
    \qquad \text{for all} \qquad
    A,B \in \Alg
    .
\end{equation}
Apart from this, the idea is identical to (iv) and (vi).
Hence, by means of the triangle inequality, we obtain the same two terms from~\eqref{eq:commLPPLtoLDG2}.
The first term can be estimated in exactly the same way as in~\eqref{eq:invlioudiff}--\eqref{eq:invlioudiff2}.
The second term in~\eqref{eq:commLPPLtoLDG2} has to be treated a bit differently as in~\eqref{eq:commLPPLtoLDG2a}--\eqref{eq:commLPPLtoLDG2b}, since we now assumed~\eqref{eq:AUTO} instead of~\eqref{eq:LPPLtracecomm}.

In fact, for this term, using \(\expectation{\interpunct}_\rho = \expectation{\tau \Ab{\interpunct}}_{\rho_*}\), that \(\tau\) is a \(*\)-automorphism, and~\eqref{eq:exactinvert}, we get
\begin{equation*}
    \abs*{\trace[\big]{(\rho - \rho_*) \commutator[\big]{B , \calJ_{H_*}\Ab{A}}}}
    =
    \abs*{\trace[\big]{\rho_* \commutator[\big]{\tau \Ab{B} , (\tau -\unit)\circ \calJ_{H_*}\Ab{A}}}}
    \le
    \norm{(\tau -\unit)\circ \calJ_{H_*}\Ab{A}}
    .
\end{equation*}
This can now be treated exactly as done in the argument around~\eqref{eq:commLPPLtoLDG2a}--\eqref{eq:commLPPLtoLDG2b} (i.e.~taking \(A \in \Alg_X\), writing \(
    \calJ_{H_*}\Ab{A}
    =
    \mathbb{E}_{X_n} \calJ_{H_*}\Ab{A} + (\unit - \mathbb{E}_{X_n} ) \calJ_{H_*}\Ab{A}
\), and estimating the two terms separately while optimizing in \(n\)).
\\[2mm]
\noindent \underline{Proof of (viii):} Without loss of generality, we may assume that \(X \subset \Lambda\) in~\eqref{eq:GAPdecay} is such that \(C \diam(X)^\ell \chi_{b, p}\pdist{X, \Lambda \setminus \Lambdag} < 1/2\), say (otherwise there is nothing to prove), and \(\diam(X)\le \dist{X, \Lambda \setminus \Lambdag}^\beta\) for some \(\beta <1\). By assumption, for such \(X \subset \Lambda\) and \(A \in \Alg_X\), we have
\begin{equation}
    \label{eq:localgap}
    \I \langle \psi, A^* \calL\Ab{A} \psi \rangle
    \ge
    \frac{g}{2}\paren*{\langle \psi, A^* A \psi \rangle - \abs[\big]{\langle \psi, A \psi \rangle}^2}
    \quadtext{with}
    \ket{\psi}
    =
    \otimes_{x \in \Lambda} \ket{\psi_x}
    ,
    \
    \norm{\psi_x}
    =
    1
    .
\end{equation}
Our claim will be a consequence of the following lemma.

\begin{lem}[cf.~Proposition 14 in~\cite{YSL24}]
    \label{lem:stabilizer}
    Take a Hermitian \(A \in \Alg_X\) as above and assume~\eqref{eq:localgap}.
    Denote the \(n\)-fattening of \(X\) by \(X_n := \Set{z \in \Lambda \given \dist{z, X} \le n}\).
    Then, there exists a Hermitian operator \(\widetilde{A} \in \Alg_{X_n}\) with \(n := \lfloor \dist{X, \Lambda \setminus \Lambdag} /2 \rfloor\) such that (i) \(\psi\) is an eigenvector of \(\widetilde{A}\), i.e.~\(\widetilde{A} \psi = \widetilde{E} \psi\) for some \(\widetilde{E} \in \R\), and (ii) we have the bound
    \begin{equation}
        \label{eq:stabilizer}
        \norm[\big]{\mathbb{E}_{X_n} \calJ\Ab{A} - \widetilde{A}}
        \le
        C \norm{A} \chi_{b,p}\pdist{X, \Lambda \setminus \Lambdag}
    \end{equation}
    with \(\calJ\) being defined in~\eqref{eq:Jdef}.
\end{lem}

Armed with Lemma~\ref{lem:stabilizer}, we now turn to estimating the lhs.~of~\eqref{eq:LDGweak_discuss}, which is given by \(
    \expectation{[\calJ \Ab{A}, B]}_\rho
    =
    \langle \psi, [\calJ \Ab{A}, B] \psi \rangle
\).
By the triangle inequality, we have
\begin{equation}
    \label{eq:stabilizersplit}
    \begin{split}
        \Alignindent
        \abs[\big]{\big\langle \psi, [\calJ\Ab{A}, B] \psi \big\rangle}
        \\&\le
        \abs[\big]{\big\langle \psi, [(\unit - \mathbb{E}_{X_n})\calJ\Ab{A}, B] \psi \big\rangle}
        + \abs[\big]{\big\langle \psi, [\mathbb{E}_{X_n}\calJ\Ab{A} - \widetilde{A}, B] \psi \big\rangle}
        + \abs[\big]{\big\langle \psi, [\widetilde{A}, B] \psi \big\rangle}
    \end{split}
\end{equation}
and estimate the three terms separately. The first term in~\eqref{eq:stabilizersplit} can be treated as in~\eqref{eq:commLPPLtoLDG2b}, yielding the bound
\begin{equation}
    \label{eq:targetbound}
    C \norm{A} \, \norm{B} \, \chi_{b,p}\pdist{X, \Lambda\setminus \Lambdag}
    ,
\end{equation}
where we additionally used that \(\diam(X)\le \dist{X, \Lambda \setminus \Lambdag}^\beta\).
For the second term, we employ~\eqref{eq:stabilizer}, yielding the same bound as for the first term.
The third term in~\eqref{eq:stabilizersplit} vanishes since \(\widetilde{A} \psi = \widetilde{E} \psi\).
Therefore, \eqref{eq:stabilizersplit} is bounded by~\eqref{eq:targetbound} and we have proven Proposition~\ref{prop:mechforlocalgap}~(viii).

It thus remains to give the proof of Lemma~\ref{lem:stabilizer}.
\begin{proof}[Proof of Lemma~\ref{lem:stabilizer}] The principal idea is similar to~\cite[Proposition 14]{YSL24}.
    To start with, we assume w.l.o.g.~that \(H \psi = 0\), i.e.~\(\psi\) is an eigenvector to the eigenvalue zero.
    Then, we note that, since \(\diam(X) \le \dist{X, \Lambda \setminus \Lambdag}^\beta\), the bound in~\eqref{eq:commLPPLtoLDG2b} implies that
    \begin{equation}
        \label{eq:1-E}
        \norm[\big]{(\unit - \mathbb{E}_{X_n})\calJ\Ab{A}}
        \le
        C \norm{A} \, \chi_{b,p}\pdist{X, \Lambda\setminus \Lambdag}
        .
    \end{equation}

    We continue by decomposing
    \begin{equation}
        \label{eq:decomp}
        \paren[\big]{\calJ\Ab{A} - \langle \psi, \calJ\Ab{A} \psi \rangle} \ket{\psi}
        =
        \ket{\phi_{\le n}} + \ket{\phi_{> n}}
    \end{equation}
    where we defined
    \begin{equation*}
        \ket{\phi_{\le n}}
        =
        \ket{\phi_{\le n}}_{X_n} \otimes \ket{\psi}_{X_n^c} :=\paren[\big]{\mathbb{E}_{X_n} \calJ\Ab{A} - \langle \psi, \mathbb{E}_{X_n} \calJ\Ab{A} \psi \rangle} \ket{\psi}
    \end{equation*}
    and \(\ket{\phi_{> n}}\), which is defined such that~\eqref{eq:decomp} holds, satisfies the bound
    \begin{equation*}
        \begin{split}
            \norm{\ket{\phi_{> n}}}
             & \le
            \norm[\big]{\paren[\big]{(\unit - \mathbb{E}_{X_n})\calJ\Ab{A} - \langle \psi , (\unit - \mathbb{E}_{X_n})\calJ\Ab{A} \psi \rangle} \ket{\psi}}
            \\&\le
            \norm[\big]{(\unit - \mathbb{E}_{X_n})\calJ\Ab{A}}
            \le
            C \norm{A} \, \chi_{b,p}\pdist{X, \Lambda\setminus \Lambdag}
            ,
        \end{split}
    \end{equation*}
    where in the last step we employed~\eqref{eq:1-E}.
    Then, similarly to~\cite[Lemma~15, eq.~(B49)]{YSL24}, one can compute \(\langle \phi_{\le n}, H^2 \phi_{\le n} \rangle\) and use the Payley-Zygmund inequality to show that the norm of \(\ket{\phi_{\le n}}\) is essentially bounded by the norm of \(\ket{\phi_{> n}}\).
    That is, in our case, we find
    \begin{equation}
        \label{eq:philenbound}
        \norm{\ket{\phi_{\le n}}}
        \le
        C \norm{A} \, \chi_{b,p}\pdist{X, \Lambda\setminus \Lambdag}
        ,
    \end{equation}
    as always modulo adjusting the constants \(C, b\) and \(p\).
    Hence, defining the Hermitian operator
    \begin{equation*}
        \widetilde{A}
        :=
        \mathbb{E}_{X_n} \calJ \Ab{A} - \paren[\big]{\ket{\phi_{\le n}}_{X_n} \bra{\psi} + \mathrm{h.c.}}
        ,
    \end{equation*}
    supported in \(X_n\) we easily see that \(
        \widetilde{A} \ket{\psi}
        =
        \langle \psi, \mathbb{E}_{X_n} \calJ\Ab{A} \psi \rangle \ket{\psi}
        =:
        \widetilde{E} \ket{\psi}
    \) and the bound~\eqref{eq:stabilizer} follows from~\eqref{eq:philenbound}.
\end{proof}
\noindent \underline{Proof of (ix):} This is obvious from the definitions~\eqref{eq:LDGdiscuss} and~\eqref{eq:LDGweak_discuss}.
\\[2mm]
This concludes the proof Proposition~\ref{prop:mechforlocalgap}. \qed

\subsection{Local dynamical gap for the examples in Section~\ref{subsec:examples}}
\label{subsec:proofexample}
In this section, we prove the systems considered in Section~\ref{subsec:examples} to have a local dynamical gap.

\subsubsection{Perturbations of the classical Ising model with weak interactions}
\label{subsubsec:Isingproof}
In this section, we prove the claim of a local dynamical gap from Section~\ref{subsec:Ising}, where we considered perturbations of the classical Ising model with weak interactions.
First, the (unique) ground state vector of~\eqref{eq:classIsing} and the associated ground state energy is easily found as
\begin{equation*}
    \psi_*
    =
    \otimes_{x \in \Lambda} \ket{\downarrow}
    \quad \text{satisfying} \quad
    H_* \psi_*
    =
    \paren*{- \abs{\Lambda} + \frac{1}{2}\sum_{x, y} \lambda(x-y)} \psi_*
    ,
\end{equation*}
and the associated spectral projection (ground state) is simply given by \(\rho_* = P_* = \ket{\psi_*} \bra{\psi_*}\).
We note that this is a globally gapped eigenstate of \(H_*\), since the ground state energy corresponding to \(\psi_*\) is separated by a spectral gap \(g \ge 2 - \norm{\lambda}_1 > 0\) from the first excited state.

For the following argument, it is important to observe that, for any given \(\Lambdag \subset \Lambda\), the ground state projection factorizes, i.e.\
\begin{equation}
    \label{eq:P*Ising}
    \rho_*
    =
    \paren[\bigg]{\bigotimes_{x \in \Lambdag} \ket{\downarrow} \bra{\downarrow}} \otimes \paren[\bigg]{\bigotimes_{x \in \Lambda \setminus \Lambdag} \ket{\downarrow} \bra{\downarrow}}
    =:
    \rho_*^\Lambdag \otimes \rho_*^{\Lambda \setminus \Lambdag}
    .
\end{equation}
Indeed, since \(\norm{\lambda}_1 < 2\), every ground state \(\rho\) of \(H = H_* + J\),
where \(J\) is a \emph{strictly} \(\Lambda \setminus \Lambdag\)-localized SLT Hamiltonian, as described in Section~\ref{subsec:Ising}, also factorizes as
\begin{equation}
    \label{eq:PIsing}
    \rho
    =
    \rho_*^\Lambdag \otimes \rho^{\Lambda \setminus \Lambdag}
    .
\end{equation}

In order to see this, first note that there exists an eigenbasis%
\footnote{%
    This is simply a common eigenbasis of \(H\) and \(H\big\vert_{\Lambdag} \otimes \unit_{\Alg_{\Lambda\setminus \Lambdag}}\), which commute.
} of \(H\) for which every eigenvector \(\psi\) of \(H\) can be written as a linear combination \(\sum_j c_j e_j \otimes \varphi_j\), where \(e_j \in \HSpace_{\Lambdag}\) are eigenvectors of
\begin{equation*}
    H\big\vert_{\Lambdag}
    :=
    \sum_{x \in \Lambdag} \sigma_x^3 \, + \, \frac{1}{2}\sum_{x, y \in \Lambdag} \lambda(x-y) \, \sigma_x^3 \sigma_y^3
\end{equation*}
to a \emph{common} eigenvalue and \(\varphi_j \in \HSpace_{\Lambda \setminus \Lambdag}\).
Then, to see~\eqref{eq:PIsing}, it suffices to realize that, for every \(x \in \Lambdag\), starting from the \emph{unique} ground state vector \(\psi_*\vert_{\Lambdag} = \otimes_{x \in \Lambdag} \ket{\downarrow}\) of \(H\big\vert_{\Lambdag}\), the energy cost for flipping the spin \(\ket{\downarrow}\) to \(\ket{\uparrow}\) in the first term of~\eqref{eq:classIsing} is two, whereas the potential gain stemming from the second summand in~\eqref{eq:classIsing} is bounded by \(\norm{\lambda}_1 <2\), yielding~\eqref{eq:PIsing}.
In particular, any \emph{overall} pure ground state of \(H\) can be obtained by tensorizing the unique separate ground state vector of \(H\vert_{\Lambdag}\), i.e.~\(\psi\vert_\Lambdag\), with an appropriate (not necessarily unique) minimizer \(\varphi^{\Lambda\setminus\Lambdag}\) of
\begin{equation*}
    \min_{\substack{\varphi \in \HSpace_{\Lambda \setminus \Lambdag} \\ \norm{\varphi} = 1}}
    \braket[\Big]{
        \psi_*\vert_{\Lambdag} \otimes \varphi ,
        (H - H\vert_{\Lambdag}) \, \psi_*\vert_{\Lambdag} \otimes \varphi
    }
    ,
\end{equation*}
i.e.~by conditioning on the first factor \(\psi_*\vert_{\Lambdag}\).
The pure ground state is then obtained as \(
    \ket{\psi_*\vert_{\Lambdag}} \bra{\psi_* \vert_\Lambdag} \otimes \ket{\varphi^{\Lambda\setminus\Lambdag}} \bra{\varphi^{\Lambda\setminus\Lambdag}}
\).

Therefore, combining~\eqref{eq:P*Ising} and~\eqref{eq:PIsing}, we have the following: For \(\rho\) being a \emph{pure state}, i.e.~\(\rho^{\Lambda\setminus\Lambdag}\) from~\eqref{eq:PIsing} can be written as \(
    \rho^{\Lambda\setminus\Lambdag}
    =
    \ket{\varphi^{\Lambda\setminus\Lambdag}} \bra{\varphi^{\Lambda\setminus\Lambdag}}
\),
there exists a unitary \(U \equiv U^{\Lambda\setminus\Lambdag} \in \Alg_{\Lambda\setminus \Lambdag}\) such that \(
    \ket{\varphi^{\Lambda\setminus\Lambdag}}
    =
    U \otimes_{x \in \Lambda \setminus \Lambdag} \ket{\downarrow}
\).
In particular, \(\rho = U \rho_* U^*\) and hence we have a norm-preserving \(*\)-automorphism \(\tau\Ab{A} := U^*AU\) on \(A \in \Alg\), which intertwines the ground states, i.e.~\(\expectation{\interpunct}_{\rho} = \expectation{\tau\Ab{\interpunct}}_{\rho_*}\), and satisfies~\eqref{eq:AUTO}.
By means of Proposition~\ref{prop:mechforlocalgap}~(vii) (note that, since \(\rho_*\) is spectrally gapped, it fulfills the additional assumption of~\ref{prop:mechforlocalgap}~(vii) by means of Proposition~\ref{prop:GDG}), we thus find that \(\rho\) satisfies Assumption \nameref{ass:localGAP_main}.
Finally, for a general (\emph{mixed}) state \(\rho\), we conclude the desired after noticing that Assumption \nameref{ass:localGAP_main} is invariant under taking convex combinations.

\subsubsection{Perturbations of gapped frustration free product states}
\label{subsec:frustfreeproof}
In this section, we prove the claim of a local dynamical gap from Section~\ref{subsec:frustfree}, where we considered perturbations of gapped frustration free Hamiltonians with a product ground state.

Similarly to Section~\ref{subsubsec:Isingproof}, one can easily verify that \(H_*\) from~\eqref{eq:frustfree} is globally gapped with its ground state vector being given by \(\otimes_{x \in \Lambda}\ket{\psi_{*,x}}\).
The same is true for all restrictions \(H_*\vert_{\Lambda'}\).

Moreover, for \(\Lambdag \subset \Lambda\) and a fixed exponent \(\beta >0\), consider \(X \subset \Lambda\) satisfying \(\diam(X) \le \dist{X, \Lambda\setminus\Lambdag}^\beta\).
Under Assumptions \nameref{ass:frustfree}, \nameref{ass:regular}, and \nameref{ass:gap}, the authors of~\cite{BRDF2021} have proven the following: Let \(\ket{\psi}\) be a ground state vector of \(H = H_* + J\), where \(J\) is a \emph{strongly} \(\Lambda \setminus \Lambdag\)-localized SLT Hamiltonian, as described in Section~\ref{subsec:frustfree}, and \(P_{*,X}\) denote the projection onto the ground state vector \(\ket{\psi_*\vert_{X}}\) of \(H_*\vert_X\).
Then it holds that
\begin{equation}
    \label{eq:tracenorm}
    \norm{\rho - \tilde{\rho}}_{\mathup{tr}}
    \le
    C \exp\paren[\big]{- \paren[\big]{\dist{X, \Lambda\setminus \Lambdag}}^q}
\end{equation}
for some \(C, q>0\) and \(\norm{\interpunct}_{\mathup{tr}}\) being the trace norm.
Here, \(\rho \equiv P\) and \(\tilde{\rho} \equiv \tilde{P}\) denote the orthogonal projections on \(\ket{\psi}\) and \(P_{*,X} \ket{\psi}\), respectively, i.e.~they are \emph{pure states}.

Due to the product structure of the ground state vector of \(H_*\) and its restrictions, we easily see that \(\tilde{\rho}\) can be written as
\begin{equation*}
    \tilde{\rho}
    =
    \ket{\psi_*\vert_{X}} \bra{\psi_*\vert_X} \otimes \tilde{\rho}^{\Lambda\setminus X}
\end{equation*}
for some state \(\tilde{\rho}^{\Lambda\setminus X}\) on \(\Lambda\setminus X\).
This means that, analogously to Section~\ref{subsubsec:Isingproof},
\(\expectation{\interpunct}_{\rho_*}\) and \(\expectation{\interpunct}_{\tilde{\rho}}\) can be intertwined by a norm preserving \(*\)-automorphism \(\tau\) satisfying~\eqref{eq:AUTO}.
In particular, by means of Proposition~\ref{prop:mechforlocalgap}~(vii) (note that, since \(\rho_*\) is spectrally gapped, it fulfills the additional assumption of~\ref{prop:mechforlocalgap}~(vii) by means of Proposition~\ref{prop:GDG}), we thus find that \(\tilde{\rho}\) satisfies Assumption \nameref{ass:localGAP_main} -- but only for observables supported in \(X \subset \Lambda\) with \(\diam(X) \le \dist{X, \Lambda\setminus\Lambdag}^\beta\) and without \(\dist{Y, \Lambda \setminus \Lambdag}\) in the argument of \(\chi_{b,p}\); that is, Assumption \nameref{ass:localGAPweak}.
This implies, by means of Proposition~\ref{prop:mechforlocalgap}~(i)+(vi) (trivially modified to the setting of \nameref{ass:localGAPweak}) and~\eqref{eq:tracenorm}, that \(\rho\) satisfies Assumption \nameref{ass:localGAPweak}.
Finally, for a general (\emph{mixed}) state \(\rho\), we conclude the desired by taking convex combinations (as at the end of the argument in Section~\ref{subsubsec:Isingproof}).

\statement{Acknowledgments}
It is a pleasure to thank Stefan Teufel for numerous interesting discussions, fruitful collaboration and helpful comments on an earlier version of the manuscript.
J.~H.~acknowledges partial financial support by the ERC Advanced Grant ‘RMTBeyond’ No.~101020331.
T.~W.~acknowledges financial support from the \foreignlanguage{ngerman}{Deutsche Forschungsgemeinschaft} (DFG, German Research Foundation) –
465199066
.

\statement{Conflict of interest}
The authors have no conflicts to disclose.

\statement{Author contributions}
\emph{Joscha~Henheik:} writing -- original draft (equal); writing -- review and editing (equal).
\emph{Tom~Wessel:} writing -- original draft (equal); writing -- review and editing (equal).

\statement{Data availability}
Data sharing is not applicable to this article as no new data were created or analyzed in this study.

\appendix

\section{Technical lemmata}
\label{app:technical}

In this section, we prove the technical lemmata required for the construction of the NEASS\@.
We begin with some general properties of the functions \(\chi_{b,p}\) and Lieb-Robinson bounds for \((b,p)\)-localized SLT-operators in Appendix~\ref{app:properties-of-chi-b-p-and-LR-bounds}.
In Appendix~\ref{app:commutators-and-dynamics-of-SLT-operators}, we prove that the various operations used in the construction of the NEASS preserve locality.
In Appendix~\ref{app:inverse-liouvillian} we recall the construction of the quasi-local inverse of the Liouvillian and prove that it also preserves locality of SLT-operators.
Finally, Appendices~\ref{app:proof-SLTlocal} and~\ref{app:proof-LDGweak} are concerned with the proofs of Lemmata~\ref{lem:SLTlocal} and~\ref{lem:LDGweak}, respectively.

In all proofs, \(C>0\) is a generic constant that might change within the computations.
It can in particular depend on all the parameters chosen in the statements, but it is uniform in the chosen lattice and the operators appearing.

\subsection{Properties of the decay function\texorpdfstring{ \(\chi_{b,p}\)} {}}
\label{app:properties-of-chi-b-p-and-LR-bounds}

Let us first collect some properties of the decay function \(\chi_{b,p}\) we use in the definition of the interaction norm.
From~\cite[Lemma~7.2.3]{Maier2022} we have the following Lemma, where we simplified the statements.

\begin{lem} \label{lem:chifnct}
    For any \(b \geq 0\) and \(s\in \intervaloc{0,1}\), the function \(\chi_{b,p}\) satisfies the following properties:
    \begin{enumerate}[label=\textup{(\alph*)}, ref=\thetheorem\,(\alph*)]
        \item \label{item:lem-chifnct-logarithmically-superadditive}
            \(\chi_{b,p}\) is logarithmically superadditive, i.e.\ \(\chi_{b,p}(x+y) \geq \chi_{b,p}(x) \, \chi_{b,p}(y)\) for all \(x\), \(y \geq 0\).
        \item \label{item:lem-chifnct-superpolynomial-decay}
            For every \(b>0\) and \(k \geq 0\) there exists a constant \(C>0\) such that
            \begin{equation*}
                \sup_{x \geq 0} x^k \, \chi_{b,p}(x)
                =
                C
                .
            \end{equation*}
    \end{enumerate}
\end{lem}

As a direct consequence of Lemma~\ref{item:lem-chifnct-superpolynomial-decay}, we get the following Lemma, which we write out to fix the constant and recall it in later proofs.

\begin{lem}
    \label{lem:cardinality-over-diameter-bound}
    Let \(d\in \N\), \(\Cvol>0\), \(b>0\), \(p\in \intervaloc{0,1}\) and \(k\in \N\).
    Then there exists a constant \(\Ccod{b,p,k}>0\) such that for all lattices \(\Lambda\in \mathcal{G}(d,\Cvol)\) and sets \(Z\subset \Lambda\)
    \begin{equation*}
        \abs{Z}^k \, \chi_{b,p}\pdiam{Z}
        \leq
        \Ccod{b,p,k}
        .
    \end{equation*}
\end{lem}

Before we state the Lieb-Robinson bound, which is a crucial ingredient in the proof, let us briefly recall the time-dependent Heisenberg evolution.
For a time-dependent interaction defined on an interval \(I\subset \R\) with corresponding SLT operator \(H(t)\), let \(\tau_{t,s}\) be the unique solution of
\begin{equation*}
    -\I \, \frac{\D}{\D t} \, \tau_{t,s}(A)
    =
    \tau_{t,s}\paren[\big]{\commutator{H(t), A}}
    \qquadtext{and}
    \tau_{s,s} = \operatorname{Id}
    \qquad\text{for all \(s\), \(t\in I\)}
    .
\end{equation*}
This was already used with a different time scaling for the Hamiltonian \(H_\epsi(t)\) in the proof of Theorem~\ref{thm:time-dependent-NEASS}.
Under locality assumptions on the Hamiltonian, one finds the following Lieb-Robinson bound.

\begin{lem}[{Lieb-Robinson bound~\cite[Theorem~7.3.3]{Maier2022}}]
    \label{lem:LRB}
    Let \(d\in \N\), \(\Cvol>0\), \(b'>b>0\), \(p\in \intervaloc{0,1}\) and \(k\in \N\).
    There exists constants \(C\) and \(c>0\) such that for all lattices \(\Lambda\in \mathcal{G}(d,\Cvol)\), intervals \(I\subset \R\), time-dependent interactions \(\Phi\), disjoint subsets \(X\), \(Y \subset \Lambda\), observables \(A\in \alg_X\) and \(B\in \alg_Y\), and \(s\), \(t\in I\) it holds that
    \begin{equation}
        \label{eq:LR-bound-general}
        \norm[\big]{\commutator[\big]{\tau_{t,s}(A),B}}
        \leq
        C \, \norm{A} \, \norm{B}
        \, \paren[\big]{
            \E^{b \, v \, \abs{t-s}}-1
        }
        \, D(X,Y)
        ,
    \end{equation}
    where \(v = c \, \norm{\Phi}_{b',p} / b\) is the Lieb-Robinson velocity and
    \begin{align*}
        D(X,Y)
        &:=
        \min \List[\bigg]{
            \sum_{x\in X} \chi_{b,p}\pdist{x,Y},
            \sum_{y\in Y} \chi_{b,p}\pdist{y,X}
        }
        \\&\phantom{:}\leq
        \min \List[\big]{\abs{X},\abs{Y}}
        \, \chi_{b,p}\pdist{X,Y}
        .
    \end{align*}
\end{lem}

The Lieb-Robinson velocity is defined including the \(1/b\) because~\eqref{eq:LR-bound-general} can be bounded by
\begin{equation*}
    C \, \norm{A} \, \norm{B}
    \, \min \List[\big]{\abs{X},\abs{Y}}
    \, \E^{b \, \paren[\big]{ v \, \abs{t-s} - \dist{X,Y}^p}}
    .
\end{equation*}

\subsection{Commutators and dynamics of localized SLT-operators}
\label{app:commutators-and-dynamics-of-SLT-operators}

\begin{lem}[Commutator with local observable]
    \label{lem:commutator with local observable}
    Let \(d\in \N\), \(\Cvol>0\), \(b>0\), \(p\in \intervaloc{0,1}\) and \(k\in \N\).
    There exists constants \(C\) and \(C_k>0\) such that for all lattices \(\Lambda\in \mathcal{G}(d,\Cvol)\), subsets \(\Omega\), \(X \subset \Lambda\), SLT operators \(A_1\) and observables \(O\in \alg_X\) it holds that
    \begin{equation}
        \label{eq:lem-commutator-SLT-with-observable}
        \Norm[\big]{\commutator{A_1,O}}
        \leq
        2 \, \Norm{O} \, \abs{X} \, \chi_{b,p}\pdist{X,\Omega} \, \Norm{\Phi_{A_1}}_{b,p;\Omega}
        .
    \end{equation}
    For a second observable \(\tilde{O} \in \Alg_Y\), it holds that
    \begin{equation}
        \label{eq:comm-SLT-two-obs}
        \Norm[\big]{\commutator{\commutator{A_1, O}, \tilde{O}}}
        \leq
        4 \, \Norm{O} \, \Norm{\tilde{O}} \, \abs{X} \, \chi_{b,p}\pdist{X,Y} \, \chi_{b,p} \pdist{X,\Omega}\, \Norm{\Phi_{A_1}}_{b,p;\Omega}
        .
    \end{equation}

    Finally, if additionally also \(A_2,\dotsc,A_k\) are SLT operators, then
    \begin{equation}
        \label{eq:lem-multi-commutator-SLT-with-observable}
        \Norm[\big]{\adjoint_{A_k}\dotsb\adjoint_{A_1}(O)}
        \leq
        C_k
        \, \Norm{O}
        \, \abs{X}^k
        \, \chi_{b,p}\pdist{X,\Omega}
        \, \Norm{\Phi_{A_1}}_{b,p;\Omega}
        \, \prod_{j=2}^k \, \norm{\Phi_{A_j}}_{b,p}
        .
    \end{equation}
    All three bounds, in particular, also hold for \(\Omega=\Lambda\), where \(\norm{\interpunct}_{b,p;\Omega} = \norm{\interpunct}_{b,p}\) and \(\dist{X,\Omega} = 0\).
\end{lem}

\begin{proof}
    We begin with the first statement and write \(A=A_1\).
    Since \(\commutator{\Phi_A(Z),O}\) vanishes whenever \(Z\cap X=\emptyset\), we find
    \begin{align*}
        \Norm[\big]{\commutator{A,O}}
        &\leq
        \sumstack[lr]{Z\subset\Lambda \colonpunct\\ Z\cap X \neq \emptyset} \, 2\, \Norm[\big]{\Phi_A(Z)} \,\Norm{O}
        \\&\leq
        2 \, \Norm{O} \, \sum_{z\in X}
        \sumstack{Z\subset\Lambda \colonpunct\\ z\in Z}
        \frac{\Norm[\big]{\Phi_A(Z)}}{\chi_{b,p}\pdiam{Z}\,\chi_{b,p}\pdist{z,\Omega}}
        \, \chi_{b,p}\pdist{z,\Omega}
        \\&\leq
        2 \, \Norm{O} \, \sum_{z\in X}\, \chi_{b,p}\pdist{z,\Omega}\, \Norm{\Phi_A}_{b,p;\Omega}
        \\&\leq
        2 \, \Norm{O} \, \abs{X}\, \chi_{b,p}\pdist{X,\Omega}\, \Norm{\Phi_A}_{b,p;\Omega}
        ,
    \end{align*}
    where we just overcount in the second inequality.
    Clearly, the same statement also holds with \(\Omega=\Lambda\).
    The proof of~\eqref{eq:comm-SLT-two-obs} is analogous to the proof of~\eqref{eq:lem-commutator-SLT-with-observable} and so omitted.

    We conclude by proving~\eqref{eq:lem-multi-commutator-SLT-with-observable} using induction.
    Note that the outer operators are all SLT-operators on \(\Lambda\).
    The \(k=1\) case is given in~\eqref{eq:lem-commutator-SLT-with-observable}.
    We now assume~\eqref{eq:lem-multi-commutator-SLT-with-observable} for some fixed \(k\) and with \(\Omega=\Lambda\).
    Then, we add a further commutator with \(A_0\) to conclude
    \begin{align*}
        \Alignindent
        \Norm[\big]{\adjoint_{A_k}\dotsb\adjoint_{A_1}\adjoint_{A_0}(O)}
        \\&\leq
        \sumstack[lr]{Z\subset\Lambda \colonpunct\\ Z\cap X \neq \emptyset}
        \Norm[\big]{\adjoint_{A_k}\dotsb\adjoint_{A_1}\commutator{\Phi_{A_0}(Z),O}}
        \\&\leq
        \sumstack[lr]{Z\subset\Lambda \colonpunct\\ Z\cap X \neq \emptyset}
        C_k
        \, \Norm[\big]{\commutator[\big]{\Phi_{A_0}(Z),O}}
        \, \abs{X\cup Z}^k
        \, \prod_{j=1}^k \, \norm{\Phi_{A_j}}_{b,p}
        \\&\leq
        \begin{aligned}[t]
            &C_k \, 2^k \, \norm{O} \, \abs{X}^k \, \sum_{z\in X}
            \chi_{b,p}\pdist{z,\Omega} \, \prod_{j=1}^k \, \norm{\Phi_{A_j}}_{b,p}
            \\&\times\sumstack{Z\subset\Lambda \colonpunct\\ z\in Z}
            \frac{\Norm[\big]{\Phi_{A_0}(Z)}}{\chi_{b,p}\pdiam{Z}\,\chi_{b,p}\pdist{z,\Omega}}
            \, \chi_{b,p}\pdiam{Z}
            \, \abs{Z}^k
        \end{aligned}
        \\&\leq
        \Ccod{b,p,k} \, C_k \, 2^k
        \, \norm{O}
        \, \abs{X}^{k+1}
        \, \chi_{b,p}\pdist{X,\Omega}
        \, \norm{\Phi_{A_0}}_{b,p;\Omega}
        \, \prod_{j=1}^k \, \norm{\Phi_{A_j}}_{b,p}
        ,
    \end{align*}
    where we used Lemma~\ref{lem:cardinality-over-diameter-bound} in the last step.
    This finishes the induction.
\end{proof}

\begin{lem}[Multi-commutators]
    \label{lem:multi-commutators-of-SLT-are-SLT}
    Let \(d\in \N\), \(\Cvol>0\), \(b>0\), \(p\in \intervaloc{0,1}\), \(\varepsilon>0\) and \(k\in \N\).
    There exists a constant \(C>0\) such that for all lattices \(\Lambda\in \mathcal{G}(d,\Cvol)\), subsets \(\Omega\subset \Lambda\) and SLT operators \(A_0,\dotsc,A_k\) it holds that
    \begin{equation}
        \label{eq:lem-multi-commutators-of-SLT-are-SLT}
        \Norm[\big]{\Phi_{\adjoint_{A_k}\dotsb\adjoint_{A_1}(A_0)}}_{b,p;\Omega}
        \leq
        C
        \, \Norm{\Phi_{A_0}}_{b+\varepsilon,p;\Omega}
        \, \prod_{j=1}^k \, \norm{\Phi_{A_j}}_{2b+\varepsilon,p}
        .
    \end{equation}
\end{lem}

\begin{proof}
    For the proof we first need to constructed an interaction for the commutator of two SLT operators \(A\) and \(B\).
    It turns out that it can be given as
    \begin{equation*}
        \Phi_{\commutator{A,B}}(Z)
        =
        \sumstack[lr]{X,Y\subset \Lambda\colonpunct\\ X\cup Y=Z\\ X\cap Y\neq\emptyset}
        \commutator[\big]{\Phi_A(X),\Phi_B(Y)}
        .
    \end{equation*}
    Then,
    \begin{align*}
        \Alignindent
        \sumstack{Z\subset\Lambda\colonpunct\\z\in Z}
        \frac
            {\Norm[\big]{\Phi_{\commutator{A,B}}(Z)}}
            {\chi_{b,p}\pdiam{Z}\,\chi_{b,p}\pdist{z,\Omega}}
        \\&\leq
        \sumstack[l]{Z\subset\Lambda\colonpunct\\z\in Z}
        \sumstack{X,Y\subset\Lambda\colonpunct\\ X\cup Y=Z\\ X\cap Y\neq\varnothing}
        \frac
            {2 \, \Norm[\big]{\Phi_{A}(X)} \, \Norm[\big]{\Phi_{B}(Y)}}
            {\chi_{b,p}\pdiam{X}\,\chi_{b,p}\pdiam{Y}\,\chi_{b,p}\pdist{z,\Omega}}
        ,
    \end{align*}
    where we used \(\diam(Z) \leq \diam(X) + \diam(Y)\) and the properties of \(\chi_{b,p}\).
    The above sum can be bounded by the sum of the terms where \(z\in X\) or \(z\in Y\).
    The latter can be upper bounded by
    \begin{align*}
        \Alignindent
        2\,
        \sumstack[l]{Y\subset\Lambda\colonpunct\\z\in Y}
        \frac
            {\Norm[\big]{\Phi_{B}(Y)}}
            {\chi_{b,p}\pdiam{Y}\,\chi_{b,p}\pdist{z,\Omega}}
        \sum_{x\in Y} \sumstack{X\subset\Lambda\colonpunct\\x\in X}
        \frac
            {\Norm[\big]{\Phi_{A}(X)}}
            {\chi_{b,p}\pdiam{X}}
        \\&\leq
        2\,
        \sumstack[l]{Y\subset\Lambda\colonpunct\\z\in Y}
        \frac
            {\Norm[\big]{\Phi_{B}(Y)}}
            {\chi_{b+\varepsilon,p}\pdiam{Y} \, \chi_{b+\varepsilon,p}\pdist{z,\Omega}}
        \, \chi_{\varepsilon,p}\pdiam{Y}
        \, \abs{Y}
        \, \Norm[\big]{\Phi_{A}}_{b,p}
        \\&\leq
        C \, \Ccod{\varepsilon,p,1} \, \sumstack[l]{Y\subset\Lambda\colonpunct\\z\in Y}
        \frac
            {\Norm[\big]{\Phi_{B}(Y)}}
            {\chi_{b+\varepsilon,p}\pdiam{Y} \, \chi_{b+\varepsilon,p}\pdist{z,\Omega}}
        \, \Norm[\big]{\Phi_{A}}_{b,p}
        \\&\leq
        C
        \, \Norm{\Phi_{B}}_{b+\varepsilon,p;\Omega}
        \, \Norm{\Phi_{A}}_{b,p}
        ,
    \end{align*}
    where we used Lemma~\ref{lem:cardinality-over-diameter-bound} in the third step.
    Using \(
        \chi_{b,p}\pdist{z,\Omega}
        \geq
        \chi_{b,p}\pdiam{X} \, \chi_{b,p}\pdist{x,\Omega}
    \) for all \(z,x\in X\), the part of the sum where \(z\in X\) can be bounded by
    \begin{align*}
        \Alignindent
        2 \, \sumstack[l]{X\subset\Lambda\colonpunct\\z\in X}
        \frac
            {\Norm[\big]{\Phi_{A}(X)}}
            {\chi_{2b,p}\pdiam{X}}
        \sum_{y\in X}
        \sumstack{Y\subset\Lambda\colonpunct\\y\in Y}
        \frac{1}{\chi_{b,p}\pdist{y,\Omega}}
        \frac
            {\Norm[\big]{\Phi_{B}(Y)}}
            {\chi_{b,p}\pdiam{Y}}
        \\ &\leq
        2 \, \sumstack[l]{X\subset\Lambda\colonpunct\\z\in X}
        \frac
            {\Norm[\big]{\Phi_{A}(X)}}
            {\chi_{2b+\varepsilon,p}\pdiam{X}}
        \, \chi_{\varepsilon,p}\pdiam{X}
        \, \abs{X}
        \, \Norm[\big]{\Phi_{B}}_{b,p;\Omega}
        \\ &\leq
        C
        \, \Norm[\big]{\Phi_{A}}_{2b+\varepsilon,p}
        \, \Norm[\big]{\Phi_{B}}_{b,p;\Omega}
        .
    \end{align*}
    Both bounds together prove the claim for \(k=1\).
    To proceed by induction we assume that the statement holds for some fixed \(k\).
    Then, applying first the statement for \(k=1\) and then \(k=k\) both with \(\varepsilon/2\) we obtain
    \begin{align*}
        \Norm[\big]{\Phi_{\adjoint_{A_{k+1}}\dotsb\adjoint_{A_1}(A_0)}}_{b,p;\Omega}
        &\leq
        C
        \, \Norm{\Phi_{\adjoint_{A_{k}}\dotsb\adjoint_{A_1}(A_0)}}_{b+\varepsilon/2,p;\Omega}
        \, \norm{\Phi_{A_{k+1}}}_{2b+\varepsilon/2,p}
        \\&\leq
        C
        \, \Norm{\Phi_{A_0}}_{b+\varepsilon,p;\Omega}
        \, \prod_{j=1}^{k+1} \, \norm{\Phi_{A_j}}_{2b+\varepsilon,p}
        .
    \end{align*}
\end{proof}

For the following statements we need to approximate the time evolution of local operators, which in principle live on the whole lattice.
This can be done by a so called conditional expectation, which is just the partial trace in our case of finite spin systems.
We collect its properties in the following lemma.

\begin{lem}[{\cite[Lemma~4.1]{NSY2019}}]
    \label{lem:conditional-expectation}
    Let \(\Lambda\) be a lattice and \(X\subset\Lambda\).
    Then there exists a unit-preserving, completely positive linear map \(\cmeansym_{X}\colon \alg_\Lambda\rightarrow\alg_\Lambda\) satisfying
    \begin{enumerate}[label=\textup{(\alph*)}]
        \item
            \(\cmean{X}{\alg_\Lambda}\subset\alg_X\);
        \item \label{lem:cmean E(ABC)=AE(B)C}
            \(\cmean{X}{ABC} = A \, \cmean{X}{B} \, C\) for all \(B\in\alg_\Lambda\) and \(A,C\in\alg_{X}\);
            This in particular implies \(\cmean{X}{A}=A\) for all \(A\in\alg_{X}\);
        \item
            \(\Norm{\cmeansym_{X}}=1\);
        \item
            \(\cmeansym_{X} \circ \cmeansym_{Y} = \cmeansym_{X\cap Y}\), for \(X,Y\subset\Lambda\);
        \item \label{lem:cmean bound for difference A-E(A)}
            If \(A\in\alg_\Lambda\) satisfies
            \begin{equation}
                \Norm[\big]{\commutator{A,B}}
                \leq
                \eta \, \Norm{A} \, \Norm{B}
                \quadtext{for all}
                B\in\alg_{\Lambda\setminus X}
                ,
            \end{equation}
            for some \(\eta>0\), then
            \begin{equation}
                \Norm{A-\cmean{X}{A}} \leq \eta \, \Norm{A}
                .
            \end{equation}
    \end{enumerate}
\end{lem}

Together with the Lieb-Robinson bound we can now obtain the following.

\begin{lem}[Dynamics]
    \label{lem:commutator-SLT-operator-evolved-observable}
    Let \(d\in \N\), \(\Cvol>0\), \(b,b'>0\), and \(p,p'\in \intervaloc{0,1}\) satisfying \(p'<1\) or \(b'<b\).
    There exists constants \(C\) and \(c>0\) such that for all lattices \(\Lambda\in \mathcal{G}(d,\Cvol)\), the following holds:
    Let \(I\subset \R\) an interval, the interaction \(\Phi\) generate the dynamics \(\tau_{t,s}\) with Lieb-Robinson velocity \(v = c \, \norm{\Phi}_{b,p} / b\).
    For every \(\chi_{b',p'}\)-SLT operator \(A\), subsets \(\Omega\), \(X\subset \Lambda\), observables \(O\in \alg_X\), and \(t,s\in I\) it holds that
    \begin{equation*}
        \Norm[\Big]{\commutator[\big]{A,\tau_{t,s}(O)}}
        \leq
        C
        \, \Norm{O}
        \, \abs{X}
        \, \abs{X_{(2\vts)^{1/p}}}
        \, \Norm{\Phi_A}_{b',p';\Omega}
        \, \chi_{b',p'}\pdist{X_{(2\vts)^{1/p}},\Omega}
    \end{equation*}
\end{lem}

\begin{proof}
    We use the local decomposition technique similar to~\cite[Section 5]{NSY2019}.
    Therefore, let
    \begin{align*}
        \Delta_0
        &:=
        \cmean[\big]{X_{\vts}}{\tau_{t,s}(O)}\\
    \shortintertext{and}
        \Delta_k
        &:=
        \cmean[\big]{X_{(\vts+k)^{1/p}}}{\tau_{t,s}(O)} - \cmean[\big]{X_{(\vts+k-1)^{1/p}}}{\tau_{t,s}(O)}
        ,
    \end{align*}
    so that \(\tau_{t,s}(O)=\sum_{k=0}^\infty \Delta_k\), where the sum is finite since eventually \(X_{(\vts+k)^{1/p}}=\Lambda\).
    By the properties of the conditional expectation
    \begin{align*}
        \Norm{\Delta_0}
        &\leq
        \Norm{\tau_{t,s}(O)}
        =
        \Norm{O}
    \intertext{and for \(k\geq1\) it holds that}
        \Delta_k
        &=
        \cmean[\Big]{X_{(\vts+k)^{1/p}}}{\paren[\big]{1-\cmeansym_{X_{(\vts+k-1)^{1/p}}}}\,\tau_{t,s}(O)}
    \shortintertext{and thus}
        \Norm{\Delta_k}
        &\leq
        \Norm[\big]{\paren[\big]{1-\cmeansym_{X_{(\vts+k-1)^{1/p}}}}\,\tau_{t,s}(O)}
        .
    \end{align*}
    Furthermore, for all \(B\in\alg_{\Lambda\setminus X_{(\vts+k-1)^{1/p}}}\) by the Lieb-Robinson bound (Lemma~\ref{lem:LRB})
    \begin{equation*}
        \Norm[\big]{\commutator{\tau_{t,s}(O),B}}
        \leq
        C_\LR \, \Norm{O} \, \Norm{B} \, \abs{X} \, \E^{b\paren[\big]{\vts-(\vts+k-1)}}
        =
        C_\LR \, \E \, \Norm{O} \, \Norm{B} \, \abs{X} \, \E^{-b k}
    \end{equation*}
    because \({\dist{X,\Lambda\setminus X_{\vts+k-1}}}\geq\vts+k-1\), and thus by Lemma~\ref{lem:conditional-expectation}
    \begin{equation*}
        \norm{\Delta_k}
        \leq
        C_\LR \, \E \, \norm{O} \, \abs{X} \, \E^{-b k}
        .
    \end{equation*}
    Now we can apply Lemma~\ref{lem:commutator with local observable} to each of the summands in the decomposition
    \begin{align*}
        \Norm[\big]{\commutator[\big]{A,\tau_{t,s}(O)}}
        &\leq
        \sum_{k=0}^\infty
        \, \Norm[\big]{\commutator[\big]{A,\Delta_k}}
        \\&\leq
        \sum_{k=0}^\infty
        2 \, \Norm{\Delta_k} \, \abs[\big]{X_{(\vts+k)^{1/p}}} \, \chi_{b',p'}\pdist{X_{(\vts+k)^{1/p}},\Omega} \, \Norm{\Phi_A}_{b',p';\Omega}
        \\&\leq
        \tilde{C} \, \Norm{O} \, \abs{X} \, \Norm{\Phi_A}_{b',p';\Omega}
        \sum_{k=0}^\infty
        \abs[\big]{X_{(\vts+k)^{1/p}}} \, \chi_{b',p'}\pdist{X_{(\vts+k)^{1/p}},\Omega} \, \E^{-b k}.
        \\&\leq
        \begin{aligned}[t]
            &
            \tilde{C} \, \Norm{O} \, \abs{X} \, \abs{X_{(2\vts)^{1/p}}} \, \Norm{\Phi_A}_{b',p';\Omega} \, \chi_{b',p'}\pdist{X_{(2\vts)^{1/p}},\Omega}
            \\&\times
            \sum_{k=0}^\infty
            \paren{1+\Cvol \, (2k)^{d/p}} \, \chi_{b',p'}^{-1}(k) \, \E^{-b k},
        \end{aligned}
    \end{align*}
    where we abbreviated \(\tilde{C} = 2 \max\List{1,C_\LR \, \E}\) and used \((\vts+k)^{1/p} \leq (2\vts)^{1/p} + (2k)^{1/p}\).
    To conclude the result, we observe, that the series is bounded for \(p'<1\) or \(b'<b\) if \(p'=1\).
\end{proof}

\begin{lem}[Conjugation with unitaries]
    \label{lem:conjugation-with-unitaries-evolved-SLT-is-SLT}
    Let \(d\in \N\), \(\Cvol>0\), \(\varepsilon>0\), \(a\), \(b>0\), and \(p,q\in \intervaloc{0,1}\) satisfying \(p<q\) or \(p=q\) and \(a > (2^p+1) \, b\).
    There exists constants \(C\) and \(c>0\) such that for all lattices \(\Lambda\in \mathcal{G}(d,\Cvol)\), the following holds:
    For SLT operators \(D\) and \(S\) it holds that \(A := \evol{\I \, S}{D}\) is an SLT operator as well.
    More precisely, there exists an interaction \(\Phi_A\) such that
    \begin{equation*}
        \Norm{\Phi_A}_{b,p;\Omega}
        \leq
        C \, \e^{c \, \norm{\Phi_S}_{a,q}}
        \, \norm[\big]{\Phi_D}_{b+\varepsilon,p;\Omega}
    \end{equation*}
\end{lem}

\begin{proof}
    The proof uses the same technique as the proof of Lemma~\ref{lem:commutator-SLT-operator-evolved-observable}.

    First fix \(X\subset \Lambda\) and \(O\in \alg_X\) and denote \(\tau(O) = \evol{\I S}{O}\).
    Then define
    \begin{align*}
        \Delta_0(O)
        &:=
        \cmean[\big]{X}{\tau(O)}\\
    \shortintertext{and}
        \Delta_k(O)
        &:=
        \cmean[\big]{X_k}{\tau(O)} - \cmean[\big]{X_{k-1}}{\tau(O)}
        =
        \cmean[\Big]{X_k}{\paren[\big]{\unit - \cmeansym_{X_{k-1}}} \, \tau(O)}
        .
    \end{align*}
    By properties of the conditional expectation, Lemma~\ref{lem:conditional-expectation}, and the Lieb-Robinson bound, Lemma~\ref{lem:LRB}, one can bound
    \begin{equation*}
        \Norm{\Delta_0(O)}
        \leq
        \Norm{O}
    \end{equation*}
    and
    \begin{equation}
        \label{eq:proof-lem-conjugation-with-unitaries-bound-Delta-k}
        \Norm{\Delta_k(O)}
        \leq
        C_\LR \, \norm{O} \, \abs{X}
        \, \paren[\big]{\e^{c \, \norm{\Phi_S}_{a,q}}-1} \, \chi_{a',q}(k)
    \end{equation}
    for \(k \geq 1\) because \(\dist{X,\Lambda\setminus X_{k-1}}=k\) in our geometry, where we chose \(a'<a\).

    We now construct an interaction for \(A\).
    First note, that
    \begin{equation*}
        A
        =
        \tau(A)
        =
        \sumstack[lr]{Z\subset\Lambda}
        \tau\paren[\big]{\Phi_D(Z)}
        =
        \sumstack[l]{Z\subset\Lambda}
        \sumstack[r]{k=0}^\infty
        \Delta_k\paren[\big]{\Phi_D(Z)}
    \end{equation*}
    where \(\Delta_k\paren[\big]{\Phi_A(Z)}\in\alg_{Z_k}\) and the sum is actually finite.
    For any function \(f\colon \Set{\Omega \subset \Lambda} \rightarrow \alg_{\Lambda}\) and \(k\geq0\) it holds that
    \begin{equation*}
        \sumstack[lr]{Y\subset\Lambda} f(Y)
        =
        \sumstack[l]{Y\subset\Lambda}
        \paren[\Big]{\,
            \sumstack[r]{Z\subset\Lambda}
            \unit_{Z=Y_k}
        }
        \, f(Y)
        =
        \sumstack[l]{Z\subset\Lambda}
        \sumstack[r]{Y\subset\Lambda\colonpunct\\Y_k=Z}
        f(Y)
        .
    \end{equation*}
    Applying this with \(f\colon Z \mapsto \Delta_k(\Phi_D(Z))\) for each \(k\) we find
    \begin{equation*}
        A
        =
        \sumstack{k=0}^\infty
        \sumstack{Z\subset\Lambda}
        \sumstack[r]{Y\subset\Lambda\colonpunct\\Y_k=Z}
        \Delta_k\paren[\big]{\Phi_D(Y)}
        =
        \sumstack{Z\subset\Lambda} \Phi_A(Z)
        \quadtext{with}
        \Phi_A(Z)
        :=
        \sumstack{k=0}^\infty
        \sumstack[r]{Y\subset\Lambda\colonpunct\\Y_k=Z}
        \Delta_k\paren[\big]{\Phi_D(Y)}
        .
    \end{equation*}
    With this interaction for \(A\) and any \(z\in \Lambda\) we bound
    \begin{align*}
        \sumstack{Z\subset \Lambda\colonpunct\\z\in Z}
        \frac
            {\norm{\Phi_A(Z)}}
            {\chi_{b,p}\pdiam{Z} \, \chi_{b,p}\pdist{z, \Omega}}
        &\leq
        \sumstack{Z\subset \Lambda\colonpunct\\z\in Z}
        \sumstack{k=0}^\infty
        \sumstack{Y\subset\Lambda\colonpunct\\Y_k=Z}
        \frac
            {\norm[\big]{\Delta_k\paren[\big]{\Phi_D(Y)}}}
            {\chi_{b,p}\pdiam{Z} \, \chi_{b,p}\pdist{z, \Omega}}
        \\&=
        \sumstack{k=0}^\infty
        \sumstack{Y\subset\Lambda}
        \unit_{z\in Y_k}
        \, \frac
            {\norm[\big]{\Delta_k\paren[\big]{\Phi_D(Y)}}}
            {\chi_{b,p}\pdiam{Y_k} \, \chi_{b,p}\pdist{z, \Omega}}
        .
    \end{align*}
    The \(k=0\) term is bounded by \(\norm{\Phi_D}_{p,b;\Omega}\).
    For \(k \geq 0\) we use~\eqref{eq:proof-lem-conjugation-with-unitaries-bound-Delta-k}.
    Moreover, \(\diam(Y_k) \leq \diam(Y) + 2k\) and, since \(z\in Y_k\), there exists \(y\in B_z(k) \cap Y\), such that \(\dist{z,\Omega} \leq k + \dist{y,\Omega}\).
    Hence, the remaining sum is bounded by
    \begin{align*}
        \Alignindent
        C_\LR
        \, \paren[\big]{\e^{c \, \norm{\Phi_S}_{a,q}}-1}
        \sum_{k=1}^\infty
        \sum_{y\in B_z(k)}
        \sumstack{Y\subset\Lambda\colonpunct\\y\in Y}
        \, \frac
            {\abs{Y} \, \norm[\big]{\Phi_D(Y)}}
            {\chi_{b,p}\pdiam{Y} \, \chi_{b,p}\pdist{y, \Omega}}
        \, \frac
            {\chi_{a',q}(k)}
            {\chi_{b,p}(2k) \, \chi_{b,p}(k)}
        \\&=
        C_\LR
        \, \paren[\big]{\e^{c \, \norm{\Phi_S}_{a,q}}-1}
        \, C \, \norm[\big]{\Phi_D}_{b+\varepsilon,p;\Omega}
        \sum_{k=1}^\infty
        \paren[\big]{ 1+\Cvol \, k^d }
        \, \frac
            {\chi_{a',q}(k)}
            {\chi_{(2^p+1)b,p}(k)}
        ,
    \end{align*}
    for some \(C>0\).
    The remaining sum is bounded if \(q>p\), or \(q=p\) and \(a'>(2^p+1)\,b\).
    The last condition is equivalent to \(a>(2^p+1)\,b\), by our choice of \(a'\).
    To conclude the statement, we choose the total constant larger than \(1\) and add the \(k=0\) term \(
        \norm[\big]{\Phi_D}_{b,p;\Omega}
        \leq
        \norm[\big]{\Phi_D}_{b+\varepsilon,p;\Omega}
    \).
\end{proof}

\subsection{Quasi-local inverse of the Liouvillian}
\label{app:inverse-liouvillian}
In this section, we briefly recall the construction of the quasi-local inverse of the Liouvillian \(\calI\) (see~\eqref{eq:invliouintro} and~\eqref{eq:invlioudef}) and a related operator \(\calJ\) used in~\eqref{eq:Jdef}.
Both of them use use certain properties (recall~\eqref{eq:Fouriercpct}--\eqref{eq:wbound}) of a \emph{weight function} \(w_g\), which one can construct explicitly.

\begin{lem}[Explicit weight function, cf.~Lemma~2.3 from~\cite{BMNS2012}]
    \label{lem:weightfunctions}
    Let \(g > 0\) and consider the sequence \((a_n)_{n \ge 1}\) of positive numbers, defined as \(a_n = a_1 (n (\log n)^2)^{-1}\) for \(n \ge 2\) and \(a_1\) chosen such that \(\sum_{n \ge 1} a_n = \gamma/2\).
    Then, the positive function \(w_g \in L^1(\R)\) defined via the infinite product
    \begin{equation}
        \label{eq:wexpl}
        w_g(t) := c_g \prod_{n=1}^{\infty} \paren*{\frac{\sin (a_nt)}{a_nt}}^2
    \end{equation}
    and \(c_g > 0\) chosen such that \(\int_\R \D t \, w_g(t) = 1\), has Fourier transform \(\widehat{w_g}\) with compact support \(\support(\widehat{w_g}) \subset [-g,g]\) (cf.~\eqref{eq:Fouriercpct}) and satisfies the bound \(\abs{w_g(t)} \le C \e^{-\abs{t}^q}\) for every \(q<1\) (cf.~\eqref{eq:wbound}).
\end{lem}

Given the explicit weight function~\eqref{eq:wexpl}, the quasi-local inverse of the Liouvillian \(\calI_{H,g} \Ab{\interpunct}: \Alg \to \Alg\) of the Hamiltonian \(H\) with gap parameter \(g>0\), acting on \(A \in \Alg\), is then defined as%
\begin{equation}
    \label{eq:invlioudef}
    \calI_{H,g}\Ab{A}
    :=
    \int_{\R} \D t \, w_g(t) \int_{0}^{t} \D s \, \e^{\I Hs} \, A \, \e^{-\I Hs}
    .
\end{equation}
\begin{rmk}[On the weight function]
	\label{rmk:weight}
	We point out, that, in principle and unless additional conditions are given, any map \(\calI_{H,g}\) with the properties~\eqref{eq:invliouintro}--\eqref{eq:wbound} would work for all of our proofs in this paper, in particular including the statements from Section~\ref{sec:localgap}.
\end{rmk}
Together with the map \(\calJ_{H,g} \colon \Alg \to \Alg\), again depending on the Hamiltonian \(H\) and gap parameter \(g> 0\), with action on \(A \in \Alg\) defined as
\begin{equation*}
    \calJ_{H,g}\Ab{A}
    :=
    \int_{\R} \D t \, w_g(t) \, \E^{\I Ht} \, A \, \E^{- \I Ht}
    ,
\end{equation*}
one then has (recalling the Liouvillian \(\calL_H\Ab{\interpunct} = -\I \, [H, \interpunct]\)) the identity \(\calL_H \circ \calI_{H,g}\Ab{A} - A = \calJ_{H,g}\Ab{A}\) for all \(A \in \Alg\); see~\eqref{eq:Jdef}.

The inverse Liouvillian \(\calI_{H,g}\) is called \emph{quasi-local}, since, if \(H\) satisfies the Lieb-Robinson bound from Lemma~\ref{lem:LRB}, then it holds that, for \(A \in \Alg_X\) and \(B \in \Alg_Y\) (see, e.g., \cite[Example~5.7]{NSY2019})
\begin{equation}
    \label{eq:invliouQL}
    \norm*{\commutator*{\calI_{H,g}\Ab{A} , B}}
    \le
    C \norm{A} \, \norm{B} \, \min\{ \abs{X}, \abs{Y}\} \chi_{\tilde{b}, \tilde{p}}\pdist{X, Y}
\end{equation}
for some \(\tilde{b} > 0\) (depending on the Lieb-Robinson velocity \(v\) from Lemma~\ref{lem:LRB}) and \(\tilde{p} \in (0,1)\), which can be chosen as \(p\) from~\eqref{eq:LR-bound-general} if \(p < 1\).
The estimate~\eqref{eq:invliouQL} holds verbatim with \(\calI\) replaced by \(\calJ\).

Beside the classical quasi-locality estimate~\eqref{eq:invliouQL}, the inverse Liouvillian even preserves locality of SLT operators.
This is the content of the following lemma, the special case for \(p=1\) without localization (i.e.\ for \(\Lambda\)-localized SLT operators) already appeared in~\cite[Theorem~7.5.6]{Maier2022} and is based on~\cite[Theorem~4.8]{BMNS2012}.

\begin{lem}[Quasi-local inverse of the Liouvillian on SLT operators]
    \label{lem:inverse-liouvillian-preserves-SLT}
    Let \(d\in \N\), \(\Cvol>0\), \(\varepsilon>0\), \(a\), \(b>0\), \(p\), \(q\in \intervaloc{0,1}\) satisfying \(q<p\), and \(\Cint>0\).
    There exist a constant \(C>0\) such that for all lattices \(\Lambda\in \mathcal{G}(d,\Cvol)\), the following holds:
    For SLT operators \(H\) and \(D\) it holds that \(\calI_{H,g}(D)\) is an SLT operator as well.
    More precisely if \(\Norm{\Phi_{H}}_{b,p} \leq \Cint\), then there exists an interaction \(\Phi_{\calI_{H,g}(D)}\) such that
    \begin{equation*}
        \Norm{\Phi_{\calI_{H,g}(D)}}_{a,q;\Omega}
        \leq
        C
        \, \norm[\big]{\Phi_D}_{a+\varepsilon,q;\Omega}
        .
    \end{equation*}
    The statement holds verbatim when replacing \(\calI\) by \(\calJ\) from~\eqref{eq:Jdef}.
\end{lem}

In the proof of the Lemma we use the equality
\begin{equation*}
    \calI_{H,g}\Ab{A}
    =
    \int_{\R} \D t \, \Wf_g(t) \, \e^{\I Ht} A \e^{-\I Ht}
    ,
\end{equation*}
where the function \(\Wf_g\) is given by \(\Wf_g(t) = - \int_{- \infty}^{t} \D s \, w_g(s) + \unit_{[0, \infty )}(t)\) with \(\unit_{[0,\infty )}\) being the characteristic function of \([0,\infty)\).
In many works, e.g.~\cite{BMNS2012, BRF2018, MT2019, Teufel2020, NSY2019}, this is used as a definition for the inverse Liouvillian.
It can easily be checked that, by Lemma~\ref{lem:weightfunctions}, also \(\Wf_g\) satisfies \(\abs{\Wf_g(t)} \le C \e^{-\abs{t}^q}\).

\begin{proof}
    The proof uses the same technique as the proofs of Lemma~\ref{lem:commutator-SLT-operator-evolved-observable} and~\ref{lem:conjugation-with-unitaries-evolved-SLT-is-SLT}.

    First fix \(X\subset \Lambda\) and \(O\in \alg_X\) and denote \(\tau(O) = \evol{\I H}{O}\).
    Then, define
    \begin{align*}
        \Delta_0(O)
        &=
        \I \int_{\R} \D s \, \Wf_g(s) \, \cmean[\big]{X}{\tau_s(O)}
    \shortintertext{and}
        \Delta_k(O)
        &=
        \I \int_{\R} \D s \, \Wf_g(s) \, \paren[\Big]{
            \cmean[\big]{X_{k}}{\tau_s(O)}
            - \cmean[\big]{X_{k-1}}{\tau_s(O)}
        }
        \\&=
        \I \int_{\R} \D s \, \Wf_g(s) \, \cmean[\Big]{X_{k}}{
            \paren[\big]{\unit - \cmeansym_{X_{k-1}}}
            \paren[\big]{{\tau_s(O)}}
        }
    \end{align*}
    for \(k\geq1\).
    Then \(\calI_{H,g}(O)=\sum_{k\in{\N}} \Delta_k(O)\) where the sum is eventually finite.

    For \(k=0\) we have
    \begin{equation*}
        \Norm{\Delta_0(O)}
        \leq
        \norm{\Wf_g}_{L^1} \, \Norm{O}
        .
    \end{equation*}

    For \(b'\in \intervaloo{0,b}\), \(k\geq1\) and some \(T>0\) to be chosen, Lemma~\ref{lem:LRB} and the properties of the conditional expectation, yield
    \begin{align*}
        &\alignindent \Norm*{
            \I \int_{-T}^T \D s \, \Wf_g(s) \, \cmean[\Big]{X_{k}}{
                \paren[\big]{\unit - \cmeansym_{X_{k-1}}}
                \paren[\big]{{\tau_s(O)}}
            }
        }
        \\&\leq
        \Norm{\Wf_g}_{L^\infty} \int_{-T}^T \D s \, \norm[\big]{
            \paren[\big]{\unit - \cmeansym_{X_{k-1}}}
            \paren[\big]{{\tau_s(O)}}
        }
        \\&\leq
        C_\LR \, \Norm{O} \, \abs{X} \, \Norm{\Wf_g}_{L^\infty}
        \, \int_{-T}^T \D s \,
        \paren{\e^{b' \, v \, \abs{s}} - 1}
        \, \chi_{b',p}(k)
        \\&=
        2 \, C_\LR \, \Norm{O} \, \abs{X} \, \Norm{\Wf_g}_{L^\infty}
        \frac{\e^{b' v T} - b' v T - 1}{b' \, v}
        \, \chi_{b',p}(k)
        \\&\leq
        2 \, C_\LR \, \Norm{O} \, \abs{X} \, \Norm{\Wf_g}_{L^\infty}
        \, \chi_{b'/2,p}(k) / (b' \, v)
    \end{align*}
    where we chose \(T=k^p/(2v)\) for the last step.
    Furthermore, by Lemma~\ref{lem:weightfunctions} and after integrating twice, for any \(0<\tilde{p}<1\) there exists \(C\) and \(\tilde{b}>0\) such that
    \begin{align*}
        &\alignindent \Norm*{
            \I \int_{\abs{s}\geq T} \D s \, \Wf_g(s) \, \cmean[\Big]{X_{k}}{
                \paren[\big]{\unit - \cmeansym_{X_{k-1}}}
                \paren[\big]{{\tau_s(O)}}
            }
        }
        \\&\leq
        2 \, \Norm{O} \int_{\abs{s}\geq T} \D s \, \abs{\Wf_g(s)}
        \\&\leq
        C \, \chi_{\tilde{b},\tilde{p}}(T)
        \\&\leq
        C \, \chi_{\tilde{b}',\tilde{p}p}(k)
        ,
    \end{align*}
    where \(\tilde{b}'=\paren[\big]{1/(2v)}^{\tilde{p}} \, \tilde{b}\).
    Then, for all \(p'\in \intervaloo{0,p}\) we can choose \(\tilde{p}=p'/p<1\) and combine the two bounds.
    Then, there exists \(C\) and \(\eta>0\) such that
    \begin{equation}
        \label{eq:bound-Delta_k(O)-for-qliL}
        \Norm{\Delta_k(O)}
        \leq
        C \, \abs{X} \, \Norm{O} \, \chi_{\eta,p'}(k)
        \qquadtext{for all}
        k \geq 0
        .
    \end{equation}

    An interaction for \(A:=\calI_{H,g}(D)\) is given by
    \begin{equation*}
        \Phi_A(Z)
        =
        \sum_{k=0}^\infty \sumstack{Y\subset\Lambda\colonpunct\\Y_k=Z} \Delta_k(\Phi_D(Y))
        .
    \end{equation*}
    It follows that
    \begin{align*}
        \sumstack[lr]{Z\subset\Lambda\colonpunct\\z\in Z}
        \, \frac{\Norm[\big]{\Phi_A(Z)}}{\chi_{a,q}\pdiam{Z}\,\chi_{a,q}\pdist{z,\Omega}}
        &\leq
        \sumstack[l]{Z\subset\Lambda\colonpunct\\z\in Z}
        \sumstack{k=0}^\infty
        \sumstack{Y\subset\Lambda\colonpunct\\Y_k=Z}
        \, \frac{\Norm[\big]{\Delta_k\paren[\big]{\Phi_D(Y)}}}{\chi_{a,q}\pdiam{Z}\,\chi_{a,q}\pdist{z,\Omega}}
        \\&=
        \sumstack{k=0}^\infty
        \sumstack[r]{Y\subset\Lambda}
        \unit_{z\in Y_k}
        \, \frac{\Norm[\big]{\Delta_k\paren[\big]{\Phi_D(Y)}}}{\chi_{a,q}\pdiam{Y_k}\,\chi_{a,q}\pdist{z,\Omega}}
        .
    \end{align*}
    The \(k=0\) term is bounded by \(\norm{\Wf_g}_{L^1} \Norm{\Phi_D}_{a,q;\Omega}\).
    For \(k\geq1\) and \(z\in Y_k\) there exists \(y\in B^{\Lambda}_z(k)\cap Y\) such that \(\dist{z,\Omega}\leq k+\dist{y,\Omega}\).
    Furthermore, \(\diam(Y_k) \leq 2k+\diam(Y)\).
    Hence, using~\eqref{eq:bound-Delta_k(O)-for-qliL} the rest of the sum is bounded by
    \begin{align*}
        &\alignindent
        C\,
        \sumstack{k=1}^\infty
        \, \chi_{\eta,p'}(k)
        \sumstack{y\in B^{\Lambda}_z(k)}
        \sumstack{Y\subset\Lambda\colonpunct\\y\in Y}
        \frac{\Norm[\big]{\Phi_D(Y)} \, \abs{Y}}{\chi_{a,q}\pdiam{Y_k}\,\chi_{a,q}\pdist{z,\Omega}}
        \\&\leq
        C
        \, \sumstack{k=1}^\infty
        \, \frac{\chi_{\eta,p'}(k)}{\chi_{a,q}(2k)\,\chi_{a,q}(k)}
        \sumstack{y\in B^{\Lambda}_z(k)}
        \Ccod{\varepsilon,q,1}
        \sumstack{Y\subset\Lambda\colonpunct\\y\in Y}
        \frac{\Norm[\big]{\Phi_D(Y)}}{\chi_{a+\varepsilon,q}\pdiam{Y}\,\chi_{a,q}\pdist{y,\Omega}}
        \\&\leq
        C
        \, \Ccod{\varepsilon,q,1}
        \, \Norm[\big]{\Phi_D}_{a+\varepsilon,q;\Omega}
        \sumstack{k=1}^\infty
        \frac{\chi_{\eta,p'}(k)}{\chi_{(2^q+1)a,q}(k)}
        \, \paren[\big]{1 + \Cvol \, k^d}
        .
    \end{align*}
    The remaining sum is bounded if \(p'>q\), which we can ensure if \(p>q\) by choosing \(p'\in \intervaloo{q,p}\).
    Thus, there exists \(C\) such that
    \begin{equation*}
        \sumstack[l]{Z\subset\Lambda\colonpunct\\z\in Z}
        \, \frac{\Norm[\big]{\Phi_A(Z)}}{\chi_{a,q}\pdiam{Z}\,\chi_{a,q}\pdist{z,\Omega}}
        \leq
        C \, \Norm{\Phi_D}_{a+\varepsilon,q;\Omega}
    \end{equation*}
    for some constant \(C\) depending on \(\Norm{\Phi_H}_{p,b}\), \(p\), \(b\), \(q\), \(a\), \(\varepsilon\) and \(d\) which finishes the proof.
\end{proof}

\begin{rmk}[Abstract properties of \(\calI\) needed in the proof of our main result]
	\label{rmk:abstract}
	For the purpose of proving our main result, Theorem~\ref{thm:linear-response}, it is not necessary to work with the explicit \(\calI_{H,g}\) from~\eqref{eq:invlioudef}.
	In fact, by inspecting the proof of Proposition~\ref{prop:adswitch} in Section~\ref{sec:proof}, which is the key input for our main result, we realize the following: For Theorem~\ref{thm:linear-response} being valid (up to minor adjustments of constants), one only needs that there exists some operator \(\widetilde{\calI}: \Alg \to \Alg\) for which Assumption \nameref{ass:localGAP} is satisfied and such that
	\begin{itemize}
		\item[(i)] for \(A \in \Alg_X\) and \(B \in \Alg_Y\) it holds that (cf.~\eqref{eq:smallZ2})
		\begin{equation*}
			\abs*{\expectation*{\commutator*{\calL_{H} \circ \widetilde{\calI}\Ab{A} - A , B}}_{\rho_0}}
			\le
			C \norm{A} \, \norm{B} \, \diam(X)^\ell \exp(-\dist{X,Y}^q)
		\end{equation*}
		for some positive constants \(C,q, \ell >0\), i.e.~the composition \(\calL_{H} \circ \widetilde{\calI}\) behaves as a quasi-local operator if tested in the above way;
		\item[(ii)] Lemma~\ref{lem:inverse-liouvillian-preserves-SLT} holds, i.e.~\(\widetilde{\calI}\) maps localized SLT operators to localized SLT operators.
	\end{itemize}
	
	These relaxed abstract conditions are, however, not sufficient for showing the relations among the various gap conditions outlined in Section~\ref{sec:localgap}.
\end{rmk}

\subsection{Localized SLT operators: Proof of Lemma~\ref{lem:SLTlocal}}
\label{app:proof-SLTlocal}

For (i), it suffices to realize that, since for strictly \(\Omega\)-localized \(\Phi\) it holds that \(\Phi(Z) = 0\) whenever \(Z \cap (\Lambda \setminus \Omega) \neq \emptyset\), we have
\begin{equation*}
    \norm{\Phi}_{b,p; \Omega}
    =
    \sup_{z \in \Lambda} \sumstack{Z\subset \Omega: \\ z\in Z}
    \frac{\norm{\Phi(Z)}}{\chi_{b,p}\pdiam{Z} \, \chi_{b,p}\pdist{z,\Omega}}
    \le
    \sup_{z \in \Lambda} \sumstack{Z\subset \Lambda: \\ z\in Z}
    \frac{\norm{\Phi(Z)}}{\chi_{b,p}\pdiam{Z} }
    =
    \norm{\Phi}_{b,p}
    \le
    C
    .
\end{equation*}
Next, for (ii) and strongly \(\Omega\)-localized \(\Phi\), we have that
\begin{equation*}
    \begin{split}
        \norm{\Phi}_{b/2,p; \Omega}
         & =
        \sup_{z \in \Lambda} \sumstack{Z\subset \Lambda: \\ Z \cap \Omega \neq \emptyset , \, z\in Z}
        \frac{\norm{\Phi(Z)}}{\chi_{b/2,p}\pdiam{Z} \, \chi_{b/2,p}\pdist{z,\Omega}}
        \\ & \le
        \sup_{z \in \Lambda} \sumstack{Z\subset \Lambda: \\ z\in Z}
        \frac{\norm{\Phi(Z)}}{\chi_{b,p}\pdiam{Z} }
        =
        \norm{\Phi}_{b,p}
        \le
        C
    \end{split}
\end{equation*}
since \(\dist{z,\Omega} \le \diam(Z)\) for \(z \in Z\) and \(Z \cap \Omega \neq \emptyset\), together with monotonicity of \(\chi_{b/2, p}\) and using \((\chi_{b/2, p})^2 = \chi_{b, p}\).
This concludes the proof of Lemma~\ref{lem:SLTlocal}.
\qed

\subsection{Assumption \nameref{ass:localGAPweak} and SLT operators: Proof of Lemma~\ref{lem:LDGweak}}
\label{app:proof-LDGweak}

We write \(A = \sum_{Z \subset \Lambda} \Phi_A(Z)\) (cf.~\eqref{eq:intSLT}) and estimate
\begin{equation}
    \label{eq:LDGweakmain}
    \begin{aligned}
        \Alignindent
        \abs*{\expectation[\big]{\commutator*{\calL_{H_0} \circ \calI_{H_0,g}\Ab{A} - A , B}}_{\rho_0}}
        \\&\le
        \sum_{Z \subset \Lambda} \abs[\Big]{
            \expectation[\Big]{
                \commutator[\big]{\calL_{H_0} \circ \calI_{H_0,g}\Ab{\Phi_A(Z)} - \Phi_A(Z), B}
            }_{\rho_0}
        }
        .
    \end{aligned}
\end{equation}

For the ‘small’ \(Z \subset \Lambda\) satisfying \(\diam(Z) \le \dist{Z, \Lambda \setminus \Lambdag}^\beta\), we bound
\begin{equation}
    \label{eq:smallZ}
    \begin{aligned}
        \Alignindent
        \abs[\Big]{
            \expectation[\Big]{
                \commutator[\big]{\calL_{H_0} \circ \calI_{H_0,g}\Ab{\Phi_A(Z)} - \Phi_A(Z), B}
            }_{\rho_0}
        }
        \\&\le
        C_\gap \, \norm{\Phi_A(Z)} \, \norm{B}
        \, \diam(Y)^\ell
        \, \chi_{b,p}\pdist{Z, \Lambda \setminus \Lambdag}
    \end{aligned}
\end{equation}
by means of Assumption \nameref{ass:localGAPweak}.
Additionally, we need the following alternative estimate on~\eqref{eq:smallZ} (recall~\eqref{eq:Jdef}):
\begin{align}
    \label{eq:smallZ2}
    \Alignindent
    \abs[\Big]{
        \expectation[\Big]{
            \commutator[\big]{\calL_{H_0} \circ \calI_{H_0,g}\Ab{\Phi_A(Z)} - \Phi_A(Z), B}
        }_{\rho_0}
    }
    \nonumber
    \\&\le
    \int_\R \D t \, w_g(t)
    \, \norm[\big]{\commutator[\big]{\E^{\I t H_0} \, \Phi_A(Z) \, \E^{-\I t H_0}, B}}
    \nonumber
    \\&\le
    C \, \norm{\Phi_A(Z)} \, \norm{B} \, \diam(Z)^d
    \, \paren*{
        \chi_{b/2,p}\pdist{Z,Y} \int_I \D t\, w_g(t)
        + \int_{\R\setminus I} \D t\, w_g(t)
    }
    \nonumber
    \\&\le
    C \, \norm{\Phi_A(Z)} \, \norm{B} \, \diam(Z)^d
    \, \chi_{b/2,p}\pdist{Z,Y}
\end{align}
where we denoted \(I := \Set[\big]{t \in \R \given v\abs{t} \le \dist{Z,Y}^p/3 }\).
Here, \(v\) is the Lieb-Robinson velocity from Lemma~\ref{lem:LRB},
which we employed in the second step with \(b \to 3b/4\).
In the final step, we used the stretched exponential decay of \(w_g\) (see~\eqref{eq:wbound} and Lemma~\ref{lem:weightfunctions}).
Note that~\eqref{eq:smallZ} and~\eqref{eq:smallZ2} track two different relevant distances, namely those of \(Z\) to \(\Lambda \setminus \Lambdag\) and \(Y\), respectively.

In fact, a weighted geometric mean of~\eqref{eq:smallZ} and~\eqref{eq:smallZ2}, that combines these two effects, can now be summed up as (neglecting the factor \(C\,\norm{B}\,\diam(Y)^{\ell}\), which will be put back in~\eqref{eq:LDGweakfinal})
\begin{align}
    \label{eq:smallZfinal}
    \Alignindent
    \sumstack[r]{Z \subset \Lambda:\\ \diam(Z) \leq \dist{Z, \Lambda \setminus \Lambdag}^\beta}
    \diam(Z)^{d}
    \, \chi_{b-\epsi,p}\pdist{Z, \Lambda \setminus \Lambdag}
    \, \chi_{\epsi/2,p}\pdist{Z,Y}
    \, \norm{\Phi_A(Z)}
    \nonumber
    \\&\le
    \sum_{z \in \Lambda}
    \sumstack{Z \subset \Lambda:\\ z \in Z}
    \norm{\Phi_A(Z)} \frac
        {\chi_{b-\epsi,p}\paren[\big]{\diam(Z) + \dist{z, \Omega} + \dist{Z, \Lambda \setminus \Lambdag}}}
        {\chi_{b,p}\pdiam{Z} \, \chi_{b,p}\pdist{z, \Omega}}
    \, \chi_{\epsi/2,p}\pdist{z,Y}
    \nonumber
    \\&\le
    \norm{\Phi_A}_{b,p; \Omega}
    \, \chi_{b-\epsi,p}\pdist{\Omega, \Lambda \setminus \Lambdag}
    \sum_{z \in \Lambda}
    \chi_{\epsi/2,p}\pdist{z,Y}
    \nonumber
    \\&\le
    C
    \, \diam(Y)^d
    \, \norm{\Phi_A}_{b,p; \Omega}
    \, \chi_{b-\epsi,p}\pdist{\Omega, \Lambda \setminus \Lambdag}
    .
\end{align}
For the first bound, we used logarithmic superadditivity of \(\chi_{b,p}\) together with elementary monotonicity properties from Lemma~\ref{item:lem-chifnct-logarithmically-superadditive} and estimated \(\diam(Z)^{d}\) by \(1/\chi_{\epsi,p}\pdiam{Z}\).
For the second bound, we used the definition of \(\norm{\Phi_A}_{b,p; \Omega}\) from~\eqref{eq:interaction-norm-localization} and the fact that, for \(z \in Z\), we have \(
    \diam(Z) + \dist{z, \Omega} + \dist{Z, \Lambda \setminus \Lambdag}
    \ge
    \dist{\Omega, \Lambda \setminus \Lambdag}
\).
In the final step, we employed summability of \(\chi_{\epsi/2,p}\pdist{z,Y}\).

Therefore, combining~\eqref{eq:LDGweakmain} with~\eqref{eq:smallZfinal}, the contribution of those \(Z \subset \Lambda\), for which \(\diam(Z) \le \dist{Z, \Lambda \setminus \Lambdag}^\beta\), to~\eqref{eq:LDGweakmain} is bounded by
\begin{equation}
    \label{eq:LDGweakfinal}
    C \, \diam(Y)^{d+ \ell} \, \norm{B} \, \norm{\Phi_A}_{b,p; \Omega} \, \chi_{b-\epsi,p}\pdist{\Omega, \Lambda \setminus \Lambdag}
    .
\end{equation}

For the ‘large’ \(Z \subset \Lambda\) that satisfy \(\diam(Z) > \dist{Z, \Lambda \setminus \Lambdag}^\beta\), we simply use the estimate from~\eqref{eq:smallZ2}, which we can sum up as
\begin{equation}
    \label{eq:largeZfinal}
    \begin{aligned}
        \Alignindent
        \sumstack[r]{Z \subset \Lambda:\\ \diam(Z) > \dist{Z, \Lambda \setminus \Lambdag}^\beta}
        \diam(Z)^d
        \, \chi_{b/2,p}\pdist{Z,Y}
        \, \norm{\Phi_A(Z)}
        \\&\le
        \sum_{z \in \Lambda} \sumstack{Z \subset \Lambda:\\ z \in Z}
        \, \norm{\Phi_A(Z)}
        \, \frac
            {\chi_{b/2,p\beta}\paren[\big]{\diam(Z) + \dist{z, \Omega} + \dist{Z, \Lambda \setminus \Lambdag}}}
            {\chi_{b,p}\pdiam{Z} \, \chi_{b,p}\pdist{z, \Omega}}
        \, \chi_{b/2,p}\pdist{z,Y}
        \\&\le
        C \, \diam(Y)^d \, \norm{\Phi_A}_{b,p; \Omega} \, \chi_{b/2,p\beta}\pdist{\Omega, \Lambda \setminus \Lambdag}
    \end{aligned}
\end{equation}
analogously to~\eqref{eq:smallZfinal}.
In the second step we used that \(\diam(Z) > \dist{Z, \Lambda \setminus \Lambdag}^\beta\) and elementary monotonicity properties of \(\chi_{b,p}\) in \(b,p\).

Therefore, by means of~\eqref{eq:largeZfinal}, also the large \(Z\)'s contribute only in a way that is controlled in terms of~\eqref{eq:LDGweakfinal} (but with worse constants \(b/2\) and \(p \min\{\beta, 1 \}\)).
This concludes the proof of Lemma~\ref{lem:LDGweak}.
\qed

\printbibliography[heading=bibintoc]

\end{document}